 \author{ Michael Benedikt}
  \affiliation{
\institution{University of Oxford}
\country{United Kingdom}
}
\author{C\'ecilia Pradic}
\affiliation{
\institution{University of Swansea}
\country{United Kingdom}
}
\author{Christoph Wernhard}
\affiliation{
\institution{University of Potsdam}
\country{Germany}
}
\title{Synthesizing nested relational queries from implicit specifications}
\newtheorem*{theorem*}{Theorem}
\newtheorem{theorem}{Theorem}
\newcommand{\nrcwget}{\nrc\kw{[}\nrcget\kw{]}}
\renewcommand{\phi}{\varphi}
\newcommand{\vflat}{V}
\newcommand{\basenest}{B}
\newcommand{\lossless}{\kw{lossless}}
\newcommand{\modelssem}{\models_{\kw{nested}}}
\newcommand{\equi}{\leftrightarrow}
\newcommand{\imp}{\rightarrow}
\newcommand{\inq}{\in}
\newcommand{\memmac}{\mathrel{\hat{\in}}}
\newcommand{\kw}[1]{{\mathsf{#1}}\xspace}
\newcommand{\leqnomode}{\tagsleft@true\let\veqno\@@leqno}
\newcommand{\reqnomode}{\tagsleft@false\let\veqno\@@eqno}
\definecolor{colch}{rgb}{0.1,0.5,0.2}
\newcommand{\toconsider}[1]{} %
\newcommand{\la}{\langle}
\newcommand{\<}{\la}
\newcommand{\ra}{\rangle}
\renewcommand{\>}{\ra}
\newcommand{\ipol}[2]{\; :\; \la #1, #2\ra}
\newcommand{\ipolnarrow}[2]{: \la #1, #2\ra}
\newcommand{\defname}[1]{\emph{#1}}
\newcommand{\name}[1]{\emph{#1}}
\newcommand{\valid}{\models}
\newcommand{\entails}{\models}
\newcommand{\sep}{;\;\;}
\newcommand{\D}{\mathcal{D}\xspace}
\newcommand{\vs}{\vv{v}}
\newcommand{\folfocused}{FO-focused\xspace}
\newcommand{\tabrulename}[1]{\raisebox{-2.0ex}[0pt][0pt]{\small #1}}
\newcommand{\tabrulecond}[1]{\raisebox{-2.0ex}[0pt][0pt]{$#1$}}
\newcommand{\rulename}[1]{\textsc{#1}}
\newcommand{\SL}{\Gamma_L}
\newcommand{\SR}{\Gamma_R}
\newcommand{\ADDVARSLAM}{\vv{l}}
\newcommand{\ADDVARSRHO}{\vv{r}}
\newcommand{\ADDPREDSLAM}{\pred(\lambda)}
\newcommand{\ADDPREDSRHO}{\pred(\rho)}
\newcommand{\pdepd}{\kw{PDEPD}}
\newcommand{\ADDLVARS}{\FV^{\mathsf{RHS}}}
\newcommand{\PREDD}{\pred^{\mathsf{RHS}}}
\newcommand{\exdef}{\exists^{\mathsf{RHS}}}
\newcommand{\substdef}[2]{[#2/#1]^{\mathsf{RHS}}}%
\newcommand{\pospolarity}{\kw{EL}}%
\newcommand{\negpolarity}{\kw{AL}}
\newcommand{\pospolaritysuper}{\kw{EL}}%
\newcommand{\negpolaritysuper}{\kw{AL}}
\newcommand{\subst}[2]{[#2/#1]}%
\newcommand{\prooftreesymbol}
{{\footnotesize
\setlength{\arraycolsep}{0pt}
\begin{array}{rcl}
.\;\;\;\;\;&.&\;\;\;\;.\\[-1.8ex]  
.\;\;\;\;&.&\;\;\;.\\[-1.8ex]
.\;\;\;&.&\;\;.\\[-1.8ex]
.\;\;&.&\;.\\[-1.8ex]
.\;&.&.\\[-1.8ex]
&.&
\end{array}}}
\newcolumntype{Z}[1]{>{\raggedleft\let\newline\\\arraybackslash\hspace{0pt}$}p{#1}<$}
\newcolumntype{X}[1]{>{\raggedright\let\newline\\\arraybackslash\hspace{0pt}$}p{#1}<$}                                      \newcolumntype{Y}[1]{>{\centering\let\newline\\\arraybackslash\hspace{0pt}$}p{#1}<$}          
\newenvironment{arrayprf}
               {\begin{array}{Z{1.8em}@{\hspace{1em}}l@{\hspace{0.5em}}l}}
               {\end{array}}
\newcounter{subprop}[theorem]
\newcommand{\seq}{\vdash}
\newcommand{\pred}{\mathit{PRED}}
\definecolor{darkred}{rgb}{0.7,0,0}
\definecolor{darkblue}{rgb}{0,0,0.7}
\definecolor{colch}{rgb}{0.1,0.5,0.2}
\newcommand\colL[1]{\textcolor{darkred}{#1}}
\newcommand\colR[1]{\textcolor{darkblue}{#1}}
\newcommand\IH{^{\mathsf{IH}}}
\newcommand{\proves}{\vdash}
\newcommand{\myparagraph}[1]{
  \noindent\textbf{#1.}}
\renewcommand{\myparagraph}[1]{\noindentparagraph{\textbf{\upshape{#1.}}}}
\newcommand{\myeat}[1]{}
\newcommand{\deltazero}{\Delta_0}
\newcommand{\incontext}{\in}
\newcommand\bnfeq{\mathrel{::=}}
\newcommand\bnfalt{\; | \;}
\newcommand\sett{{\sf Set}}
\newcommand{\nrc}{\kw{NRC}}
\newenvironment{myexmp}{\refstepcounter{myexmp}\par\medskip
\noindent\textbf{Example~\themyexmp.}}{\null\hfill$\triangleleft$\medskip}
\newcounter{myexmp}[section]
\renewcommand{\themyexmp}{\thesection.\arabic{myexmp}}
\newcommand{\unit}{\kw{Unit}}
\newcommand\lam\lambda
\newcommand\eqdef{\mathrel{:=}}
\newcommand{\nrcget}{\textsc{get}}
\newcommand\ur{\mathfrak{U}}
\newcommand\cA{\mathcal{A}}
\newcommand\cB{\mathcal{B}}
\newcommand\cC{\mathcal{C}}
\newcommand\cD{\mathcal{D}}
\newcommand\cE{\mathcal{E}}
\newcommand\cF{\mathcal{F}}
\newcommand\cG{\mathcal{G}}
\newcommand\cH{\mathcal{H}}
\newcommand\cI{\mathcal{I}}
\newcommand\cJ{\mathcal{J}}
\newcommand\cK{\mathcal{K}}
\newcommand\cL{\mathcal{L}}
\newcommand\cM{\mathcal{M}}
\newcommand\cN{\mathcal{N}}
\newcommand\cO{\mathcal{O}}
\newcommand\cP{\mathcal{P}}
\newcommand\cQ{\mathcal{Q}}
\newcommand\cR{\mathcal{R}}
\newcommand\cS{\mathcal{S}}
\newcommand\cT{\mathcal{T}}
\newcommand\cU{\mathcal{U}}
\newcommand\cV{\mathcal{V}}
\newcommand\cW{\mathcal{W}}
\newcommand\cX{\mathcal{X}}
\newcommand\cY{\mathcal{Y}}
\newcommand\cZ{\mathcal{Z}}
\newcommand\bA{\mathbb{A}}
\newcommand\bB{\mathbb{B}}
\newcommand\bC{\mathbb{C}}
\newcommand\bD{\mathbb{D}}
\newcommand\bE{\mathbb{E}}
\newcommand\bF{\mathbb{F}}
\newcommand\bG{\mathbb{G}}
\newcommand\bH{\mathbb{H}}
\newcommand\bI{\mathbb{I}}
\newcommand\bJ{\mathbb{J}}
\newcommand\bK{\mathbb{K}}
\newcommand\bL{\mathbb{L}}
\newcommand\bM{\mathbb{M}}
\newcommand\bN{\mathbb{N}}
\newcommand\bO{\mathbb{O}}
\newcommand\bP{\mathbb{P}}
\newcommand\bQ{\mathbb{Q}}
\newcommand\bR{\mathbb{R}}
\newcommand\bS{\mathbb{S}}
\newcommand\bT{\mathbb{T}}
\newcommand\bU{\mathbb{U}}
\newcommand\bV{\mathbb{V}}
\newcommand\bW{\mathbb{W}}
\newcommand\bX{\mathbb{X}}
\newcommand\bY{\mathbb{Y}}
\newcommand\bZ{\mathbb{Z}}
\newcommand{\ptime}{\kw{PTIME}}
\newcommand{\booltype}{\kw{Bool}}
\newcommand{\sig}{{\mathcal SIG}}
\newcommand{\ursort}{\ur}
\newcommand\FV{FV}
\newcommand\tuple[1]{\langle #1 \rangle}
\newcommand\myLeftLabel[1]{\LeftLabel{\scriptsize #1}}
\begin{document}
\begin{abstract}
Derived datasets can be defined implicitly or explicitly.
An \emph{implicit definition} (of dataset $O$ in terms of datasets $\vec I$)
 is a logical specification 
involving the source data $\vec I$ and the interface data $O$.  It is a valid
definition  of $O$ in terms of $\vec I$,
if  any two models of the specification agreeing on $\vec I$ agree on $O$. In contrast, an
\emph{explicit definition} is a query that produces $O$ from $\vec I$.
Variants of \emph{Beth's
theorem} \cite{beth}
state  that one can convert implicit definitions to explicit ones. Further, this
conversion can be done effectively given a proof witnessing implicit definability
in a suitable proof system.
We prove the analogous effective implicit-to-explicit result for nested relations:
implicit definitions, given in the natural logic for nested relations, can be
effectively converted to explicit definitions
 in the nested relational calculus ($\nrc$). As a  consequence,
we can effectively extract
rewritings of $\nrc$ queries in terms of $\nrc$ views, given a proof
witnessing that the query is determined by the views.
\end{abstract}

\maketitle

\section{Introduction}

One way of describing a virtual datasource is via \emph{implicit definition}:
a  specification $\Sigma$ -- e.g. in logic --  involving
 symbols for the  ``virtual'' object $O$ and the stored ``input'' data $\vec I$. The specification
may mention other data objects (e.g. auxiliary views). But to be an implicit definition,
any two models of $\Sigma$ that agree on $\vec I$ must agree on $O$.
In the case where $\Sigma$ is in first-order logic, 
this hypothesis can be expressed as a first-order entailment, using two copies of the vocabulary, primed
and unprimed, representing the two models:
\begin{equation}
  \tag{$\star$}
\Sigma \wedge \Sigma' \wedge \bigwedge_{I_i \in \vec I} \forall \vec x_i ~ [I_i(\vec x_i) \leftrightarrow I'_i(\vec x_i)] \models 
\forall \vec x ~ [O(\vec x) \leftrightarrow O'(\vec x)]
\end{equation}
Above  $\Sigma'$ is a copy
of $\Sigma$ with primed versions of each predicate.

A fundamental result in logic states that we can replace an implicit definition
with an \emph{explicit definition}: a first-order query $Q$ such that whenever
$\Sigma(\vec I, O, \ldots)$ holds, $O=Q(\vec I)$. The original result of this kind
is \emph{Beth's theorem} \cite{beth}, which deals with classical first-order logic.
Segoufin and Vianu's \cite{svconf} looks at the case where
$\Sigma$ is in \emph{active-domain first-order logic}, or equivalently a Boolean
relational algebra expression. Their conclusion is that one can produce an
explicit definition of $O$ over $\vec I$ in  relational
algebra. \cite{svconf} focused on the special case where $\Sigma(I_1 \ldots I_j, 
\vec B, O)$ 
specifies each $I_i$ as  a view defined
by an active-domain first-order formula $\phi_{V_i}$ over base
data $\vec B$, and also defines $O$ as an active-domain first-order query $\phi_Q$ over
$\vec B$. In this case, $\Sigma$ implicitly defining $O$ in terms of $\vec I$
is called ``determinacy of the query  by the views''.
Segoufin and Vianu's result implies that \emph{whenever  a relational algebra  query
$Q$ is determined by relational algebra views $\vec V$,
then  $Q$ is rewritable over the views by a relational algebra query}.

Prior Beth-style results like \cite{beth,svconf}  are effective. From a proof of the entailment $(\star)$
 in a suitable
proof system, one can extract an explicit definition effectively,
even in polynomial time. 
In early proofs of Beth's theorem, the proof systems were custom-designed 
for  the task of proving implicit definitions, and the bounds were not stated.
Later on 
standard proof systems such as tableaux
\cite{smullyan} or resolution \cite{huang} were employed, and the polynomial claim was explicit. 
It is important that in our definition of implicit definability, we require the
existence of a proof witness. By the completeness theorem for first-order logic,
requiring such a proof witness is equivalent to demanding that implicit definability
of $O$ over $\vec I$ holds for all instances, not just finite ones.

This paper deals with the situation for \emph{nested relations}, a data model heavily
explored in the database community. There is a natural analog of active domain first-order
logic, suitable for implicit specification.
These are the $\deltazero$ formulas,  logical expressions
where quantification is over elements within nested sets defined by terms.
The notion of a $\deltazero$ specification $\Sigma(\vec i, o, \ldots)$ implicitly defining nested relation
$o$ in terms of $\vec i$ is the obvious one: for any two nested relations
satisfying $\Sigma$, and  agreeing on $\vec i$, they must agree on $o$.
There is also a natural notion of proof witness for determinacy, using a proof system
for $\deltazero$ formulas. The analog of relational algebra for explicit definitions
is \emph{nested relational calculus} $\nrc$ \cite{limsoonthesis}, which is the standard
query language for nested relations.
Our main result is:

\smallskip

From a proof $p$ that $\Sigma$ implicitly defines $o$ 
in terms of $\vec i$, we can obtain, in $\ptime$,  an $\nrc$ expression $E$ that explicitly
defines $o$ from $\vec i$, relative to $\Sigma$.

\smallskip

A special case of this result concerns $\nrc$ views and queries.
Our result implies that \emph{if we have $\nrc$ views $\vec V$ that determine an $\nrc$ query
$Q$, then we can generate -- from a suitable proof -- an $\nrc$ rewriting of $Q$
in terms of $\vec V$}.

The fact that such an $\nrc$ rewriting \emph{exists} whenever there is a functional
relationship was
proven in \cite{benediktpradicpopl}. But the argument for existence given there is model-theoretic.
Indeed,  \cite{benediktpradicpopl} listed coming up with an algorithm for constructing
such a rewriting for a classically-complete proof system as a major open issue.

\begin{myexmp} \label{ex:views}
We consider the case where our  specification $\Sigma(Q,V,B)$ 
describes  a view  $V$, a query $Q$, as well as some constraints on
the base data $B$.
Our base data $B$ is
of type $\sett(\ur \times \sett(\ur))$, 
where $\ur$
refers to the basic set of elements, the ``Ur-elements''.
That is, $B$ is a set of pairs, where the first item is a data item
and the second is a set of data items.  
View  $V$ is of type $\sett(\ur \times \ur)$, a set of pairs, given
by the query that is the usual ``flattening'' of $B$: in $\nrc$ this
can be expressed as $\{ \< \pi_1(b) , c \> \mid c \in \pi_2(b) \mid b \in B\}$.
The view  definition can be converted to a  specification in our logic.

A query $Q$ might ask for a selection of the pairs in $B$, those whose first
component is contained in the second:
$\{ b \in B ~ | ~ \pi_1(b) \in \pi_2(b) \}$.
The definition of $Q$ can also be incorporated
into our specification.

View $V$ is not sufficient to answer $Q$ in general. This is
the case if 
we assume  as part of $\Sigma$ an integrity constraint stating that the 
first component of $B$ is a key.
We can prove that 
$\Sigma(Q,V,B)$ implicitly defines $Q$ in terms of $V$, and from this proof
our algorithm can produce an $\nrc$  rewriting of $Q$ in terms of $V$.
\end{myexmp}

\myparagraph{Organization}
We overview related work in Section \ref{sec:related} and
provide preliminaries in Section \ref{sec:prelims}. 
Section \ref{sec:statement} presents our main result. It is proven
in Section \ref{sec:proof:ain}, making use of infrastructure from Section~\ref{sec:tools}.
We close with discussion in Section \ref{sec:conc}.
Due to space constraints, many proofs are deferred to the appendix.

\section{Related work} \label{sec:related}
In addition to the theorems of  Beth and Segoufin-Vianu mentioned in the introduction,
there are numerous works on effective Beth-style results for other logics.
Some concern fragments of classical first-order logic, such as the guarded fragment
\cite{HMO,guardedinterpj};
others deal with non-classical logics such as description logics \cite{balderbethdl}.  The Segoufin-Vianu result is
closely related to variations of Beth's theorem and Craig interpolation  for relativized quantification, such as Otto's
interpolation theorem
\cite{otto}.
There are also effective interpolation and definability results for logics richer
than or incomparable to
first-order logic,
such as fragments of fixpoint logics \cite{interpolationmucalc,
lmcsusinterpolfixedpoint}. There are even Beth-style results for full infinitary logic
\cite{lopezescobar}, but there one can not hope for effectivity.
The connection between Beth-style results and view rewriting originates in \cite{svconf,NSV}.
The idea of using effective Beth results to generate view rewritings from proofs
appears in \cite{franconisafe}, and is explored in more detail first in
\cite{tomanweddell} and later in \cite{interpbook}.

Our main result relates to Beth theorems  ``up-to-isomorphism''.  Our implicit
definability hypothesis is that two models that satisfy a  specification and
agree on the inputs must agree on the output nested relations, where ``agree on the output''
means up to extensional equivalence
of sets, which is a special (definable) kind of isomorphism.
Beth-like theorems up to isomorphism originate in Gaifman's \cite{gaifman74} and are studied
extensively by Hodges and his collaborators (e.g. \cite{hodgesnormal,hodgesdugald,hodgesbook}).
  The focus is model-theoretic, with emphasis
on  connections with categoricity and classification in classical model theory.

Our effective Beth-like theorem for nested relations extends two results in
 \cite{benediktpradicpopl}. One is an ineffective result, which makes use
of an idea in \cite{gaifman74},  but without any effectivity.
Another is an effective result, but only for a 
very restricted notion of ``constructive proof'', which
is not complete for classical logic. \cite{benediktpradicpopl} lists coming up with
a general effective notion as the major unfinished business of the paper. Our resolution of this
problem requires a new proof system and much more complex tools than those employed in \cite{benediktpradicpopl}.

\section{Preliminaries} \label{sec:prelims}

\myparagraph{Nested relations}
We deal with schemas that describe objects of various
 \emph{types} given by the following grammar.
$$T, \; U \bnfeq \ur \bnfalt T \times U \bnfalt \unit \bnfalt \sett(T)$$
For simplicity throughout the remainder
we will assume only two basic types. There is the one-element type $\unit$, which will
be used to construct Booleans.
And there is  $\ur$, the ``scalars'' or \emph{Ur-elements} whose inhabitants are not specified further.
From the Ur-elements and a unit type we can build up the set of types via
product and the power set operation.  
We use standard conventions for abbreviating types, with the $n$-ary product abbreviating
an iteration of binary products.
A \emph{nested relational schema} consists of declarations of 
variable names associated to objects of given types. 

\begin{myexmp} \label{ex:aschema}
An example nested relational schema declares two objects
$R: \sett(\ur \times \ur)$ and $S: \sett( \ur \times \sett(\ur))$.
That is, $R$ is a set of pairs of Ur-elements: a standard ``flat'' binary relation.
$S$ is a collection of pairs whose first elements are Ur-elements
and whose second elements are sets of Ur-elements.
\end{myexmp}

The types have a natural interpretation.
The unit type has a unique member and the members of $\sett(T)$
are the sets of members of $T$.
An \emph{instance} of such a schema is defined in the obvious way.

For the schema in Example \ref{ex:aschema} above, assuming that $\ur = \mathbb{N}$, one possible instance has
$R = \{ \<4,  6\>, \<7,  3\>\}$ and 
$S = \{ \<4 , \{  6,  9 \} \> \}$.

\myparagraph{$\deltazero$ formulas}
We need a logic appropriate for talking about nested relations.
A natural and well-known subset of first-order logic formulas with
a set membership relation are the $\deltazero$ formulas.
They are built up from equality of Ur-elements via Boolean operators
as well as relativized existential and universal quantification. All terms involving
tupling and projections are allowed. %

Formally, we deal with multi-sorted first-order logic, with sorts corresponding
to each of our types. We use the following syntax for $\deltazero$ formulas and terms.
Terms are built from variables using tupling and projections.
All formulas and terms are assumed to be well-typed in the obvious way, with the
expected sort of $t$ and $u$ being $\ursort$ in expressions
$t =_\ursort u$ and $t \neq_\ursort u$, and
in $\exists t \in_T u ~ \phi$  the sort of $t$ is $T$ and the sort of $u$ is  $\sett(T)$.

$$
\begin{array}{lcl}
t, u &\bnfeq& x \bnfalt () \bnfalt \< t, u \>  \bnfalt \pi_1(t) \bnfalt \pi_2(t)
\\
\varphi, \psi &\bnfeq& t =_\ursort t' \bnfalt t \neq_\ursort t' \bnfalt \top \bnfalt \bot \bnfalt \varphi \vee \psi \bnfalt \phi \wedge \psi \bnfalt \\
& & \forall x \in_T t~ \varphi(x) \bnfalt \exists x \in_T t~ \varphi(x)
\end{array}
$$
Note that there is  no primitive negation, and no equalities for sorts other than $\ursort$.
Negation $\neg \phi$ will be  defined as a macro
by induction on $\phi$ by dualizing every connective.
Other connectives can be derived in the usual way on top of negation:
 $\phi \rightarrow \psi$ by $\neg \phi \vee \psi$. 

More crucial is the fact that a
$\deltazero$ formula does not allow membership atoms.
An \emph{extended $\deltazero$ formula} allows membership literals $x \in_T y$,
$x \notin_T y$
 at every type $T$. 

The notion of an extended $\deltazero$  formula $\phi$ \emph{entailing} another formula
$\psi$ is the standard one in first-order logic,
meaning that every model of $\phi$ is a model of $\psi$.
We emphasize here that by every model, we include models where membership is not
extensional.
An important point is that:
\emph{when $\phi$ and $\psi$ are $\deltazero$, rather than extended $\deltazero$, ``every model'' can
be replaced by ``every nested relation''}.  When we consider formulas that
are  $\deltazero$, we write  $\phi \modelssem \psi$ for entailment.
The point above is due to two facts. First,
we have neither $\in$ nor  equality at higher types as an atomic predicate.
This guarantees that any model 
can be modified, without changing the truth value of $\deltazero$ formulas,
into a model satisfying extensionality: if we have $x$ and $y$ with
$(\forall z \in_T x~~ z \in_T y) ~~\wedge~~ (\forall z \in_T y~~ z \in_T x)$
then  $x$ and $y$ must be the same.
Secondly, a well-typed extensional  model is isomorphic to a nested relation,
by the well-known
Mostowski collapse construction that iteratively identifies elements that
have the same members.
The lack of  primitive membership and equality relations in $\deltazero$ formulas
allows us to avoid having to consider extensionality axioms, which would require
special handling in our proof system.  

Equality, inclusion and membership predicates ``up to extensionality''
may be defined as macros by induction on the involved types, while staying within $\deltazero$
formulas.
$$
\begin{array}{rcl}
t \memmac_T u &\eqdef& \exists z' \in u\; t \equiv_T z' \\
t \subseteq_T u &\eqdef& \forall z \in t ~~ z \memmac_T u 
\\
t \equiv_{\sett(T)} u &\eqdef& {t \subseteq_T u} ~~\wedge ~~{u \subseteq_T t} \\
t \equiv_\unit u &\eqdef& \top
\qquad
\qquad
\qquad
t \equiv_\ur u ~~\eqdef~~ t =_\ur u
\\
t \equiv_{T_1 \times T_2} u &\eqdef& {\pi_1(t) \equiv_{T_1} \pi_1(u)}~~ \wedge~~ {\pi_2(t) \equiv_{T_2} \pi_2(u)
}\\
\end{array}
$$
We will use small letters for variables in $\deltazero$
formulas, except in examples when we sometimes use capitals
to  emphasize that an object is of set type.
We drop the type subscripts $T$ in bounded quantifiers, primitive memberships,
and macros $\equiv_T$ when clear.
Of course  membership-up-to-equivalence $\memmac$ and  membership  $\in$ agree
on extensional models. But $\in$ and $\memmac$ are not interchangeable on general models,
and hence are not interchangeable in $\deltazero$ formulas. For example:
\[
x \in y ,  x \in y' \models \exists z \in y ~ z \in y'
\]
But we do not have
\[
x \memmac y, x \memmac y' \models  \exists z \in y ~ z \in y'
\]
A set of primitive membership expressions $t \incontext u$  (i.e.
extended $\deltazero$ formulas)
will be called an $\in$\emph{-context}.

Let us now introduce notation for instantiating a block of bounded quantifiers at a time.
A \emph{variable membership atom} is a membership atom $x \in y$ where $x, y$ are variables
An  \emph{ordered variable $\in$-context} is a list of variable membership atoms.

Given a variable membership atom $x \in y$ and a $\deltazero$ formula
of the form $\phi_0 = \exists w \in y ~ \phi_1$, the \emph{specialization of $\phi_0$ using $x \in y$}
is simply $\phi_1[x/w]$. We generalize this to specializing $\phi_0$ using
an ordered variable $\in$-context $x_1 \in y_1 \ldots x_i \in y_i$ by induction on $i$:
 when $\phi_0= \exists x_1 \in y_1 ~ \phi_1$, 
 we first let $\phi_1$ be the specialization of $\phi_0$ using $x_1 \in y_1$ and then
let $\phi'_1$ be the specialization of $\phi_1$ using  $x_2 \in y_2 \ldots x_i \in y_i$, the latter given
inductively. If  $\phi_1$ or $\phi'_1$ is not of the required form, then the specialization
is not defined.  A specialization of $\phi$ with respect to a variable $\in$-context is
a specialization with respect to an ordering of some subset of the context.
A \emph{maximal specialization (max. spec.) of $\phi_0$} with respect to a
$\in$-context is a specialization which is not existential leading. That is, no other variable
membership can be applied to perform further specialization.

\myparagraph{Nested Relational Calculus}
We review the main language for declaratively transforming  nested relations, Nested
Relational Calculus ($\nrc$). Variables occurring in expressions
are  typed, and each expression is  associated with 
an \emph{output type}, both of these being in the type system described above.
We let $\booltype$ denote the type $\sett(\unit)$. Then $\booltype$ has exactly
two elements, and will be used to simulate Booleans.
The grammar for $\nrc$ expressions  is presented in
Figure \ref{fig:nrc_type}.
\begin{figure}[t]
  \[
    \begin{array}{rl@{}l@{~~}r}
      E, E' \bnfeq& x& \bnfalt () \bnfalt \tuple{E, E'} \bnfalt \pi_1(E) \bnfalt \pi_2(E) \bnfalt
      & {\footnotesize \text{(variable, (un)tupling)}} 
      \\
       && \{E\} \bnfalt \nrcget_T(E) \bnfalt \bigcup \{ E \mid x \in E' \} 
      & {\footnotesize \text{((un)nesting, binding union)}}
\\
       && \bnfalt \emptyset \bnfalt E \cup E' \bnfalt E \setminus E'
& {\footnotesize \text{(finite unions, difference)}}
  \end{array}\]
  \vspace{-1em}
  \caption{$\nrc$ syntax (typing rules omitted)}
\label{fig:nrc_type}
\vspace{-0.2cm}
\end{figure}

The definition
of the free and bound variables of an expression is standard,
the union operator $\bigcup \{ E \mid x \in R \}$ binding the variable~$x$.
The semantics of these expressions should be fairly evident, see
\cite{limsoonthesis}.
If $E$ has type $T$, and has input (i.e. free) variables $x_1 \ldots x_n$
of types $T_1 \ldots T_n$, respectively, then the semantics associates with $E$
a function that 
given a binding associating each free variable with a value of the appropriate type,
returns an object of type $T$.
For example, the expression $()$ always returns the empty tuple, while
$\emptyset_T$ returns the empty set of type $T$.

The language $\nrc$ as originally defined cannot express certain natural
transformations whose output type is $\ur$.
To get a canonical language for such transformations, above we included in our
$\nrc$ syntax a family of operations
$\nrcget_T : \sett(T) \to T$ that extracts the unique element from a singleton.
$\nrcget$ was considered in \cite{limsoonthesis}.
The semantics are: if $E$ returns a singleton set $\{x\}$, then $\nrcget_T(E)$
returns $x$; otherwise it returns some default object of the appropriate type.
In \cite{suciuthesis}, it is shown that $\nrcget$ at sort $\ursort$ is not expressible
using the other constructs in $\nrc$.
However, $\nrcget_T$ for general $T$ is definable from $\nrcget_\ursort$
and the other $\nrc$ constructs.

As explained in prior work (e.g. \cite{limsoonthesis}), on top of
the $\nrc$ syntax above we can support richer operations as  ``macros''.
For every type~$T$ there is an $\nrc$ expression
$=_T$ of type $\booltype$ representing equality of elements of type $T$.
In particular, there is an expression $=_\ur$ representing equality between
Ur-elements. 
 For every type $T$ there is an $\nrc$ expression $\in_T$ of type
$\booltype$ representing membership between an element of type $T$
in an element of type $\sett(T)$. We can define conditional expressions, 
joins, projections on $k$-tuples, and $k$-tuple formers.
$\nrc$ is efficiently \emph{closed under composition}: given $E(x, \ldots )$ and 
$F(\vec i)$ with output type matching the type of input
variable $x$, we can form an expression
$E(F)$ whose free variables are those of $E$ other than $x$,
unioned with those of $F$.

Finally, we note that $\nrc$ is closed under \emph{$\deltazero$ comprehension}:
if $E$ is in $\nrc$, $\phi$ is a $\deltazero$ formula, then
we can efficiently form an expression  
$\{z \in E \mid \phi \}$ which returns the subset of $E$ such
that $\phi$ holds.
We  make use of these macros freely in our examples of $\nrc$,
such as Example \ref{ex:views}.

\myparagraph{Connections between $\nrc$ queries using $\deltazero$ formulas}
Given an $\nrc$ expression $E$ with input relations $\vec i$,
we can create %
a $\deltazero$ formula $\Sigma_E(\vec i, o)$ that
is an \emph{input-output specification of $E$}:
a formula such that $\Sigma_E$ implies $o=E(\vec i)$ and whenever nested relations $0_0, \vec i_0$ satisfies
$o_0=E(\vec i_0)$,  there is  a model  with $\Sigma_E$ holding whose
restriction to $\vec i, o$ is $\vec i_0, o_0$.
For 
the ``composition-free'' fragment, in
which comprehensions $\bigcup$ can only be over input
variables, this conversion can be done in $\ptime$. But it cannot be done
efficiently for general $\nrc$, under  complexity-theoretic hypotheses
 \cite{koch}.

We  also write entailments that use $\nrc$ expressions, e.g.\\
$\phi(x, \vec c \ldots) \modelssem x \in E(\vec c)$ for $\phi$ $\deltazero$
and $E \in \nrc$. An entailment
with $\modelssem$  involving $\nrc$ expressions
 means that in every \emph{nested relation} satisfying $\phi$, $x$
is in the output of $E$ on $\vec c$. Note that the semantics
of $\nrc$ expressions is only defined on nested relations.

\section{Implicit vs Explicit and the statement of the main result} \label{sec:statement}

We now formalize our implicit-to-explicit result.
A $\deltazero$ formula $\varphi(\vv i, \vv a, o)$ \emph{implicitly defines
variable $o$ in terms of variables $\vv i$ up to extensionality} if we have
\begin{equation}
  \tag{$\star$}
\varphi(\vv i, \vv a, o) \; \wedge \; \varphi(\vv i, \vv a', o')
~\modelssem~ o \equiv_T o'
\end{equation}

Recall that $\equiv_T$ is equivalence-modulo-extensionality. It can be replaced
by equality if we we add  extensionality axioms on the left of the entailment symbol.
$\varphi$ will be called an  \emph{implicit definition up to extensionality} of $o$ in terms of $\vv i$.

An $\nrc$ expression $E$ using free variables in $\vv i$ \emph{explicitly defines  $o$ up to extensionality relative to $\deltazero$
formula $\varphi(\vv i , \vv a , o)$} if for every model of $\varphi$,
$E$ applied to $\vv i$ produces $o'$ with $o' \equiv_T o$.
Assuming extensionality, the conclusion is equivalent to $o'=o$.

In \cite{benediktpradicpopl}, it was shown that 
implicit $\deltazero$ definitions can be converted to $\nrc$ definitions:
\begin{theorem}[\cite{benediktpradicpopl}]  \label{thm:bethnrcmodeltheoretic}
$\deltazero$ formula $\varphi(\vv i, \vv a, o)$ implicitly defines
$o$ with respect to $\vv i$ up to extensionality if and only if there
is an $\nrc$ expression $E(\vv i)$ that explicitly defines $o$ up to extensionality relative to~$\varphi$.
\end{theorem}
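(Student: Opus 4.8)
The plan is to prove the two directions separately; the forward implication (explicit $\Rightarrow$ implicit) is routine, while the converse carries the genuine Beth-style content. Throughout I would work with nested relations, so that both $\modelssem$ and the $\nrc$ semantics are defined. For explicit $\Rightarrow$ implicit, suppose $E(\vv i)$ explicitly defines $o$ relative to $\varphi$. Since the $\nrc$ semantics is a total function on nested relations, any two nested relations satisfying $\varphi$ and sharing the same $\vv i$ assign $E(\vv i)$ the same value; as each forces $o \equiv_T E(\vv i)$ and $o' \equiv_T E(\vv i)$, transitivity of $\equiv_T$ yields $o \equiv_T o'$, which is exactly the entailment $(\star)$.

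For implicit $\Rightarrow$ explicit I would first establish a genericity/support lemma. A permutation $\pi$ of $\ur$ extends canonically to an automorphism of the nested-relational universe, and since $\deltazero$ formulas contain no higher-type membership or equality atoms --- only $=_\ur$, tupling and projection, and bounded quantifiers --- they are invariant under such automorphisms. Hence if $(\vv i,\vv a,o)\models\varphi$ and $\pi$ fixes pointwise every Ur-element occurring in $\vv i$ (so that $\pi\vv i = \vv i$), then $(\vv i,\pi\vv a,\pi o)\models\varphi$ as well, and applying $(\star)$ to these two models forces $\pi o \equiv_T o$. Taking $\pi$ to be a transposition moving a putative Ur-element of $o$ lying outside $\activedom(\vv i)$ to a genuinely fresh element (using that $\ur$ is infinite) contradicts this. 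Thus every Ur-element in the transitive closure $\mathrm{TC}(o)$ already lies in the finite, $\nrc$-constructible active domain $\activedom(\vv i)$.

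With support in hand I would build $E$ by recursion on the type $T$ of $o$, strengthening the induction hypothesis to permit a tuple $\vec p$ of additional Ur-parameters: for any $\deltazero$ formula implicitly defining an object of type $T$ in terms of $\vv i$ and $\vec p$, there is an $\nrc$ expression $E(\vv i,\vec p)$ defining it. The base cases are $\unit$ (the value is forced) and $\ur$ (the support lemma pins the value to a single element of $\activedom(\vv i)$, selected by a $\deltazero$ formula obtained from classical Beth \cite{beth}, equivalently Segoufin--Vianu \cite{svconf} over the relativized active-domain structure, and extracted by $\nrcget_\ur$). For $T=T_1\times T_2$ I recurse on each component and pair. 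The set case $T=\sett(T')$ is the heart: I would show each member $m\memmac o$ is itself implicitly defined over $\vv i$ together with a bounded tuple $\vec p$ of Ur-parameters from $\activedom(\vv i)$ that names it, apply the strengthened hypothesis at type $T'$ to get an $\nrc$ term $g(\vv i,\vec p)$ producing $m$, and assemble $o$ as $\{\, g(\vv i,\vec p) \mid \vec p\in\activedom(\vv i)^k,\ \theta(\vv i,\vec p)\,\}$, a binding union over the flat active domain guarded by a $\deltazero$ selection formula $\theta$.

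The main obstacle is precisely the set case, and specifically that $\nrc$ has no powerset operator: one cannot enumerate all candidate subsets of type $T'$ over the active domain and filter them. The argument must instead bound the support of each individual member and realize members as $\nrc$-images of bounded tuples of active-domain Ur-elements --- which is exactly why the induction hypothesis must be enriched with the parameters $\vec p$. Establishing the required $\deltazero$-definability of the value function $g$ and of the parameter set $\theta$ is the technical core; here I would relativize all quantification to the $\deltazero$-definable active domain and descend through the nesting one level at a time, invoking the classical definability result only at the flat bottom level, while handling the general-model-versus-nested-relation distinction and the gap between primitive $\in$ and extensional $\memmac$ via the extensionality collapse recorded in the preliminaries.
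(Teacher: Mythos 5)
Your forward direction is fine, and the overall shape of your converse (a support lemma via $\ur$-permutations, then definability of members with parameters, then collection by comprehension) is in the spirit of the model-theoretic argument of \cite{benediktpradicpopl} that this theorem is quoted from, rather than of this paper's own effective route (completeness of the $\deltazero$ calculus, $\deltazero$ interpolation, and the $\nrc$ Parameter Collection theorem). The support lemma itself is correct. The fatal problem is that you restrict parameters to tuples of \emph{Ur-elements}: both the claim that every member $m \memmac o$ is implicitly defined from $\vv i$ together with a bounded tuple $\vec p \in \activedom(\vv i)^k$, and the ensuing assembly $o = \{\, g(\vv i,\vec p) \mid \vec p \in \activedom(\vv i)^k,\ \theta(\vv i,\vec p)\,\}$, fail at nesting depth two. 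Concretely, let $i$ have type $\sett(\sett(\ur))$, let $\varphi(i,o) \eqdef o \equiv i$ (the identity specification, which implicitly defines $o$), and instantiate $i$ as the set of all infinite, co-infinite subsets of $\bbN$; infinite instances must be considered, since implicit definability here ranges over all nested relations. Then $\activedom(i)=\bbN$ is countable while $o=i$ has cardinality $2^{\aleph_0}$, so $o$ is not the image of $\activedom(i)^k$ under \emph{any} function, let alone an $\nrc$-definable one. The same instance refutes the member-by-member claim: for any finite $\vec p \subseteq \bbN$ and any $s \in o$ there is a permutation of $\ur$ fixing $\vec p$ pointwise and $i$ setwise that moves $s$, so by your own invariance argument $s$ is not implicitly defined from $(i,\vec p)$.

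The repair is to allow parameters that are arbitrary subobjects of the input, reachable at any type (this is exactly the paper's $\exists y \in_p r$ quantification): in the example, a member $s$ is trivially pinned by the set-valued parameter $s \in i$ itself. But once parameters are set-valued, the collection step can no longer be a binding union over $\activedom(\vv i)^k$; collecting such parameters uniformly --- with a single expression and selection formula working across all models, which additionally requires a compactness/disjunction argument you never supply --- is precisely the hard content that this paper isolates in Theorem~\ref{thm:mainflatcasebounded} and that \cite{benediktpradicpopl} handles with Gaifman-style machinery. A secondary soft spot: your bottom-level appeal to classical Beth or Segoufin--Vianu \cite{svconf} does not immediately apply, because the entailment lives in the $\deltazero$ world with nested auxiliary objects $\vv a$, and the interpolant you need must itself be $\deltazero$; classical first-order interpolation does not deliver boundedness, which is why the paper develops $\deltazero$ interpolation separately.
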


We explain how Theorem \ref{thm:bethnrcmodeltheoretic} implies
 Segoufin and Vianu's \cite{svconf} result for relational
algebra. 
Suppose $\Sigma(\vec I,  O, \ldots)$ is a single-sorted
first-order logic formula over predicates that include $\vec I \cup \{O\}$, using
only \emph{active domain quantification} -- quantification over the union of projections of 
predicates --
and suppose that  any two models of $\Sigma$ that agree on $\vec I$ agree on $O$.
Such a $\Sigma$ can be considered a special kind of $\deltazero$ formula, and the hypothesis
implies that $O$ is implicitly defined by $\vec I$ relative to $\Sigma$.
Our conclusion is that there is an $\nrc$ expression that produces $O$ from  $\vec I$.
We now use well-known results about the ``conservativity'' of $\nrc$ over relational algebra
for set-to-set  transformation \cite{conservativity,simulation,limsoonthesis}:
$\nrc$ expressions 
transforming relations to relations can be converted to  relational algebra expressions.

\myparagraph{Proof systems for $\deltazero$ formulas}
Our main result is an effective version of Theorem \ref{thm:bethnrcmodeltheoretic}.
For this we need to formalize our proof system for $\deltazero$ formulas, which
will allow us to talk about proof witnesses for implicit definability.

If we want to talk only about  \emph{effective generation} of $\nrc$ witnesses from
proofs, we can use a basic proof system for $\deltazero$ formulas, whose
inference rules
are shown in Figure
\ref{fig:dzcalc-2sided}.
\begin{figure} 
\begin{mathpar}

\inferrule*
[left={Ax}%
]
{ }{\Theta; \; \Gamma, \phi \seq \phi, \Delta}

\and

\inferrule*
  [left={$\bot$\textsc{-L}}]
  {  }{\Theta; \; \Gamma, \bot \seq \Delta}

\\

\inferrule*
  [left={$\neg$}\textsc{-L}]
{
\Theta; \; \Gamma \seq \neg \phi, \Delta
}{\Theta; \; \Gamma, \phi \seq \Delta}
\and
  \inferrule*
  [left={$\neg$}\textsc{-R}]
{
\Theta; \; \Gamma, \phi \seq \Delta
}{\Theta; \; \Gamma \seq \neg \phi, \Delta}
\\

\inferrule*
  [left={$\wedge$\textsc{-R}}]
{
\Theta; \; \Gamma \seq \phi_1, \Delta
\and
\Theta; \; \Gamma \seq \phi_2, \Delta
}{\Theta; \; \Gamma \seq \Delta, \phi_1 \wedge \phi_2}
\and
\inferrule*
  [left={$\vee$}\textsc{-R}]
{
\Theta; \; \Gamma \seq \phi_1, \phi_2,\Delta
}{\Theta; \; \Gamma \seq \phi_1 \vee \phi_2, \Delta}
  \\

\inferrule*
  [left={$\forall$\textsc{-R}}, right={$y$ fresh}]
{\Theta, y \in b; \; \Gamma \seq \phi[y/x], \Delta}{\Theta; \; \Gamma \seq \forall x \in b \;\phi, \Delta}
\and
\inferrule*
  [left={$\exists$\textsc{-R}}]
{\Theta, t \in b; \; \Gamma \seq \phi[t/x], \exists x \in b\;\phi,\Delta}{\Theta, t \in b; \; \Gamma \seq \exists x \in b\;\phi, \Delta}\\
\inferrule*
[left={Refl}]
{\Theta; \; \Gamma, t =_\ur t \seq \Delta}
{\Theta; \; \Gamma \seq \Delta}
\and
\inferrule*
[left={Repl}%
]
{\Theta; \; \Gamma, t=_\ur u, \phi[u/x], \phi[t/x] \seq \Delta}
{\Theta; \; \Gamma, t =_\ur u, \phi[t/x] \seq \Delta}

\\
\inferrule*
[left={$\times_\eta$}, right={$x_1,x_2$ fresh}]
{\Theta[\tuple{x_1,x_2}/x]; \; \Gamma[\tuple{x_1,x_2}/x] \seq \Delta[\tuple{x_1,x_2}/x]}
{\Theta; \; \Gamma \seq \Delta}
\and
\inferrule*
[left={$\times_\beta$}, right={$i \in \{1,2\}$}]
{\Theta[x_i/x]; \; \Gamma[x_i/x] \seq \Delta[x_i/x]}
{\Theta[\pi_i(\tuple{x_1,x_2})/x]; \; \Gamma[\pi(\tuple{x_1,x_2})/x] \seq \Delta[\pi_i(\tuple{x_1,x_2})/x]}

\end{mathpar}
\caption{Proof rules for a $\deltazero$ calculus, without restrictions for efficient generation of witnesses.}
\label{fig:dzcalc-2sided}
\end{figure}

The node labels are a variation of  the traditional rules for first-order logic, with a couple
of quirks related to  the specifics of $\deltazero$  formulas.
Each node label has shape
$\Theta; \Gamma \vdash \Delta$ where
\begin{compactitem}
\item $\Theta$ is an 
$\in$-context.
Recall that these are sets of membership atoms --- the only formulas in our proof
system that are extended $\deltazero$ but not $\deltazero$.
They will emerge during proofs involving $\deltazero$ formulas when
we start breaking down bounded-quantifier formulas.  
\item $\Gamma$ and $\Delta$ are finite sets of
$\deltazero$ formulas.
\end{compactitem}
For example, \textsc{Repl} in the figure is a ``congruence rule'', capturing
that terms that are equal are interchangeable. Informally, it says that to prove
conclusion $\Delta$ from a hypothesis that includes a formula $\phi$ including
variable $t$ and an equality $t =_{\ur} u$, it suffices to 
add to the hypotheses
a copy of $\phi$ with $u$ replacing some occurrences of $t$.

A proof tree whose root is labelled by $\Theta; \; \Gamma \vdash \Delta$ witnesses that,
for any given meaning for the free variables, if all the membership relations
in $\Theta$ and all formulas in $\Gamma$ are satisfied, then there is a formula in $\Delta$
which is true.
We say that we have a proof of a single formula $\phi$ when we have a proof of $\emptyset ; \emptyset \vdash \phi$.

The proof system is easily seen to be sound:
if $\Theta; \; \Gamma \vdash \Delta$, then $\Theta; \; \Gamma \models \Delta$, where we remind the reader that
$\models$ considers all models, not just extensional ones.
It can be shown to be complete by a standard technique (a ``Henkin construction'', see the appendix).

To generate $\nrc$ definitions \emph{efficiently} from proof witnesses will require a more
restrictive proof system, in which we enforce some ordering on how proof rules can be applied,
depending on the shape of the hypotheses.
We refer to proofs in this system as \emph{focused proofs},
 the terminology being inspired by the proof search literature \cite{mmpvfocusing}.
To this end, we categorize formulas as being either
\emph{existential-leading} ($\pospolarity$) or \emph{alternative-leading} ($\negpolarity$)
according to their top-level connective. 
Only atomic formulas are both $\pospolarity$ and $\negpolarity$,
and the only
other $\pospolarity$ formulas are existentials; all the others are $\negpolarity$.%
\footnote{In the literature on focusing, these are referred to as ``positive'' and ``negative'' formulas,
but we avoid this terminology due to clashes with other uses of those terms.}
\[
\begin{array}{lcl}
\varphi^\pospolaritysuper &\eqdef& t =_\ur u \bnfalt t \neq_\ur u \bnfalt \exists x \in t. \psi \\
\varphi^\negpolaritysuper &\eqdef& t =_\ur u \bnfalt t \neq_\ur u
\bnfalt \varphi \wedge \psi
\bnfalt \varphi \vee \psi
\bnfalt \top \bnfalt \bot
\bnfalt \forall t \in b.~ \psi
\end{array}\]

Our focused proof system is shown in Figure \ref{fig:dzcalc-1sided}.
A superficial difference from Figure \ref{fig:dzcalc-2sided} is that
the focused system is ``almost $1$-sided'': $\deltazero$ formulas only
occur on the right, with only $\in$-contexts on the left. In particular,
a top-level goal $\Theta;\Gamma \proves \Delta$ in the  higher-level system
would be expressed as $\Theta \proves \neg \Gamma, \Delta$ in this system.
We will often abuse notation by referring to focused proofs
of a $2$-sided sequent $\Theta; \Gamma \proves \Delta$, considering them as
``macros'' for the corresponding $1$-sided sequent.
For example, the hypothesis of 
 the $\neq$ rule  
could be written in  $2$-sided notation as
$\Theta,  t =_\ur u \seq \alpha[u/x], \alpha[t/x], \Delta^{\pospolaritysuper}$
while the conclusion could be written as
$\Theta , t =_\ur u \seq  \alpha[t/x], \Delta^{\pospolaritysuper}$.  As with
\textsc{Repl} in the prior system, this rule is about duplicating a hypothesis with
some occurrences of  $t$ replaced by $u$.

A major aspect of the restriction, related to the terminology \emph{focused}, is that the $\exists\textsc{-R}$ rule
enforces that blocks of existentials are instantiated all at once and that
all other formulas in the context are also $\pospolarity$.
\begin{figure} 
\begin{mathpar}

\inferrule*
[left={$=$}]
{  }
{\Theta \seq x =_\ur x,\Delta}
\and

\inferrule*
  [left={$\top$}]
  {  }{\Theta \seq \top, \Delta}

\\
  \inferrule*
[left={$\neq$}%
]
{\Theta \seq t \neq_\ur u, \alpha[u/x], \alpha[t/x], \Delta^{\pospolaritysuper} \and \text{$\alpha$ atomic}}
{\Theta \seq t \neq_\ur u, \alpha[t/x], \Delta^{\pospolaritysuper}}
\\

\inferrule*
  [left={$\wedge$}]
{
\Theta \seq \phi_1, \Delta
\and
\Theta \seq \phi_2, \Delta
}{\Theta \seq \phi_1 \wedge \phi_2, \Delta}
\and
\inferrule*
  [left={$\vee$}]
{
\Theta \seq \phi_1, \phi_2,\Delta
}{\Theta \seq \phi_1 \vee \phi_2, \Delta}
  \\

\inferrule*
  [left={$\forall$}]
{\Theta, y \in b \seq \phi[y/x], \Delta \and
\text{$y$ fresh}}
{\Theta  \seq \forall x \in b. \;\phi, \Delta}
\and
\inferrule*
  [left={$\exists$}]
  {\Theta  \seq
\phi', \phi, \Delta^{\pospolaritysuper} \qquad
\text{$\phi'$ a max. spec. of $\phi$ w.r.t.~$\Theta$}
}
  {\Theta
\seq \phi, \Delta^{\pospolaritysuper}}\\
\\
\inferrule*
[left={$\times_\eta$}]
{\Theta[\tuple{x_1,x_2}/x] \seq \Delta^{\pospolaritysuper}[\tuple{x_1,x_2}/x]
\and \text{$x_1,x_2$ fresh}}
{\Theta  \seq \Delta^{\pospolaritysuper}}
\and
\inferrule*
[left={$\times_\beta$}]
{\Theta[x_i/x] \seq \Delta^{\pospolaritysuper}[x_i/x] \and i \in \{1,2\}}
{\Theta[\pi_i(\tuple{x_1,x_2})/x] \seq \Delta^{\pospolaritysuper}[\pi_i(\tuple{x_1,x_2})/x]}

\end{mathpar}
\caption{Our focused calculus for efficient generation of witnesses.
We assume that formulas $\phi^\negpolaritysuper$ (respectively contexts $\Delta^\pospolaritysuper$)
are $\negpolarity$ (respectively contain only $\pospolarity$ formulas)
and that $x,y,z$ are variables.}
\label{fig:dzcalc-1sided}
\end{figure}

Soundness is evident, since it is a special case of the proof system
above. Completeness is not as obvious, since we are restricting the proof rules.
But we can translate proofs in the more general system of Figure \ref{fig:dzcalc-2sided} into
a focused proof, however with an exponential blow-up: see the appendix for details.

Furthermore, since for $\deltazero$ formulas
equivalence over all structures is the same as equivalence over nested relations, 
a $\deltazero$ formula $\phi$ is provable exactly when $\modelssem \phi$.

\begin{myexmp} \label{ex:simplenesting}
Let us look at  how to formalize a variation of 
Example~\ref{ex:views}.
  The specification $\Sigma(\basenest,\vflat)$ includes two conjuncts 
$C_1(\basenest,\vflat)$ and $C_2(\basenest, \vflat)$.
  $C_1(\basenest, \vflat)$ states that every pair $\tuple{k,e}$ of $\vflat$ corresponds
  to a $\tuple{k, S}$ in $\basenest$ with $e \in S$:
  \[
\forall v \in \vflat ~\exists b \in \basenest. ~
  \pi_1(v)=_\ur \pi_1(b) \wedge \pi_2(v) \memmac \pi_2(b)\]

$C_2(\basenest,\vflat)$ is:
\[
 \forall b \in \basenest ~ \forall e \in \pi_2(b)  ~
  \exists v \in \vflat. ~ \pi_1(v)=_\ur \pi_1(b) \wedge \pi_2(v) =_\ur e \\
\]

Let us assume a stronger constraint, $\Sigma_{\lossless}(\basenest)$,
saying that the first component is a key and second is non-empty:
  \[
    \begin{array}{l@{~}l} &
    \forall b \in \basenest  ~ \forall b' \in \basenest. ~ \pi_1(b) =_{\ur} \pi_1(b')
  \rightarrow b \equiv b'\\
    \wedge & \forall b \in \basenest ~ \exists e \in \pi_2(b). ~ \top
  \end{array}
  \]

With $\Sigma_{\lossless}$ we can show something stronger 
than in Example \ref{ex:views}:
$\Sigma \wedge \Sigma_{\lossless}$ implicitly defines 
$\basenest$ in terms of $\vflat$. That is, the view determines the identity query,
which is witnessed by a proof of
\[
  \Sigma(\basenest,\vflat) \wedge \Sigma_{\lossless}(\basenest) 
\wedge \Sigma(\basenest',\vflat)  \wedge \Sigma_{\lossless}(\basenest')
  \imp \basenest \equiv \basenest'\]
Let's prove this informally. Assuming the premise, it is sufficient
to prove $\basenest \subseteq \basenest'$ by symmetry. So fix $\tuple{k, S} \in \basenest$.
  By the second conjunct of $\Sigma_{\lossless}(\basenest)$, we know there is $e \in S$.
  Thus by $C_2(\basenest, \vflat)$,  $\vflat$ contains the pair $\tuple{k, e}$.
  Then, by $C_1(\basenest',\vflat)$, there is a $S'$ such that $\tuple{k, S'} \in \basenest'$.
To conclude it suffices to show that $S \equiv S'$. There are two similar directions, let us
detail the inclusion $S \subseteq S'$; so fix $s \in S$.
      By $C_2(\basenest,\vflat)$, we have $\tuple{k,s} \in \vflat$.
      By $C_1(\basenest',\vflat)$ 
      there exists $S''$ such that $\tuple{k, S''} \in \basenest'$
      with $s \memmac S''$. But since we also have 
$\tuple{k, S'} \in \basenest'$, the constraint
      $\Sigma_{\lossless}(\basenest)$ implies that $S' \equiv S''$, so $s \in S'$ as desired.
\end{myexmp}

\myparagraph{Main result}
A derivation of $(\star)$  in our proof system
will be referred to as a \emph{witness} to the implicit definability
of $o$ in terms of $\vv i$ up to extensionality.
With these definitions, we now state formally our main result,
the  effective version of Theorem \ref{thm:bethnrcmodeltheoretic}:

\begin{restatable}[Effective implicit to explicit for nested data]{theorem}{thmmainset}
\label{thm:mainset}
Given a 
witness for an implicit definition of $o$ in terms of
$\vv i$ up to extensionality relative to $\deltazero$
$\varphi(\vv i, \vv a, o)$,
one can compute  $\nrc$ expression $E$ such that for any $\vv i$, $\vv a$ and $o$,
if  $\varphi(\vec i, \vv a , o)$ then $E(\vv i)=o$.
Furthermore, if the witness is focused, this can be done in polynomial time.
\end{restatable}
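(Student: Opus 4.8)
The plan is to obtain $E$ by a Craig-style interpolation performed directly on the witnessing proof, but where the extracted interpolant is an $\nrc$ expression rather than a $\deltazero$ formula. Recall the classical Beth-to-interpolation reduction: from the implicit-definability entailment $\varphi(\vv i,\vv a,o)\wedge\varphi(\vv i,\vv a',o')\modelssem o\equiv_T o'$ one splits the antecedent into an unprimed part (mentioning $o,\vv a$) and a primed part (mentioning $o',\vv a'$), with $\vv i$ shared, and computes an interpolant in the shared vocabulary $\vv i$ that separates $o$ from $o'$. The twist here is that $o$ is a nested object, so a mere separating formula does not suffice: we must reconstruct the whole object, and therefore the ``interpolant'' must be a term of the target language $\nrc$ that, on input $\vv i$, produces $o$ itself.

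First I would set up the outer recursion on the type $T$ of $o$, reducing the construction of $E$ to defining membership one level at a time. For $T=\unit$ we output $()$; for $T=T_1\times T_2$ we recurse on the two components and tuple the results with $\tuple{\cdot,\cdot}$; the essential case is $T=\sett(T')$, where producing $o$ amounts to (i) generating the elements of $o$ and (ii) reconstructing each as an object of type $T'$, the latter handled by the recursion. To define membership I would apply interpolation to the specialized entailment $\varphi(\vv i,\vv a,o)\wedge\varphi(\vv i,\vv a',o')\wedge x\in o\modelssem x\memmac o'$, whose unprimed/primed split has shared vocabulary $\vv i,x$; the resulting interpolant characterizes ``$x\in o$'' purely in terms of $\vv i$ and, combined with the recursion on $T'$, yields the set.

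The core of the argument is the interpolation procedure itself, carried out by induction on the focused proof. To each sequent I would attach an $\nrc$ interpolant (a tuple of expressions) together with the invariant that the unprimed, left-hand data $\nrc$-produce the objects needed to satisfy the primed, right-hand side; the final interpolant at the root is then assembled into $E$. The rules translate as follows: the $\exists$ rule---which in the focused system instantiates a whole block of bounded existentials at once, drawing witnesses from the $\in$-context---is realized by a binding union $\bigcup\{\,\cdot\mid x\in\cdot\,\}$ that \emph{collects} exactly those witnesses; the $\forall$ rule introduces the fresh bound variable that this comprehension binds; $\wedge$ and $\vee$ combine or select among sub-expressions, the latter using the $\booltype$-valued conditionals definable in $\nrc$; \textsc{Refl}/\textsc{Repl} and the equality rules are absorbed into substitution into terms; and $\times_\eta,\times_\beta$ map to tupling and projection. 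Because the focusing discipline forces existential blocks to be discharged in one step and keeps the side contexts $\pospolarity$, each rule contributes only a bounded increment to the expression, so the $\nrc$ term produced from a focused proof has size polynomial in the proof---giving the claimed $\ptime$ bound. For a general (unfocused) witness one first converts to a focused proof (with the exponential blow-up already noted) and then runs the same extraction, recovering the effective but non-polynomial statement.

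The step I expect to be the main obstacle is exactly the set-type case, specifically ensuring that the elements of $o$ are drawn from an $\nrc$-constructible, polynomially-sized collection built from $\vv i$ rather than from an infeasible powerset of the active domain. A naive ``filter all candidate objects of type $T'$ by the membership interpolant'' is neither polynomial nor even expressible in $\nrc$ once $T'$ itself contains set types. The resolution must come from the proof structure: since the focused $\exists$ rule only ever instantiates with witnesses taken from the $\in$-context---i.e.\ with subobjects already exhibited inside $\vv i$---the interpolation should \emph{generate} the elements of $o$ by following these instantiations through the comprehension it builds, instead of testing membership against a precomputed universe. Making this precise---threading the witness terms supplied by the proof through the type recursion so that each level of nesting is produced by an honest $\nrc$ comprehension---is the technically delicate heart of the construction.
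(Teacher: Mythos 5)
Your skeleton matches the paper's: an outer recursion on the type of $o$, a Craig interpolant over $\vv i$ to characterize membership at the top level, and a proof-directed extraction of $\nrc$ comprehensions that follows the focused $\exists$-instantiations. You have also correctly located the hard point (the set-type case). But the proposal does not resolve that point, and the way you set up the recursion cannot work as stated. Your recursion says: generate the elements of $o$ and ``reconstruct each as an object of type $T'$, the latter handled by the recursion.'' The recursion you are invoking is the main theorem itself, i.e.\ implicit-to-explicit definability; but an individual element of $o$ (and, deeper in the recursion, an individual element $r$ of such an element) is \emph{not} implicitly defined over $\vv i$ --- there are in general many such elements, so no single $\nrc$ expression over $\vv i$ can denote one of them. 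The paper escapes this by strengthening the induction hypothesis to a \emph{collection} statement (its Theorem~\ref{thm:mainsetinter}): from a focused proof of $\Theta;\varphi,\psi \vdash \exists r' \in_p o'.\; r \equiv_T r'$ one computes an $\nrc$ expression $E(\vv i)$ with $r \in E(\vv i)$, i.e.\ one collects a definable set of \emph{candidates} guaranteed to contain $r$, rather than defining $r$ itself. Only at the very top, where $o$ genuinely is implicitly defined, is the candidate set cut down to $o$ by the interpolant filter $\{x \in E(\vv i) \mid \kappa(\vv i,x)\}$.

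Even granting the collection reformulation, the set-type step needs a second idea that your sketch does not supply. The recursion at type $T'$ only yields a \emph{superset} $E^{\mathrm{IH}}(\vv i) \supseteq r$ of the elements of $r$; what is needed is an expression having the \emph{set} $r$ as a member, i.e.\ one must identify which subsets of the superset are candidates for $r$. This is exactly the paper's $\nrc$ Parameter Collection Theorem (Theorem~\ref{thm:mainflatcasebounded}), proved by an induction over the focused proof whose invariant carries a \emph{pair}: an $\nrc$ expression $E$ \emph{and} a companion $\deltazero$ formula $\theta$ over the common variables, satisfying dual entailments ($\Theta_L \modelssem \Delta_L, \theta \vee \Lambda \in E$ and $\Theta_R \modelssem \Delta_R, \neg\theta$). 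Your invariant (``the left-hand data $\nrc$-produce the objects needed by the right-hand side'') omits this companion formula, yet it is indispensable: in the critical $\exists$-case the candidate that actually pins down $r$ is the singleton $\{\{z \in c \mid \theta_2\}\}$ built by comprehension over the companion formula, while the unions $\bigcup\{E_1 \cup E_2 \mid z \in c\}$ you describe only aggregate inductively obtained candidate families. Without the $\theta$-component there is no way to close that case, and hence no way to make ``threading the witness terms through the type recursion'' precise --- which is the step you yourself flag as the delicate heart, and which remains unproved in your proposal.
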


\myparagraph{Application to views and queries}
We have a consequence for rewriting queries over views.
Consider a query given by
 $\nrc$ expression $E_Q$ over inputs $\vec B$ and
$\nrc$ expressions $E_{V_1} \ldots E_{V_n}$ over $\vec B$.
$E_Q$ is \emph{determined} by 
$E_{V_1} \ldots E_{V_n}$, if
every two nested relations (finite or infinite) interpreting $\vec B$
that agree on the output of each $E_{V_i}$
agree on the output of $E_Q$. 
An $\nrc$ rewriting of $E_Q$ in terms of $E_{V_1} \ldots E_{V_n}$ is
an expression $R(V_1 \ldots V_n)$ such that for any nested relation
$\vec B$, if we evaluate each $E_{V_i}$ on $\vec B$ to obtain $V_i$ and
evaluate $R$ on the resulting $V_1 \ldots V_n$, we obtain $Q(\vec B)$.

Given $E_Q$ and $E_{V_1} \ldots E_{V_n}$, let
$\Sigma_{\vec V, Q}(\vec V, \vec B, Q, \ldots)$ conjoin the input-output specifications, as defined in Section \ref{sec:prelims},
for  $E_{V_1} \ldots E_{V_n}$ and $E_Q$. This formula has variables
$\vec B, V_1 \ldots V_n, Q$ along with auxiliary variables for subqueries.
 A proof witnessing determinacy of $E_Q$  by
$E_{V_1} \ldots E_{V_n}$, is a proof that $\Sigma_{V,Q}$ implicitly
defines $Q$ in terms of $\vec V$.

\begin{corollary} \label{cor:effdeterminacy} From a witness that a set of $\nrc$ views $\vec V$ determines
an $\nrc$ query $Q$, we can produce an $\nrc$  rewriting
of $Q$ in terms of $\vec V$. If the witness is focused, this can be done
in $\ptime$.
\end{corollary}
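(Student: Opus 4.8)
The plan is to derive Corollary~\ref{cor:effdeterminacy} as a direct instantiation of Theorem~\ref{thm:mainset}, with the only real work being the translation between the vocabulary of view rewriting and that of implicit definability. First I would fix the dictionary: take $\varphi := \Sigma_{\vec V, Q}$, let the input variables $\vv i$ of the theorem be the view variables $\vec V$, let the output $o$ be $Q$, and let $\vv a$ collect the remaining variables of $\Sigma_{\vec V, Q}$, namely the base data $\vec B$ together with the auxiliary variables for the subqueries of the $E_{V_i}$ and of $E_Q$. Under this dictionary the implicit-definability hypothesis $(\star)$ required by Theorem~\ref{thm:mainset},
\[
\Sigma_{\vec V, Q}(\vec V, \vv a, Q) \wedge \Sigma_{\vec V, Q}(\vec V, \vv a', Q') ~\modelssem~ Q \equiv Q',
\]
unfolds by definition into the statement that $\Sigma_{\vec V, Q}$ implicitly defines $Q$ in terms of $\vec V$ up to extensionality. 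Hence a determinacy witness, as defined immediately above the corollary, is verbatim a witness in the sense demanded by Theorem~\ref{thm:mainset}, and in particular a focused determinacy witness is a focused witness there.

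Feeding this witness into Theorem~\ref{thm:mainset} produces an $\nrc$ expression $E(\vec V)$ with the guarantee that $E(\vec V) = Q$ in every model of $\Sigma_{\vec V, Q}$; when the witness is focused, $E$ is produced in $\ptime$. I would then declare the rewriting to be $R := E$. The remaining obligation is to check that $R$ is a rewriting in the intended operational sense: for every nested relation $\vec B$, evaluating each view to $V_i := E_{V_i}(\vec B)$ and then running $R$ on $V_1 \ldots V_n$ returns $E_Q(\vec B)$.

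To bridge the model-theoretic equation supplied by Theorem~\ref{thm:mainset} with this operational statement, I would invoke the defining property of the input-output specifications from Section~\ref{sec:prelims}. Fixing a nested relation $\vec B$, each conjunct $\Sigma_{E_{V_i}}$ and $\Sigma_{E_Q}$ is satisfied by extending $\vec B$ with the intermediate and output values of the corresponding computation, which are uniquely determined since nested relations are extensional and $\nrc$ is functional. Because the auxiliary witnesses of distinct conjuncts are fresh and the conjuncts share only $\vec B$, these extensions combine into a single nested relation satisfying $\Sigma_{\vec V, Q}$ in which $V_i = E_{V_i}(\vec B)$ and $Q = E_Q(\vec B)$. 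Applying the guarantee of Theorem~\ref{thm:mainset} inside this model yields $R(V_1 \ldots V_n) = E(\vec V) = Q = E_Q(\vec B)$, which is exactly the required rewriting property.

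The step I expect to demand the most care is this last bridge: one must confirm that conjoining the per-expression input-output specifications genuinely specifies the \emph{simultaneous} evaluation over the shared base $\vec B$, i.e.\ that the fresh auxiliary witnesses for the different subqueries can always be chosen coherently within one nested relation. Everything else is bookkeeping, since determinacy is definitionally the implicit definability of $Q$ in terms of $\vec V$ (with extensional equivalence $\equiv$ coinciding with equality of nested-relation outputs), and the polynomial-time clause transfers unchanged because we pass the focused witness straight through to Theorem~\ref{thm:mainset}.
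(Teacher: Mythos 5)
Your proposal is correct and takes essentially the same route as the paper: Corollary~\ref{cor:effdeterminacy} is there a direct instantiation of Theorem~\ref{thm:mainset}, since a determinacy witness is \emph{by definition} a proof that $\Sigma_{\vec V, Q}$ implicitly defines $Q$ in terms of $\vec V$, and the focused/$\ptime$ clause passes straight through. The bridging step you flag---that conjoining the input-output specifications is satisfied by the nested relation built from an actual evaluation over $\vec B$ (with each conjunct's fresh auxiliary variables set to the corresponding intermediate values), so the theorem's model-level guarantee yields the operational rewriting property---is precisely the content the paper leaves implicit in its definitions of $\Sigma_{\vec V, Q}$ and of input-output specifications.
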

The notion of determinacy of a query over views
relative  to a $\deltazero$ theory (e.g. the key constraint in Example \ref{ex:views}) is a straightforward generalization of the definitions above, and
Corollary \ref{cor:effdeterminacy} extends to this setting.

In the case where we are dealing with flat relations, the effective version
is well-known: see Toman and Weddell's \cite{tomanweddell}, and the discussion
in \cite{franconisafe,interpbook}.

We emphasize that the result involves equivalence up to extensionality,
which underlines the distinction from the classical Beth theorem. If
we wrote  out implicit definability up to extensionality as an entailment involving
two copies of the signature, we would  run into problems in applying the
standard proof of Beth's theorem.

\section{Tools for the main theorem}
  \label{sec:tools}

\myparagraph{Interpolation}
The first tool for our main theorem will
be an interpolation result. Informally, such results
say that if we have an entailment involving
two formulas, a ``left'' formula $\phi_L$ and a ``right'' formula $\phi_R$, we can get an ``explanation''
for the entailment that factors through an expression only involving
non-logical symbols (in our case, variables) that
are common to $\phi_L$ and $\phi_R$.

\begin{theorem}
\label{thm:interpolationdeltafocus}

Let $\Theta$ be an $\in$-context and $\Gamma, \Delta$ finite sets of $\deltazero$ formulas.
Then from any proof of $\Theta; \; \Gamma \vdash \Delta$.
we can compute in linear time a $\deltazero$ formula $\theta$ with $\FV(\theta) \subseteq \FV(\Theta, \Gamma) \cap \FV(\Delta)$
such that $\Theta; \; \Gamma \vdash \theta$ and $\emptyset; \; \theta \vdash \Delta$
\end{theorem}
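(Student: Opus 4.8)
The plan is to prove Theorem~\ref{thm:interpolationdeltafocus} by Maehara's method: I strengthen the statement to one that tracks a partition of the whole sequent and then induct on the given proof. Concretely, I would prove a \emph{partitioned interpolation lemma}: suppose every membership atom in $\Theta$ and every formula in $\Gamma, \Delta$ is colored either $L$ or $R$, inducing splits $\Theta = \Theta_L, \Theta_R$, $\Gamma = \Gamma_L, \Gamma_R$, $\Delta = \Delta_L, \Delta_R$; then from a proof of $\Theta; \Gamma \vdash \Delta$ one computes a $\deltazero$ formula $\theta$ with $\FV(\theta) \subseteq \FV(\Theta_L, \Gamma_L, \Delta_L) \cap \FV(\Theta_R, \Gamma_R, \Delta_R)$ such that both $\Theta_L; \Gamma_L \vdash \theta, \Delta_L$ and $\Theta_R; \Gamma_R, \theta \vdash \Delta_R$ are provable. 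The theorem is the instance where all of $\Theta, \Gamma$ is colored $L$ and all of $\Delta$ is colored $R$: the two conclusions collapse to $\Theta; \Gamma \vdash \theta$ and $\theta \vdash \Delta$, and the variable condition is exactly the one required. I emphasize that the coloring is by \emph{origin}, not by side of the turnstile, since $\neg$-\textsc{L} and $\neg$-\textsc{R} move a formula across $\vdash$ while preserving its color.

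The base cases and the propositional part of the induction are routine. For \textsc{Ax} on a formula $\phi$, the interpolant is $\bot$, $\top$, $\phi$, or $\neg\phi$ (all $\deltazero$, since negation is a macro) according to the four colorings of the antecedent and succedent occurrences of $\phi$; for $\bot$-\textsc{L} it is $\bot$ or $\top$. The rules $\neg$-\textsc{L}, $\neg$-\textsc{R}, $\wedge$-\textsc{R} and $\vee$-\textsc{R} combine the premise interpolants with a Boolean connective determined by the color of the principal formula, exactly as classically, and the free-variable bound and $\deltazero$-ness are preserved. The term rules $\times_\eta$ and $\times_\beta$ act by a substitution applied uniformly to the whole sequent, so I apply the same substitution to the premise interpolant and check that it does not enlarge the shared variables.

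The crux, and the part genuinely specific to $\deltazero$, is the bounded-quantifier and equality rules, where two invariants must be kept simultaneously: the interpolant must stay \emph{inside $\deltazero$} — in particular it may never contain a primitive membership atom, even though such atoms populate $\Theta$ — and eigenvariables or instantiation terms must not leak into the shared variables. For $\forall$-\textsc{R}, the fresh eigenvariable $y$ together with $y \in b$ is assigned the color of the principal formula; since $y$ is fresh it cannot occur in the premise interpolant, which is then reused unchanged. The delicate rule is $\exists$-\textsc{R}, which instantiates $\exists x \in b\, \phi$ by a term $t$ with $t \in b$ drawn from $\Theta$. When $t \in b$ and the principal existential carry the same color, the premise interpolant works verbatim. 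When they carry opposite colors, $\FV(t)$ is shared in the premise but not in the conclusion, so the foreign variables of $t$ may appear in the premise interpolant $\theta'$ and must be eliminated; here I use that the bound $b$ is \emph{shared} (it occurs in $t \in b$ and in $\exists x \in b\, \phi$) and bind the offending variable by a bounded quantifier $\exists w \in b$ (dually $\forall w \in b$ when the principal is $L$-colored), which keeps the result $\deltazero$. The congruence rule \textsc{Repl} is handled by the standard equality-interpolation device, exploiting that equalities occur only at sort $\ursort$, so the linked terms are Ur-elements; \textsc{Refl} simply reuses the premise interpolant.

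The main obstacle I expect is exactly the opposite-color case of $\exists$-\textsc{R} when the instantiating term $t$ is not a variable: the rules $\times_\eta$ and $\times_\beta$ can turn a membership $y \in b$ into $\langle t_1, t_2 \rangle \in b$ or $\pi_i(\cdots) \in b$, so $t$ may be compound, and its foreign variables cannot then be captured by a single $\exists w \in b$. I would address this by tracking the substitutions introduced by the $\times$ rules and permuting them so that at each $\exists$-\textsc{R} the instantiating term is a variable (for the tuple case one can alternatively bind componentwise, replacing $\langle t_1,t_2\rangle$ by a fresh $w \in b$ of type $\sett(T_1 \times T_2)$ and substituting $\pi_1(w),\pi_2(w)$ for $t_1,t_2$ in $\theta'$), thereby reducing to the variable case while staying within $\deltazero$. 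Finally, the construction is a single bottom-up pass attaching to each node an interpolant built with $O(1)$ additional connectives or quantifiers from those of its premises, so $\theta$ has size linear in the proof and is computed in linear time, as claimed.
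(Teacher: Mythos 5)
Your proposal is correct and follows essentially the same route as the paper's own proof: a Maehara-style strengthening to a partitioned sequent with the interpolant's variables confined to the common part, induction on the proof with color-by-origin to absorb the $\neg$-\textsc{L}/$\neg$-\textsc{R} rules, and, in the crucial $\exists$-\textsc{R} case, replacement of instantiation terms containing non-common variables by bounded-quantified variables (existential or universal according to the coloring) so as to remain within $\deltazero$. In fact your discussion of the compound-term obstruction arising from $\times_\eta$/$\times_\beta$ is more explicit than the paper's brief remark that ``the term in $\theta$ is replaced by a quantified variable,'' so your write-up fills in a detail the paper leaves to the reader.
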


The $\theta$ produced by the theorem is a \emph{Craig interpolant}.
Craig's interpolation theorem \cite{craig57interp}
states that when $\Gamma \proves \Delta$ with $\Gamma, \Delta$  in first-order logic, such
 a $\theta$ exists in
first-order logic.
Our variant  states one can find $\theta$ in $\deltazero$ efficiently from
a proof of the entailment in either of our $\deltazero$ proof systems.
We have stated the result for the $2$-sided system. It holds also for the $1$-sided
focused system, where the partition of the formulas into left and right of the proof
symbol is arbitrary.
The argument
is induction on proof length, roughly following prior
interpolation algorithms \cite{smullyan}.

\myparagraph{Some admissible rules}
As we mentioned earlier, our focused proof system is extremely low-level, and
so it is convenient to have higher-level proof rules as macros. We formalize
this below.

\begin{definition}
  \label{def:polyadm}
A rule with premise $\Theta' \vdash \Delta'$ and conclusion $\Theta \vdash \Delta$
  \[ \dfrac{\Theta' \vdash \Delta'}{\Theta \vdash \Delta} \]
is \emph{(polytime) admissible} in a given calculus if a proof of
the conclusion $\Theta \vdash \Delta$ in that calculus can be computed from a
proof of the premise $\Theta' \vdash \Delta'$ (in polynomial time).
\end{definition}

Up to rewriting the sequent to be one-sided, all the rules in
Figure~\ref{fig:dzcalc-2sided} are polytime admissible in the focused calculus.
Our main theorem will rely on the polytime admissibility within
the focused calculus of
additional rules that involve chains of existential quantifiers.
To state them, we need
to introduce a generalization of bounded quantification:
``quantifying over subobjects of a variable''. 
For every type $T$, define a set of words over the three-letter 
alphabet $\{1,2,m\}$ of \emph{subtype occurrences of $T$} inductively as follows:
\begin{itemize}
\item The empty word $\varepsilon$ is a subtype occurence of any type
\item If $p$ is a subtype occurence of $T$,
$mp$ is a subtype occurence of $\sett(T)$.
\item If $i \in \{0,1\}$ and $p$ is a subtype occurence of $T_i$,
$ip$ is a subtype occurence of $T_1 \times T_2$.
\end{itemize}
Given subtype occurence $p$ and  quantifier symbol $\mathbf{Q} \in \{\forall, \exists\}$,
define the notation $\mathbf{Q} \; x \in_p t. \phi$ by induction on $p$:
\begin{itemize}
\item $\mathbf{Q} \; x \in_m t. \phi$ is $\mathbf{Q} \; x \in t$
\item $\mathbf{Q} \; x \in_{mp} t. \phi$ is $\mathbf{Q} \; y \in t.
\mathbf{Q} \; x \in_p y. \phi$ with $y$ a fresh variable
\item $\mathbf{Q} \; x \in_{ip} t. \phi$ is $\mathbf{Q} \; x \in_p \pi_i(t). \phi$
  when $i \in \{1,2\}$.
\end{itemize}

Now we are ready to state the results we need on admissibility in the body
of the paper, referring in each case to the focused calculus. Some
further routine rules are used in the appendices.
The first states that if we have proven that there exists a
 subobject of $o'$ equivalent to object $r$, then we can prove
that for each element $z$ of $r$ there is a corresponding equivalent
subobject $z'$ within $o'$.

\begin{restatable}{lemma}{effnestedproofdown}
  \label{lem:effnestedproof-down}
Assume $p$ is a subtype occurence for the type of the term~$o'$. The following is polytime admissible
\[\dfrac{\Theta \vdash \Delta,\; \exists {r' \inq_p o'}. \; r \equiv_{\sett({T'})} r'}{
\Theta, z \in r \vdash \Delta, \; \exists {z' \inq_{mp} o'}. \; z \equiv_{{T'}} z'
  }\]
Furthermore, the size of the output proof is at most the size of the input proof.
\end{restatable}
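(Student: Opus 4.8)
The plan is to read the statement semantically and then realize that reading as surgery on the given proof $P$ of the premise. Unfolding the macros, the premise asserts (as a disjunct of $\Delta$) the existence of a subobject $r' \inq_p o'$ with $r \subseteq_{T'} r' \wedge r' \subseteq_{T'} r$; and recalling that $r \subseteq_{T'} r'$ abbreviates $\forall w \in r.\ \exists w' \in r'.\ w \equiv_{T'} w'$, the conclusion's innermost matrix $\exists z' \in r'.\ z \equiv_{T'} z'$ is exactly the instance at $w := z$ of the body of this universal. Moreover, by definition of the occurrence notation, $\exists z' \inq_{mp} o'.\ z \equiv_{T'} z'$ is literally $\exists r' \inq_p o'.\ \exists z' \in r'.\ z \equiv_{T'} z'$, so the target shares the entire $\exists_p$-prefix with the premise and differs only by appending one existential layer to the innermost matrix. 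Thus the content of the lemma is: discard the conjunct $r' \subseteq_{T'} r$, and instantiate the surviving universal $\forall w \in r$ at the newly hypothesized $z \in r$.

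First I would carry this out as two local pieces of surgery on $P$, driven by a structural induction that tracks the distinguished succedent formula as $P$ decomposes its shared $\exists_p$-prefix. All rule applications in which the distinguished formula is a side formula are relabeled to the corresponding target formula, with the extra hypothesis $z \in r$ threaded into every $\in$-context, and all $\times_\eta$, $\times_\beta$ and $\exists$ steps acting on the prefix are mirrored one-for-one. The two essential steps occur where $P$ reaches the matrix $r \equiv_{\sett(T')} r'$: (i) at the focused $\wedge$-rule introducing $r \subseteq_{T'} r' \wedge r' \subseteq_{T'} r$, keep the left subderivation and delete the right one; and (ii) at the $\forall$-rule introducing $\forall w \in r$ — which adds a fresh eigenvariable $y \in r$ to $\Theta$ and proves $\exists w' \in r'.\ y \equiv_{T'} w'$ — delete that application and substitute $y := z$ throughout the subderivation above it, which is sound once $z \in r$ has been threaded below. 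Because the transformation only ever relabels, deletes, or reuses existing subderivations, the output proof $P^*$ satisfies $|P^*| \le |P|$, as required.

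The step I expect to be the main obstacle is reconciling the focusing discipline with the fact that the target's innermost matrix is existential-leading whereas the premise's is not. The focused $\exists$-rule instantiates a leading block of bounded quantifiers in a single shot and demands an explicit variable witness before the matrix is decomposed; for the target this means presenting, up front, a witness for the extra layer $\exists z' \in r'$. That witness is precisely the one $P$ already uses to discharge the body $\exists w' \in r'.\ w \equiv_{T'} w'$ of the $\subseteq_{T'}$-direction, and substituting $y := z$ in that subderivation reuses it verbatim. The delicate point is that this witness surfaces only deep inside $P$, after the $\wedge$-rule has split the two conjuncts into separate branches, so I must pull it up to the prefix-instantiation point and verify that it remains available in the relevant $\in$-context once the $r' \subseteq_{T'} r$ branch is deleted, that the maximal-specialization side condition of the $\exists$-rule and the freshness conditions of the $\forall$- and $\times_\eta$-rules survive the substitution and the renaming that keeps $z$, $r$ and $o'$ uncaptured, and that this regrafting respects the size bound. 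Once this witness routing is pinned down, soundness and the size estimate follow by a direct induction on the structure of $P$.
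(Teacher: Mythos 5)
Your overall frame coincides with the paper's: induct on the focused input proof, thread $z \in r$ through the contexts, and at a point where the distinguished formula is instantiated with a witness $w$, exploit focusing to see that the proof must continue by $\wedge$ splitting $r \subseteq_{T'} w$ from $w \subseteq_{T'} r$ and then $\forall$ on the left conjunct; keep that branch, discard the other, and identify the eigenvariable with $z$. That is exactly the paper's surgery, and it is what delivers the size bound. One caution: because the focused $\exists$ rule \emph{retains} its principal formula in the premise, this configuration can recur many times inside the proof with different witnesses $w_i$; the paper copes with the retained copy by applying the induction hypothesis to the subproof sitting above the $\forall$ step, and your ``tracking'' induction must do the same rather than treating the $\wedge$/$\forall$ decomposition as a one-off event.

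The genuine gap is precisely the step you flag as the main obstacle. After the surgery (and the IH, which turns the retained copy into the goal $G^* = \exists z' \inq_{mp} o'.\; z \equiv_{T'} z'$) you hold a proof of $\Theta, z \in r \vdash \Delta^{\pospolaritysuper}, z \memmac w, G^*$, and you propose to dispose of $z \memmac w$ by locating the witness that discharges it deep inside the subproof and hoisting it up to the prefix-instantiation point. This routing is both unnecessary and, in general, impossible: the witness for $\exists z' \in w.\; z \equiv_{T'} z'$ may be an eigenvariable introduced by a $\forall$ or $\times_\eta$ step strictly \emph{above} the point where you would need it, so it simply is not available in the relevant $\in$-context there; moreover $z \memmac w$ may be instantiated several times with different witnesses in different branches, or never instantiated at all (a branch can close by an axiom on other formulas). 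The paper's resolution needs no witness: $z \memmac w = \exists z' \in w.\; z \equiv_{T'} z'$ is \emph{syntactically a specialization} of $G^*$ (unfolded, $\exists r' \inq_p o'.\, \exists z' \in r'.\; z \equiv_{T'} z'$) with respect to $\Theta$, via the same membership atoms that made $w$ a legal witness in the original $\exists$ step. Hence a single application of the admissible non-maximal-specialization rule (Lemma~\ref{lem:existsblockadm}) replaces $z \memmac w$ by $G^*$, which merges with the copy already present since contexts are sets, while all the deep witnesses stay untouched inside the subproof. With this replacement instead of witness routing, your argument goes through and the size bound is immediate.
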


The second rule states that we can move between an equivalence of $r, r'$
and a universally-quantified  biconditional between memberships in $r$
and $r'$. Because we are dealing with $\deltazero$ formulas, the universal quantification
has to be bounded by some additional variable $a$.
\begin{restatable}{lemma}{equivsettoequivalence}

  \label{lem:equivsettoequivalence}
  The following is polytime admissible (where $p$ is a subtype occurence of the type of $o'$)
  \[ 
\dfrac{ \Theta \seq \Delta, \exists r' \inq_p o'. \; r \equiv_{\sett({T'})} r'}{ \Theta \seq \Delta, \exists r' \inq_p o'. \forall z \inq a. \; z \memmac
  r \leftrightarrow z \memmac r'}
  \]
\end{restatable}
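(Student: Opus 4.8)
The plan is to isolate the content of the rule in a single type-directed derivation and then compose it with the given proof. Write $C \eqdef \exists r' \inq_p o'.\ r \equiv_{\sett(T')} r'$ and $D \eqdef \exists r' \inq_p o'.\ \forall z \inq a.\ (z \memmac r \leftrightarrow z \memmac r')$ for the distinguished right-hand formulas of the premise and conclusion. Since $\Theta$ and the side formulas $\Delta$ are unchanged, it suffices to exhibit a focused derivation of the entailment $C \seq D$ and compose it with the supplied proof of $\Theta \seq \Delta, C$ by a cut on $C$. Throughout I treat sequents as two-sided macros and use the left- and right-introduction rules freely; these are admissible in the focused system, the left rules being obtained from the negation rules of Figure~\ref{fig:dzcalc-2sided}.

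First I would derive $C \seq D$. Unfolding the subobject quantifier $\exists r' \inq_p o'$ into its defining chain of bounded quantifiers and projections, repeated left-introductions peel this chain, recording the fresh path variables and the witness $r'$ as membership atoms in the $\in$-context and moving $r \equiv_{\sett(T')} r'$ into the hypotheses. As that membership chain now sits in $\Theta$, the same $r'$ is a legal specialization for the right-hand existential, so one $\exists$-\textsc{R} step instantiates $D$ with it; this is the usual ``copy the witness'' pattern turning a body entailment into an entailment of existentials. The goal is thereby reduced to $r \equiv_{\sett(T')} r' \seq \forall z \inq a.\ (z \memmac r \leftrightarrow z \memmac r')$.

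To finish this reduced goal I would introduce a fresh $y \inq a$ by $\forall$-\textsc{R} and split the biconditional into its two implications, treating them symmetrically. For the forward direction I assume $y \memmac r$, i.e.\ $\exists w \inq r.\ y \equiv_{T'} w$, extract such a $w$, invoke the inclusion $r \subseteq_{T'} r'$ contained in $r \equiv_{\sett(T')} r'$ to obtain $w \memmac r'$, i.e.\ some $w' \inq r'$ with $w \equiv_{T'} w'$, and close $y \memmac r'$ with witness $w'$ after chaining $y \equiv_{T'} w$ and $w \equiv_{T'} w'$ into $y \equiv_{T'} w'$; the converse uses $r' \subseteq_{T'} r$ identically. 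The one substantive ingredient is this transitivity step for $\equiv_{T'}$, which I would record once as a separate admissible rule, proved by induction on $T'$: at $\ursort$ it is an instance of \textsc{Repl}; at a product it is componentwise; and at a set type it reduces to transitivity of $\subseteq_{T'}$, which unfolds to transitivity of $\equiv$ at the element type, i.e.\ the induction hypothesis.

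The main obstacle is twofold. Mathematically it is the transitivity induction (together with the careful unfolding of the macros $\memmac$, $\subseteq_{T'}$, $\equiv_{\sett(T')}$ and the derived biconditional); once transitivity is available as a black box, the auxiliary derivation $\delta$ of $C \seq D$ has size polynomial in $T'$ and $p$ and independent of the input proof. The delicate point for the polynomial-time claim is the composition: I would combine $\delta$ with the given proof by a cut on $C$, and so the construction is polytime provided the cut can be eliminated cheaply, i.e.\ the focused calculus admits polynomial cut-elimination for these bounded formulas. Concretely, this amounts to splicing the fixed-size argument of $\delta$ at each right-introduction of the existential $C$ in the input proof and reusing the black-box transitivity rule at each such site; since there are at most linearly many sites and $\delta$ is of bounded size, the resulting focused proof of $\Theta \seq \Delta, D$ is of polynomial size and is produced in polynomial time, establishing polytime admissibility.
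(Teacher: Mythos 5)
Your proposal has a genuine gap at its load-bearing step: the composition ``by a cut on $C$''. The focused calculus of Figure~\ref{fig:dzcalc-1sided} has no cut rule, cut is not among the rules of Figure~\ref{fig:dzcalc-2sided} that the paper shows polytime admissible, and cut admissibility (let alone \emph{polytime} cut admissibility) is never established for this system --- nor can it simply be assumed, since cut-elimination in sequent calculi is in general exponential or worse. You acknowledge this by conditioning the polytime claim on ``polynomial cut-elimination for these bounded formulas'', but that conditional claim is precisely what would have to be proven, and your splicing sketch does not prove it. Concretely, at a right-introduction of $C$ with witness $w$, the premise is $\Theta' \seq \Delta', r \equiv_{\sett(T')} w, C$: to splice your auxiliary derivation $\delta$ there, you must transform the derivation of the \emph{compound side formula} $r \equiv_{\sett(T')} w$ into a derivation of $\forall z \inq a.\,(z \memmac r \equi z \memmac w)$. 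That is a cut on $r \equiv_{\sett(T')} w$, and eliminating it requires recursively rewriting that whole sub-derivation, which is an arbitrary focused proof interleaved with the rest of $\Delta$ and which may introduce $C$ again higher up. So the splice is not a constant-size local operation, and ``linearly many sites $\times$ bounded-size $\delta$'' does not give a well-formed proof, let alone a polynomial bound.

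For contrast, the paper proves the lemma by structural induction on the input derivation, with no cut at all. The only non-trivial case is an $\exists$ step whose main formula is $\exists r' \inq_p o'.\, r \equiv_{\sett(T')} r'$; there the focusing discipline is exploited: since $r \equiv_{\sett(T')} w$ is $\negpolarity$ and every other formula in the sequent at that point must be $\pospolarity$, the proof \emph{necessarily} continues with $\wedge$ and then two $\forall$ steps, exposing subproofs ending in $\Theta, y \in w \seq \Delta^{\pospolaritysuper}, y \memmac r, C$ and $\Theta, x \in r \seq \Delta^{\pospolaritysuper}, x \memmac w, C$. The induction hypothesis is applied to these, and the biconditional is reassembled using weakening and Corollary~\ref{cor:memctxadm} (which converts a genuine membership hypothesis $t \in u$ into $\neg(t \memmac u)$ on the right, via the generalized congruence lemma), followed by replaying $\wedge/\forall/\exists$. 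This both avoids cut and absorbs the transitivity-of-$\equiv_{T'}$ reasoning that your standalone derivation of $C \seq D$ needs as a separate type-induction; each site costs only a constant number of extra steps, which is where the polynomial bound actually comes from. Your overall instinct (handle each introduction of $C$, reuse a fixed biconditional argument) points in the right direction, but without the structural induction and the focusing argument the construction does not go through.
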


\myparagraph{The $\nrc$ Parameter Collection Theorem}
Our last tool is a kind of interpolation result
connecting $\deltazero$ formulas and $\nrc$:

\begin{restatable}[$\nrc$ Parameter Collection]{theorem}{mainflatcasebounded}
\label{thm:mainflatcasebounded}
Let $L$, $R$ be sets of variables with $C = L \cap R$ and
\begin{itemize}
\item $\phi_{L}$ and $\lambda(z)$ $\deltazero$ formulas over $L$
\item $\phi_{R}$ and $\rho(z,y)$ $\deltazero$ formulas over $R$
\item $r$ a variable of $R$ and $c$ a variable of $C$.
\end{itemize}

Suppose that we have a proof of
\begin{align*}
\phi_L \wedge \phi_R ~  \imp ~ \exists y \in_p r  ~ \forall z \in c  ~
(\lambda(z) \equi \rho(z, y))
\end{align*}

Then one may compute in polynomial time an $\nrc$ expression $E$ with
free variables in $C$
such that
\begin{align*}
\phi_L 
\wedge \phi_R \imp
  \{ z \in c \mid \lambda(z)\} \in E
\end{align*}
\end{restatable}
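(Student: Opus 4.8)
The plan is to reduce \Cref{thm:mainflatcasebounded} to ordinary Craig interpolation for $\deltazero$ (\Cref{thm:interpolationdeltafocus}) by first introducing a fresh variable that \emph{names} the candidate output set, thereby factoring the mixed $L/R$ conclusion into an $L$-part and an $R$-part, and then realizing the resulting $C$-interpolant as an $\nrc$ collection.

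First I would rename the target. Let $w$ be a fresh variable of the type of $c$, and observe that the biconditional in the conclusion splits cleanly once it is factored through $w$: the formula $\forall z \in c.\, (z \memmac w \equi \lambda(z))$ is over $L$, while $\exists y \in_p r\, \forall z \in c.\, (z \memmac w \equi \rho(z,y))$ is over $R$. From the given proof one obtains, by a routine polytime proof transformation that uses the new hypothesis to rewrite $\lambda$ into $z \memmac w$ inside the conclusion, a proof of
\[
\phi_L \wedge \big(\forall z \in c.\, z \memmac w \equi \lambda(z)\big) \wedge \phi_R \;\imp\; \exists y \in_p r\, \forall z \in c.\, (z \memmac w \equi \rho(z,y)).
\]
Now the left side is an $L$-formula, the right side an $R$-formula, and their common free variables lie in $C \cup \{w\}$. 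Applying \Cref{thm:interpolationdeltafocus} yields, in linear time, a $\deltazero$ interpolant $\theta(w)$ over $C \cup \{w\}$ with $\phi_L \wedge (\forall z\in c.\, z \memmac w \equi \lambda(z)) \imp \theta(w)$ and $\phi_R \wedge \theta(w) \imp \exists y \in_p r\, \forall z\in c.\,(z \memmac w \equi \rho(z,y))$.

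Two consequences drive the construction of $E$. Instantiating $w$ by the target set $\{z \in c \mid \lambda(z)\}$ makes the $L$-hypothesis hold trivially, so $\theta$ holds of the target: $\phi_L \imp \theta(\{z\in c\mid\lambda(z)\})$. Dually, under $\phi_R$ every $w$ satisfying $\theta$ is extensionally one of the candidate subsets $\{z \in c \mid \rho(z,y)\}$ for some subobject $y \in_p r$; in particular such a $w$ is, up to $\equiv$, a subset of $c$. Thus I want $E$ to be the $\nrc$ expression collecting exactly the $\theta$-sets contained in $c$. Since $\nrc$ is closed under $\deltazero$-comprehension, the only obstacle to producing $E$ is obtaining an $\nrc$-computable index set over which the candidate sets range: forming $\{w \mid w \subseteq c \wedge \theta(w)\}$ directly would require the powerset of $c$, which $\nrc$ does not express.

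This is where the nested admissible rules enter, and where the main difficulty lies. The $R$-side entailment certifies that the candidates are not arbitrary subsets of $c$ but are cut out by $\rho(\cdot,y)$ as $y$ ranges over subobjects of an $R$-object; the content of \emph{parameter collection} is to transport this indexing to $C$-accessible data. Concretely, I would push the interpolant through \Cref{lem:effnestedproof-down} and \Cref{lem:equivsettoequivalence} to convert the set-equivalence hidden in $\theta$ into a bounded membership biconditional over a fixed $C$-bound, so that each candidate is recovered as a comprehension $\{z \in c \mid \chi(z,u)\}$ with $\chi$ over $C$ and $u$ a subobject of a $C$-variable; iterating this comprehension over all such subobjects $u$ — an operation $\nrc$ performs without powerset — yields $E$, and the first consequence above guarantees $\{z\in c\mid\lambda(z)\} \in E$. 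The polynomial bound is then immediate, since the reformulation and the two admissibility steps are polytime and \Cref{thm:interpolationdeltafocus} runs in linear time. The hard part is precisely this last step: guaranteeing that the family of candidate sets, which a priori depends on the $R$-parameter $y$, can be re-expressed as an $\nrc$-iteration over subobjects of the common variables $C$ while keeping the target among the collected values.
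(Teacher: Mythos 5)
There is a genuine gap, and you have put your finger on it yourself: the last step of your argument is not a step, it is the theorem. After you obtain the interpolant $\theta(w)$ over $C \cup \{w\}$, the set of $w$'s satisfying $\theta(w) \wedge w \macrosubseteq c$ is in general not $\nrc$-collectable (take $\theta = \top$: you would need the powerset of $c$), and the $R$-side consequence of interpolation only constrains these $w$'s \emph{in models of $\phi_R$, via the parameter $y$ ranging over subobjects of the $R$-variable $r$}. So you are back to the original problem: transporting an $R$-indexed family of candidate definitions to a $C$-indexed, $\nrc$-enumerable one. Your proposed fix --- pushing $\theta$ through \Cref{lem:effnestedproof-down} and \Cref{lem:equivsettoequivalence} --- cannot do this: those lemmas are narrow proof-transformation rules for sequents whose goal has the specific shape $\exists r' \in_p o'.\; r \equiv r'$; they do not apply to an arbitrary $\deltazero$ interpolant $\theta(w)$, and they produce modified proofs, not $\nrc$ index sets. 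A single Craig interpolant also carries too little information in principle: as the first-order variant in the appendix makes explicit, what one can extract is definability up to parameters \emph{and disjunction}, i.e.\ a whole family of candidate formulas, which is not recoverable from one interpolant obtained as a black box. (A secondary worry: your initial ``routine polytime'' rewriting of $\lambda(z)$ into $z \memmac w$ inside the given proof is a replacement of equivalent \emph{subformulas}, which in a cut-free focused calculus is not a routine polytime admissibility --- the congruence rules in the paper only handle term-level replacement under $\equiv$.)

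The paper's proof takes a different route precisely to avoid this dead end: it proves the stronger \Cref{lem:nrcparamcoll} by a Maehara-style induction over the focused proof tree, constructing the interpolant $\theta$ and the collection expression $E$ \emph{simultaneously}, case by case on the last rule. The $C$-indexing of candidate sets is not extracted from a finished interpolant; it is assembled from the proof structure itself. In the crucial case, where $\exists$ fires on the goal $\exists y \in_p r\,\forall z \in c\,(\lambda \equi \rho)$, focusing forces the subproof to decompose through $\forall$, $\wedge$, $\vee$, yielding two subproofs, and the construction glues the inductive data as $\theta \eqdef \exists z \in c.\; \theta_1\IH \wedge \theta_2\IH$ and $E \eqdef \{\{z \in c \mid \theta_2\IH\}\} \cup \bigcup\{E_1\IH \cup E_2\IH \mid z \in c\}$ --- note how the binder $z \in c$ is exactly what turns data with one extra free variable into a $C$-only family, with no powerset needed. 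That mechanism, iterated along the proof, is what solves the problem you deferred.
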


If $\lambda$ was a ``common formula'' ---
one using only variables in $C$
---  then the nested
relation
$ \{ z \in c \mid \lambda(z)\}$ would be definable over $C$ in $\nrc$ via $\deltazero$-comprehension.
Unfortunately $\lambda$ is a ``left formula'', possibly with variables outside of $C$.
Our hypothesis is that it is equivalent to a 
``parameterized right formula'': a formula with variables in $R$ and parameters that lie below them.
Intuitively,
this can happen only if $\lambda$ can be rewritten to a formula $\rho'(z, x)$
with variables of $C$ and a distinguished $c_0 \in C$ such that
\begin{align*}
  \phi_L \wedge \phi_R ~  \imp ~ \exists x \in_p  c_0 ~ \forall z \in c  ~
(\lambda(z) \equi \rho'(z, x))
\end{align*}

And if this is true, we can use an $\nrc$ expression
over $C$ to define a set that will contain
the correct ``parameter'' value $x$ defining $\lambda$.
From this we can define a set containing the nested
relation $ \{ z \in c \mid \lambda(z)\}$.
A formalization of this rough intuition -- ``when left formulas are equivalent
to parameterized right formulas, they are equivalent to parameterized
common formulas'' -- can be found in the appendix.

\myparagraph{Sketch of the proof of Theorem \ref{thm:mainflatcasebounded}}
To get the desired conclusion, we need to prove a more general statement by induction
over proof trees. Besides making the obvious generalization to handle 
two sets of
formulas instead of the particular formulas  $\phi_L$ and $\phi_R$, as well as some corresponding left and right $\in$-contexts,
that may appear during the proof, we need to additionally generate a new formula~$\theta$ that only uses common variables, which can replace $\phi_R$ in the conclusion.
This is captured in the following lemma:

\begin{restatable}{lemma}{lemnrcparamcoll}
  \label{lem:nrcparamcoll}
Let $L$, $R$ be sets of variables with $C = L \cap R$ and
\begin{itemize}
\item $\Delta_{L}, \lambda(z)$ a set of $\deltazero$ formulas over $L$
\item $\Delta_{R}, \rho(z,y)$ a set of $\deltazero$ formulas over $R$
\item $\Theta_L$ (respectively $\Theta_R$) a $\in$-context over $L$
(respectively over $R$)
\item $r$ a variable of $R$ and $c$ a variable of $C$.
\end{itemize}

Suppose that we have a proof tree with conclusion
\begin{align*}
\Theta_L, \Theta_R \vdash \Delta_L, \Delta_R, \exists y \in_p r  ~ \forall z \in c  ~
(\lambda(z) \equi \rho(z, y))
\end{align*}

Then one may compute in polynomial time an $\nrc$ expression $E$
and a $\deltazero$ formula $\theta$ using only variables from $C$
such that
\begin{align*}
\Theta_L ~~\modelssem~~ \Delta_L, \theta \vee
  \{ z \in c \mid \lambda(z) \} \in E
\quad \text{and} \quad
\Theta_R \modelssem \Delta_R, \neg \theta
\end{align*}
\end{restatable}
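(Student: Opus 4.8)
The plan is to prove the lemma by induction on the focused proof tree for $\Theta_L, \Theta_R \vdash \Delta_L, \Delta_R, G$, where I abbreviate the mixed goal as $G = \exists y \in_p r ~ \forall z \in c ~ (\lambda(z) \equi \rho(z,y))$. The output pair $(E, \theta)$ generalizes a Maehara-style split interpolant of the kind produced by Theorem~\ref{thm:interpolationdeltafocus}: the common $\deltazero$ formula $\theta$ mediates between the left data $(\Theta_L, \Delta_L, \lambda)$ and the right data $(\Theta_R, \Delta_R, \rho, r)$ exactly as an ordinary interpolant, while the extra $\nrc$ expression $E$ carries the constructive content that the mixed goal contributes. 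Throughout I keep the two displayed entailments of the conclusion as the induction invariant, viewing every sequent in the tree as partitioned into an $L$-part and an $R$-part with $c$ common. The asymmetry --- the containment $\{z \in c \mid \lambda(z)\} \in E$ occurs only on the left and $\neg\theta$ only on the right --- reflects that $E$ discharges the $\lambda$-half of $G$ while $\theta$ discharges its $\rho$-half.

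First I would dispatch the base cases $\top$ and the reflexivity axiom $x =_\ur x$: depending on whether the principal atom is charged to the $L$- or $R$-part I set $\theta$ to $\bot$ or $\top$ and take $E = \emptyset$, which trivially satisfies the invariant and the requirement that $\theta$ and $E$ mention only variables of $C$. Next I treat the rules whose principal formula lies entirely in one part, following the interpolation algorithm. A non-branching left rule ($\vee$, $\forall$, or the replacement rule $\neq$, the last with the appropriate substitution bookkeeping) passes $\theta$ and $E$ through unchanged; the only branching rule, $\wedge$, combines its two premises by taking $E = E_1 \cup E_2$ and setting $\theta = \theta_1 \vee \theta_2$ when the principal conjunction is in the $L$-part and $\theta = \theta_1 \wedge \theta_2$ when it is in the $R$-part, the $R$-part cases being otherwise dual. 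The freshness discipline of the $\forall$ rule and of $\times_\eta, \times_\beta$ ensures that no newly introduced variable escapes into $\theta$ or $E$, and each step enlarges the outputs by only a bounded amount, which will yield the polynomial bound. When $G$ exposes a set-equivalence $\lambda(z) \equi \rho(z,y)$ that must be broken down, I invoke the admissible rules of Lemmas~\ref{lem:effnestedproof-down} and~\ref{lem:equivsettoequivalence} to rewrite it into the pointwise membership form on which the $\forall$ and $\vee$ rules can act.

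The crux is the case where the focused $\exists$ rule fires on $G$, instantiating the subobject block $\exists y \in_p r$ by a maximal specialization $t$ and leaving the obligation $\forall z \in c ~ (\lambda(z) \equi \rho(z,t))$ in the premise. Because $t$ is a subobject of the right variable $r$, the naive witness set $\{z \in c \mid \rho(z,t)\}$ is an $R$-object and cannot appear in an expression over $C$, so the essential step is to \emph{commonize} it: decomposing the universal and the biconditional and applying the inductive hypothesis to the resulting branches produces, through the interpolant bookkeeping, a $\deltazero$ formula $\mu(z)$ over $C$ with $\lambda(z) \equi \mu(z)$ on the left, and I then grow $E$ by adjoining the candidate $\{z \in c \mid \mu(z)\}$ via a binding union, so that $\{z \in c \mid \lambda(z)\}$ becomes one of the finitely many enumerated members of $E$. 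I expect this to be the main obstacle: one must organize the disjunction over all existential witnesses occurring along the proof so that some enumerated candidate is guaranteed to be the correct set while $\theta$ continues to separate the two sides, respect that $\in$ and $\memmac$ (equivalently $\equiv$) remain non-interchangeable in the underlying $\deltazero$ calculus so that the set-equivalence subformulas are not silently collapsed to genuine membership during commonization, thread the chain structure of $\in_p$ correctly through the product rules $\times_\eta, \times_\beta$, and confirm that each rule contributes only a constant-size increment to $E$ and $\theta$ so the whole construction stays polynomial.
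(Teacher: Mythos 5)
Your proposal is correct and takes essentially the same route as the paper's own proof: the same induction over the focused proof tree with the same two-entailment invariant, Maehara-style treatment of the base, propositional, and one-sided quantifier cases, and in the crux case (the focused $\exists$ firing on the mixed goal) the same mechanism — the focusing restriction forces the $\forall/\wedge/\vee$ decomposition into two branches, the inductive interpolant (your $\mu$, the paper's $\theta_2\IH$) supplies the comprehension candidate, and the paper then takes $E \eqdef \left\{\left\{ z \in c \mid \theta_2\IH\right\}\right\} \cup \bigcup\left\{ E_1\IH \cup E_2\IH \mid z \in c \right\}$ and $\theta \eqdef \exists z \in c.\, (\theta_1\IH \wedge \theta_2\IH)$, which is exactly your ``adjoin the candidate plus a binding union'' step made explicit, and which resolves the obstacle you flag. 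The one (harmless) slip is invoking Lemmas~\ref{lem:effnestedproof-down} and~\ref{lem:equivsettoequivalence} inside this proof: in this lemma $\lambda(z) \equi \rho(z,y)$ is a propositional biconditional (a macro over $\wedge/\vee$), so no set-equivalence rewriting is needed here — those admissibility lemmas are used only later, when the main theorem massages its goal into the form to which Parameter Collection applies.
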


In this induction over the size of the proof of 
\[
\Theta_L, \Theta_R \vdash \Delta_L, \Delta_R, \exists y \in_p r  ~ \forall z \in c  ~
(\lambda(z) \equi \rho(z, y))\]
we make a case distinction according to which
rule is applied last. Many of the cases use standard techniques, we focus
in the body of the paper on the most novel case.

Let us write $\cG$ for the formula
$\exists y \in_p r  ~ \forall z \in c.  ~ (\lambda(z) \equi \rho(z, y))$
and $\Lambda$ for the set $\{ z \in c \mid \lambda(z)\}$.
The most difficult inductive case
is where the last rule applied is $\exists$, and where
$\cG$ is the main formula, i.e., when the last step,
where we pick some witness $w$ for $y$ using the $\exists$ rule, has shape
\[
\small  \dfrac{
\Theta_L, \Theta_R \vdash \Delta_L, \Delta_R,
\forall z \in c.  ~ (\lambda(z) \equi \rho(z, w)),
\cG
  }{
 \Theta_L, \Theta_R \vdash \Delta_L, \Delta_R, \cG
}
\]
Now notice that, due to our restriction on the $\exists$ rule, 
all formulas in $\Delta_L$ and $\Delta_R$ are $\pospolarity$. Therefore, the only possible shape of
the proof, when reasoning backward from the goal,
is  via  successive applications of the $\forall$, $\wedge$ rules.
This means we have two strict subproofs with respective conclusions
      \[
        \begin{array}{ll}
          &{\Theta_L, z \in c}, {\Theta_R} \vdash {\lambda(z), \Delta_L}, {\neg \rho(z,w), \Delta_R},\cG\\
          \text{and} & {\Theta_L, z \in c}, {\Theta_R} \vdash {\neg\lambda(z),\Delta_L}, {\rho(z,w), \Delta_R},
        \cG \end{array}
\]
Applying the inductive hypothesis, we obtain $\nrc$ expressions $E_1\IH$, $E_2\IH$
and formulas $\theta_1\IH, \theta_2\IH$ which contain free variables in $C \cup \{z\}$ such that all of the following hold
\begin{align*}
&
  {\Theta_L, z \in c} \modelssem \phantom{\neg}{\lambda(z), \Delta_L}, \theta_1\IH \vee \Lambda  \in E_1\IH
 \\
  \text{and} \qquad &
  {\Theta_L, z \in c} \modelssem {\neg \lambda(z), \Delta_L}, \theta_2\IH \vee \Lambda  \in E_2\IH
   \\
  \text{and} \qquad &
  \phantom{z \in c,}{\Theta_R} \modelssem {\neg\rho(z,w),\Delta_R}, \neg \theta_1\IH
   \\
  \text{and} \qquad &
  \phantom{z \in c,} {\Theta_R} \modelssem \phantom{\neg}{\rho(z,w), \Delta_R}, \neg \theta_2\IH
\end{align*}
With this in hand, we set
\begin{align*}
  &   \theta \eqdef \exists z \in c. ~ \theta_1\IH \wedge \theta_2\IH \\
\text{and} \qquad &       E \eqdef \left\{\left\{ z \in c \mid \theta_2\IH\right\}\right\} ~\cup ~ \bigcup\left\{ E_1\IH \cup E_2\IH \mid z \in c \right\}
\end{align*}
Note in particular that the free variables of $E$ and $\theta$ are contained in $C$, since we bind $z$.
The verification that $E$ and $\theta$ suffice is routine; see the appendix.

Notice that the structure of our construction is: \begin{inparaenum}
\item by induction,
we get  $\nrc$ expressions $E_i\IH$ satisfying the required property over nested relations;
\item using our witness
proof, we get a formula $\theta$ satisfying some invariant
 over all models, and thus
in particular on all nested relations; \item we  perform a construction
combining $\theta$ and $E_i\IH$,
and reason with the na\"ive semantics of $\nrc$ to argue for correctness.
\end{inparaenum}
We use this reasoning template in the proof of our main theorem as well.

\section{Proof of the main result} \label{sec:proof:ain}
We now turn to the proof of our main result:
\thmmainset*

We have as input a proof of
\[ \phi(\vv i, \vv a, o) \wedge \phi(\vv i, \vv a', o') \imp o \equiv_T o'\]
and we want an $\nrc$ expression $E(\vv i)$ such that
\[ \phi(\vv i, \vv a, o) \modelssem E(\vv i) \equiv_T o\]

This will be a consequence of the following theorem.
\begin{restatable}{theorem}{mainsetinter}
\label{thm:mainsetinter}
Given $\deltazero$ $\varphi(\vv i, \vv a, o)$ and $\psi(\vv i, \vv b, o')$
together with a focused 
proof with conclusion
  \[\Theta(\vv i, \vv a, r); \; \varphi(\vv i, \vv a, r), \psi(\vv i, \vv b, o') \vdash \exists {r' \in_p o'}. \; r \equiv_T r'\]
we can compute in polytime an $\nrc$ expression $E(\vv i)$ such that
  \[ \Theta(\vv i, \vv a, r); \; \varphi(\vv i, \vv a, r), \psi(\vv i, \vv b, o') ~\modelssem~ r \in E(\vv i)\]
\end{restatable}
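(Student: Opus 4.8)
The plan is to prove Theorem~\ref{thm:mainsetinter} by induction on the type $T$ of the reconstructed term $r$, using the two admissible rules of Section~\ref{sec:tools} to strip off the outermost type constructor and the Parameter Collection Theorem~\ref{thm:mainflatcasebounded} (in the generalised form of Lemma~\ref{lem:nrcparamcoll}) as the engine that actually manufactures $\nrc$ expressions over the common variables. Throughout I regard $\Theta,\varphi$ (over $\vv{i},\vv{a},r$) as the ``left'' data and $\psi$ (over $\vv{i},\vv{b},o'$) as the ``right'' data, so that the common variables are exactly $\vv{i}$ and $E$ is forced to be built only from these. Recall that this statement is the inductive engine from which the main result, Theorem~\ref{thm:mainset}, is then read off by instantiating both copies of the specification and the trivial occurrence.

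Two cases are routine. For $T = \unit$ I would simply output $E \eqdef \{()\}$. For $T = T_1 \times T_2$ I would unfold $\equiv_{T_1 \times T_2}$ into its two conjuncts, weaken the focused proof to a proof of $\Theta;\ \varphi,\psi \vdash \exists r' \in_p o'.\ \pi_1(r) \equiv_{T_1} \pi_1(r')$ (and symmetrically for $\pi_2$), re-express it via the $\times_\beta$ rule as a proof whose goal quantifies over the subobject occurrence $p\cdot 1$ with target $\pi_1(r)$, and then apply the induction hypothesis at the strictly smaller type $T_1$ to obtain $E_1(\vv{i})$ with $\pi_1(r) \in E_1(\vv{i})$, and likewise $E_2$. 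Setting $E \eqdef \{\tuple{x_1,x_2} \mid x_1 \in E_1,\ x_2 \in E_2\}$ gives $r = \tuple{\pi_1(r),\pi_2(r)} \in E$.

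The heart of the argument is the set case $T = \sett(T')$. First I would apply Lemma~\ref{lem:effnestedproof-down} to turn the given proof into a proof of $\Theta, z \in r;\ \varphi,\psi \vdash \exists z' \in_{mp} o'.\ z \equiv_{T'} z'$ of no larger size, and read this as an instance of the theorem at the strictly smaller type $T'$, with $z$ as the reconstructed term and $z \in r$ added to the left $\in$-context; the induction hypothesis then yields a common expression $E_0(\vv{i})$ with
\[\Theta, z \in r;\ \varphi,\psi ~\modelssem~ z \in E_0(\vv{i}),\]
i.e.\ $E_0(\vv{i})$ collects every element of $r$. Next I would feed the \emph{original} proof to Lemma~\ref{lem:equivsettoequivalence}, taking its bound to be a fresh common variable $c$, obtaining a proof of $\Theta;\ \varphi,\psi \vdash \exists r' \in_p o'.\ \forall z \in c.\ (z \memmac r \leftrightarrow z \memmac r')$. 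This is exactly the shape required by Parameter Collection: applying Theorem~\ref{thm:mainflatcasebounded} with $\lambda(z) \eqdef z \memmac r$ on the left, $\rho(z,y) \eqdef z \memmac y$ on the right, common bound $c$, and $o'$ in the role of its right variable (the one called $r$ there), produces an $\nrc$ expression $E_1$ with free variables among $\vv{i},c$ satisfying $\{z \in c \mid z \memmac r\} \in E_1$. Since this entailment holds for \emph{every} value of $c$ and $\nrc$ is closed under composition, I may substitute the common term $E_0(\vv{i})$ for $c$, setting $E \eqdef E_1[E_0(\vv{i})/c]$; because $E_0(\vv{i})$ contains all elements of $r$, the filtered set $\{z \in E_0(\vv{i}) \mid z \memmac r\}$ is extensionally equal to $r$, whence $r \in E$.

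The remaining case is the base type $\ur$, and this is where I expect the real difficulty. Here $r$ is a single Ur-element, there is no smaller type to descend into, and so the common bound that drove the set case is unavailable: the set of candidate values for $r$ must instead be synthesised \emph{directly from the proof}, over the common variables alone. This is precisely the regime handled by the proof-tree induction behind Lemma~\ref{lem:nrcparamcoll}, whose decisive existential case---choosing a witness inside the right object $o'$ and routing the choice through a common formula $\theta$ that records, using only common variables, which branch of the proof was taken---is the one displayed in Section~\ref{sec:tools}. The main obstacles are therefore twofold. First, one must reconcile the ``collect a set'' conclusion $\{z \in c \mid \lambda(z)\} \in E$ of Parameter Collection with the membership conclusion $r \in E$ needed at the base type; this is what forces carrying the generalised invariant of Lemma~\ref{lem:nrcparamcoll}, with its auxiliary common formula $\theta$ and its two-sided contexts, rather than the bare statement. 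Second, one must keep the entire construction polynomial: this relies on the size guarantee of Lemma~\ref{lem:effnestedproof-down} (so the proof handed to the induction hypothesis does not grow) and on the type-recursion having depth bounded by the fixed type $T$, so that the polytime cost of each Parameter Collection call is incurred only a bounded number of times.
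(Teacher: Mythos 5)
Your product and set cases are exactly the paper's proof: Lemma~\ref{lem:effnestedproof-down} plus the induction hypothesis at $T'$ gives the superset expression $E_0(\vv i)\supseteq r$, Lemma~\ref{lem:equivsettoequivalence} with a fresh common bound puts the goal in the shape required by Parameter Collection, and substituting $E_0(\vv i)$ for that bound yields $r\in E_1[E_0(\vv i)/c]$. (The paper writes the intermediate conclusion as $a\cap r\in E^{\mathrm{coll}}(\vv i,a)$, which is precisely your $\{z\in c\mid z\memmac r\}\in E_1$.)

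The gap is the base case $T=\ur$, which you leave open while declaring it the hard part. In fact it is the trivial part, and your proposed route does not close it: the invariant of Lemma~\ref{lem:nrcparamcoll} produces an expression containing the \emph{set} $\{z\in c\mid \lambda(z)\}$, and, as you yourself note, there is no evident way to turn this into an expression containing the single Ur-element $r$ --- any such attempt (e.g.\ taking $\lambda(z)\eqdef z=_\ur r$, extracting the singleton $c\cap\{r\}$, and unioning out) still presupposes a common set $c$ already known to contain $r$, which is exactly what is missing at the base type. The paper's base case is a one-liner: take $E(\vv i)$ to be the $\nrc$ expression computing all Ur-elements in the transitive closure of $\vv i$. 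This suffices for a purely semantic reason: in any nested relation satisfying $\Theta$, $\varphi(\vv i,\vv a,r)$, $\psi(\vv i,\vv b,o')$, the element $r$ must lie in the transitive closure of $\vv i$ --- otherwise one could swap $r$ (and correspondingly $\vv a$) with a fresh Ur-element via a permutation fixing $\vv i$, $\vv b$ and $o'$, preserving the premises while the entailed conclusion $\exists r'\in_p o'.\; r=_\ur r'$ would force the fresh element to occur inside $o'$, a contradiction. So no proof-tree induction is needed at the base type at all; the Parameter Collection machinery is invoked only once, inside the set case, where you already deploy it correctly.
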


That is, we can find an $\nrc$ query that ``collects answers''.
Assuming Theorem \ref{thm:mainsetinter}, let's prove the main result.

\begin{proof}[Proof of Theorem \ref{thm:mainset}]
  We assume $o$ has a set type, deferring the simple product and Ur-element cases (the latter using $\nrcget$) to the appendix.
Fix  an implicit definition of $o$ up to extensionality relative
to $\varphi(\vv i, \vv a, o)$ and a focused proof of
  \[\varphi(\vv i, \vv a, o) \; \wedge \; \varphi(\vv i, \vv a', o') ~\proves~ o \equiv_{\sett(T)} o' \]
We can apply a simple variation of Lemma \ref{lem:effnestedproof-down} for the ``empty path'' $p$
to obtain a focused derivation of
  \begin{align}
r \in o ; \; \varphi(\vv i, \vv a, o), \varphi(\vv i, \vv a', o')  
    \vdash~ \exists r' \inq o' \; r \equiv_T r' \label{algn:ent1}
  \end{align}
Then applying Theorem \ref{thm:mainsetinter} gives
a $\nrc$ expression $E(\vv i)$ such that
\[
\varphi(\vv i, \vv a, o) \wedge r \memmac o \wedge \; \varphi(\vv i, \vv a', o') ~  \modelssem ~ r \in E(\vv i)
\]
Thus, the object determined by $\vv i$ is always contained in $E(\vv i)$.
Coming back to (\ref{algn:ent1}), we can obtain a derivation of
  \[
    r \in o; \; \varphi(\vv i, \vv a, o) \vdash \varphi(\vv i, \vv a', o')  \rightarrow \exists r' 
  \inq o'\; r \equiv_T r'\]
and applying interpolation (Theorem \ref{thm:interpolationdeltafocus}) to that gives
  a $\deltazero$ formula $\kappa(\vec i,  r)$ such that the following are valid
\begin{align}
  r \in o \wedge \varphi(\vv i, \vv a, o) \imp \kappa(\vv i, r) \label{algn:inter1}\\
  \kappa(\vv i, r) \wedge \varphi(\vv i, \vv a', o') \imp \exists r' \inq o'. \; r \equiv_T r' \label{algn:inter2}
\end{align}

  We claim that $E_\kappa(\vv i) = \left\{ x \in E(\vv i) \mid \kappa(\vv i, x)\right\}$ is the
desired $\nrc$ expression. To show this, assume $\varphi(\vv i, \vv a, o)$ holds.
  We know already that $o \subseteq E(\vv i)$ and, by~(\ref{algn:inter1}), every $r \in o$ satisfies
  $\kappa(\vv i, o)$, so $o \subseteq E_\kappa(\vv i)$.
  Conversely, if $x \in E_{\kappa}(\vv i)$, we have $\kappa(\vv i,x)$, so by~(\ref{algn:inter2}),
  we have that $x \in o$, so $E_\kappa(\vv i) \subseteq o$. So $E_\kappa(\vv i) = o$, which concludes the proof.
\end{proof}

We now turn to the proof of Theorem \ref{thm:mainsetinter}.

\myparagraph{Proof of Theorem \ref{thm:mainsetinter}}
We prove the theorem %
by induction over the type $T$. 
We only prove the inductive step
for set types: the inductive case for products are straightforward.

For $T = \ur$, the base case of the induction, it is clear that we can take for $E$ an expression computing the set of all $\ur$-elements
in the transitive closure of $\vv i$. This can clearly be done in $\nrc$.

So now, we assume $T = \sett({T'})$ and that Theorem~\ref{thm:mainsetinter} holds up to ${T'}$.
We have a focused derivation of
\begin{align}
\Theta; \; \varphi(\vv i, r), \; \psi(\vv i, o') ~\vdash~ \exists {r' \in_p o'}. \; r \equiv_{\sett(T')} r'
  \label{algn:inputder}
\end{align}
omitting the additional variables for brevity. 

From our input derivation, we can easily see that each element
of $r$ must be equivalent to some element below $o'$.
This is reflected by Lemma~\ref{lem:effnestedproof-down}, which allows us to
efficiently compute a proof of
\[\Theta, z \in r; \; \varphi(\vv i, r), \; \psi(\vv i, o') ~\vdash~ \exists z' \inq_{mp} o'. \; z \equiv_{T'} z'\]
We can then apply the inductive hypothesis of our main theorem at sort ${T'}$,
which is strictly smaller than $\sett({T'})$.
This yields a $\nrc$ expression $E\IH(\vv i)$
of type $\sett({T'})$ such that 
\[\Theta, z \in r; \; \varphi(\vv i,r), \psi(\vv i, o') \modelssem z \in E\IH(\vv i)\]
That is, our original hypotheses
entail $r \subseteq  E\IH(\vv i)$.

Thus, we have used the inductive hypothesis to get a ``superset expression''. But
now we want an expression that has $r$ as an element. We will do this
by  unioning a collection of definable subsets of $E\IH(\vv i)$.
To get these, we come back to our input derivation (\ref{algn:inputder}).
By~\Cref{lem:equivsettoequivalence}, we can efficiently compute a derivation of
\[\Theta; \; \varphi(\vv i, r), \psi(\vv i, o') \vdash \exists r' \inq_p
o'\,  \forall z \inq a  \;
(z \memmac r \equi z \memmac r')\]
where we take $a$ to be a fresh variable of sort $\sett({T'})$. 
Now, applying our $\nrc$ Parameter Collection result (Theorem~\ref{thm:mainflatcasebounded})
we obtain a $\nrc$ expression $E^{\mathrm{coll}}(\vv i, a)$ satisfying
\begin{align*}
\Theta; \; \varphi(\vv i, r), \psi(\vv i, o') ~\modelssem 
  a \cap r  \in E^{\mathrm{coll}}(\vv i, a)
\end{align*}
Now, recalling that we have $r \subseteq E\IH(\vv i)$ and instantiating $a$ to be
$E\IH(\vv i)$, we can conclude that
\begin{align*}
\Theta; \; \varphi(\vv i, r), \psi(\vv i, o') ~\modelssem 
  r  \in E^{\mathrm{coll}}(\vv i, E\IH(\vv i))
\end{align*}
Thus we can take $E^{\mathrm{coll}}(\vv i, E\IH(\vv i))$ as an explicit definition. \qed
\section{Discussion and future work} \label{sec:conc}

Our effective nested Beth result implies that whenever a set of $\nrc$ views determines
an $\nrc$ query, the query is rewritable over the views in $\nrc$.
Further, from a proof witnessing determinacy in our proof system, we can efficiently generate the rewriting. Our result applies to  a setting
where we have determinacy with respect to constraints and views, as in
Example \ref{ex:views}, or to general $\deltazero$ implicit definitions that
may not stem from views.

In terms of impact on databases, a crucial limitation of our work is that we do not yet
know how to find the proofs. In the case of relational data,
we know of many ``islands of decidability'' where proofs of determinacy can be found effectively -- e.g. 
for views and queries in guarded logics \cite{gnfjsl}.
But it remains open to find similar decidability results for views/queries in fragments of $\nrc$.

It is possible to use our proof system without full automation -- simply search for a proof, and then
when one finds one, generate the rewriting. We have had some success with this approach in the relational setting,
where standard theorem proving technology can be applied \cite{usijcai17}. But for the proof systems proposed
here, we do not have either our own theorem prover or a reduction to a system that has been implemented in the past.
The need to find proofs automatically is pressing since
our system is so low-level that it is difficult to do proofs by
hand. Indeed, a formal proof of implicit definability for Example \ref{ex:views}, 
or even the simpler Example~\ref{ex:simplenesting},
 would come  to several pages. 

The implicit-to-explicit methodology requires
a proof of implicit definability, which implies implicit definability over all instances, not just finite ones.
This requirement is necessary:
one cannot hope to convert implicit definitions over finite instances
to explicit  $\nrc$ queries, even ineffectively.  We do not believe that this is a limitation in practice. See the appendix for
details.

Our key proof tool was the $\nrc$ Parameter Collection theorem,
Theorem \ref{thm:mainflatcasebounded}. There is an intuition behind this
theorem  that concerns a general  setting,
where  we have a first-order theory $\Sigma$ that factors into a conjunction
of two formulas $\Sigma_L \wedge \Sigma_R$, and from this we have
a notion of a ''left formula'' (with predicates from $\Sigma_L$),
a ''right formula'' (predicates from $\Sigma_R$), and a ``common formula''
(all predicates occur in both $\Sigma_L$ and $\Sigma_R$). Under the
hypothesis that a left formula $\lambda$ is 
definable from a right formula with parameters, we can conclude that
the left formula must actually be definable from a common formula
with parameters: see the appendix for a formal version and the
corresponding proof.

Our work contributes to the broader topic of proof-theoretic vs model-theoretic techniques for interpolation and definability theorems. For Beth's theorem, 
there are reasonably short model-theoretic \cite{Lyndon59, ck} and proof-theoretic arguments \cite{craig57beth,fittingbook}.
In database terms, you can argue semantically that relational
algebra
is complete
for  rewritings of queries determined by views, and producing
a rewriting from a proof of determinacy is not   that difficult.
But for a number of results  on definability proved in the 60's and 70's 
\cite{changbeth,makkai,kueker,gaifman74}, there
are short model-theoretic arguments, but no proof-theoretic ones.
For our $\nrc$ analog of Beth's theorem, the situation is more similar
to the latter case:
the model-theoretic proof of completeness \cite{benediktpradicpopl}
is relatively short and elementary, but generating explicit
definitions from proofs is much more challenging.
We hope that our results and  tools represent a step towards
providing  effective versions, and towards understanding the relationship between
model-theoretic and proof-theoretic arguments.

\begin{acks}
This work was funded by EPSRC grant EP/T022124/1 and by the Deutsche
Forschungsgemeinschaft (DFG, German Research Foundation) --
Project-ID~457292495.
\end{acks}

\bibliographystyle{ACM-Reference-Format}
\bibliography{obj}

%%% -*-BibTeX-*-
%%% Do NOT edit. File created by BibTeX with style
%%% ACM-Reference-Format-Journals [18-Jan-2012].

\begin{thebibliography}{00}

%%% ====================================================================
%%% NOTE TO THE USER: you can override these defaults by providing
%%% customized versions of any of these macros before the \bibliography
%%% command.  Each of them MUST provide its own final punctuation,
%%% except for \shownote{}, \showDOI{}, and \showURL{}.  The latter two
%%% do not use final punctuation, in order to avoid confusing it with
%%% the Web address.
%%%
%%% To suppress output of a particular field, define its macro to expand
%%% to an empty string, or better, \unskip, like this:
%%%
%%% \newcommand{\showDOI}[1]{\unskip}   % LaTeX syntax
%%%
%%% \def \showDOI #1{\unskip}           % plain TeX syntax
%%%
%%% ====================================================================

\ifx \showCODEN    \undefined \def \showCODEN     #1{\unskip}     \fi
\ifx \showDOI      \undefined \def \showDOI       #1{{\tt DOI:}\penalty0{#1}\ }
  \fi
\ifx \showISBNx    \undefined \def \showISBNx     #1{\unskip}     \fi
\ifx \showISBNxiii \undefined \def \showISBNxiii  #1{\unskip}     \fi
\ifx \showISSN     \undefined \def \showISSN      #1{\unskip}     \fi
\ifx \showLCCN     \undefined \def \showLCCN      #1{\unskip}     \fi
\ifx \shownote     \undefined \def \shownote      #1{#1}          \fi
\ifx \showarticletitle \undefined \def \showarticletitle #1{#1}   \fi
\ifx \showURL      \undefined \def \showURL       #1{#1}          \fi
% The following commands are used for tagged output and should be
% invisible to TeX
\providecommand\bibfield[2]{#2}
\providecommand\bibinfo[2]{#2}
\providecommand\natexlab[1]{#1}
\providecommand\showeprint[2][]{arXiv:#2}

\bibitem[\protect\citeauthoryear{B{\'{a}}r{\'{a}}ny, Benedikt, and ten
  Cate}{B{\'{a}}r{\'{a}}ny et~al\mbox{.}}{2018}]%
        {gnfjsl}
\bibfield{author}{\bibinfo{person}{Vince B{\'{a}}r{\'{a}}ny},
  \bibinfo{person}{Michael Benedikt}, {and} \bibinfo{person}{Balder ten Cate}.}
  \bibinfo{year}{2018}\natexlab{}.
\newblock \showarticletitle{Some Model Theory of Guarded Negation}.
\newblock \bibinfo{journal}{{\em J. Symb. Log.\/}} \bibinfo{volume}{83},
  \bibinfo{number}{4} (\bibinfo{year}{2018}), \bibinfo{pages}{1307--1344}.
\newblock


\bibitem[\protect\citeauthoryear{Benedikt, Bourhis, and {Vanden Boom}}{Benedikt
  et~al\mbox{.}}{2019}]%
        {lmcsusinterpolfixedpoint}
\bibfield{author}{\bibinfo{person}{Michael Benedikt}, \bibinfo{person}{Pierre
  Bourhis}, {and} \bibinfo{person}{Michael {Vanden Boom}}.}
  \bibinfo{year}{2019}\natexlab{}.
\newblock \showarticletitle{Definability and Interpolation within Decidable
  Fixpoint Logics}.
\newblock \bibinfo{journal}{{\em Log. Methods Comput. Sci.\/}}
  \bibinfo{volume}{15}, \bibinfo{number}{3} (\bibinfo{year}{2019}),
  \bibinfo{pages}{29:1–29:53}.
\newblock


\bibitem[\protect\citeauthoryear{Benedikt, Cate, Leblay, and Tsamoura}{Benedikt
  et~al\mbox{.}}{2016}]%
        {interpbook}
\bibfield{author}{\bibinfo{person}{Michael Benedikt},
  \bibinfo{person}{Balden~Ten Cate}, \bibinfo{person}{Julien Leblay}, {and}
  \bibinfo{person}{Efthymia Tsamoura}.} \bibinfo{year}{2016}\natexlab{}.
\newblock \bibinfo{booktitle}{{\em Generating Plans from Proofs: The
  Interpolation-Based Approach to Query Reformulation}}.
\newblock \bibinfo{publisher}{Morgan Claypool}, \bibinfo{address}{San Rafael,
  CA}.
\newblock


\bibitem[\protect\citeauthoryear{Benedikt, Kostylev, Mogavero, and
  Tsamoura}{Benedikt et~al\mbox{.}}{2017}]%
        {usijcai17}
\bibfield{author}{\bibinfo{person}{Michael Benedikt}, \bibinfo{person}{Egor~V.
  Kostylev}, \bibinfo{person}{Fabio Mogavero}, {and} \bibinfo{person}{Efthymia
  Tsamoura}.} \bibinfo{year}{2017}\natexlab{}.
\newblock \showarticletitle{Reformulating Queries: Theory and Practice}. In
  \bibinfo{booktitle}{{\em IJCAI}}.
\newblock


\bibitem[\protect\citeauthoryear{Benedikt and C\'ecilia}{Benedikt and
  Pradic}{2021}]%
        {benediktpradicpopl}
\bibfield{author}{\bibinfo{person}{Michael Benedikt} {and}
  \bibinfo{person}{C\'ecilia Pradic}.} \bibinfo{year}{2021}\natexlab{}.
\newblock \showarticletitle{Generating Collection Transformations from Proofs}.
  In \bibinfo{booktitle}{{\em POPL}}.
\newblock


\bibitem[\protect\citeauthoryear{Benedikt, ten Cate, and {Vanden
  Boom}}{Benedikt et~al\mbox{.}}{2016}]%
        {guardedinterpj}
\bibfield{author}{\bibinfo{person}{Michael Benedikt}, \bibinfo{person}{Balder
  ten Cate}, {and} \bibinfo{person}{Michael {Vanden Boom}}.}
  \bibinfo{year}{2016}\natexlab{}.
\newblock \showarticletitle{Effective Interpolation and Preservation in Guarded
  Logics}.
\newblock \bibinfo{journal}{{\em {ACM} TOCL\/}} \bibinfo{volume}{17},
  \bibinfo{number}{2} (\bibinfo{year}{2016}), \bibinfo{pages}{8:1--8:46}.
\newblock


\bibitem[\protect\citeauthoryear{Beth}{Beth}{1953}]%
        {beth}
\bibfield{author}{\bibinfo{person}{E.~W. Beth}.}
  \bibinfo{year}{1953}\natexlab{}.
\newblock \showarticletitle{On {Padoa's} Method in the Theory of Definitions}.
\newblock \bibinfo{journal}{{\em Indag. Mathematicae\/}}  \bibinfo{volume}{15}
  (\bibinfo{year}{1953}), \bibinfo{pages}{330 -- 339}.
\newblock


\bibitem[\protect\citeauthoryear{Chang}{Chang}{1964}]%
        {changbeth}
\bibfield{author}{\bibinfo{person}{C.~C. Chang}.}
  \bibinfo{year}{1964}\natexlab{}.
\newblock \showarticletitle{Some New Results in Definability}.
\newblock \bibinfo{journal}{{\em Bull. of the AMS\/}} \bibinfo{volume}{70},
  \bibinfo{number}{6} (\bibinfo{year}{1964}), \bibinfo{pages}{808 -- 813}.
\newblock


\bibitem[\protect\citeauthoryear{Chang and Keisler}{Chang and Keisler}{1992}]%
        {ck}
\bibfield{author}{\bibinfo{person}{C.~C. Chang} {and}
  \bibinfo{person}{H.~Jerome Keisler}.} \bibinfo{year}{1992}\natexlab{}.
\newblock \bibinfo{booktitle}{{\em Model Theory}}.
\newblock \bibinfo{publisher}{North-Holland}.
\newblock


\bibitem[\protect\citeauthoryear{Craig}{Craig}{1957a}]%
        {craig57interp}
\bibfield{author}{\bibinfo{person}{William Craig}.}
  \bibinfo{year}{1957}\natexlab{a}.
\newblock \showarticletitle{Linear Reasoning. A New Form of the
  {Herbrand-Gentzen} Theorem}.
\newblock \bibinfo{journal}{{\em J. Symb. Log.\/}} \bibinfo{volume}{22},
  \bibinfo{number}{03} (\bibinfo{year}{1957}), \bibinfo{pages}{250--268}.
\newblock


\bibitem[\protect\citeauthoryear{Craig}{Craig}{1957b}]%
        {craig57beth}
\bibfield{author}{\bibinfo{person}{William Craig}.}
  \bibinfo{year}{1957}\natexlab{b}.
\newblock \showarticletitle{Three Uses of the {Herbrand-Gentzen} Theorem in
  Relating Model Theory and Proof Theory}.
\newblock \bibinfo{journal}{{\em J. Symb. Log.\/}} \bibinfo{volume}{22},
  \bibinfo{number}{3} (\bibinfo{year}{1957}), \bibinfo{pages}{269--285}.
\newblock


\bibitem[\protect\citeauthoryear{D'Agostino and Hollenberg}{D'Agostino and
  Hollenberg}{2000}]%
        {interpolationmucalc}
\bibfield{author}{\bibinfo{person}{Giovanna D'Agostino} {and}
  \bibinfo{person}{Marco Hollenberg}.} \bibinfo{year}{2000}\natexlab{}.
\newblock \showarticletitle{{Logical Questions Concerning The mu-Calculus:
  Interpolation, Lyndon and Los-Tarski}}.
\newblock \bibinfo{journal}{{\em J. Symb. Log.\/}} \bibinfo{volume}{65},
  \bibinfo{number}{1} (\bibinfo{year}{2000}), \bibinfo{pages}{310--332}.
\newblock


\bibitem[\protect\citeauthoryear{Fitting}{Fitting}{1996}]%
        {fittingbook}
\bibfield{author}{\bibinfo{person}{Melvin Fitting}.}
  \bibinfo{year}{1996}\natexlab{}.
\newblock \bibinfo{booktitle}{{\em First-order Logic and Automated Theorem
  Proving\/} (\bibinfo{edition}{second} ed.)}.
\newblock \bibinfo{publisher}{Springer}.
\newblock


\bibitem[\protect\citeauthoryear{Franconi, Kerhet, and Ngo}{Franconi
  et~al\mbox{.}}{2013}]%
        {franconisafe}
\bibfield{author}{\bibinfo{person}{Enrico Franconi}, \bibinfo{person}{Volha
  Kerhet}, {and} \bibinfo{person}{Nhung Ngo}.} \bibinfo{year}{2013}\natexlab{}.
\newblock \showarticletitle{Exact Query Reformulation over Databases with
  First-order and Description Logics Ontologies}.
\newblock \bibinfo{journal}{{\em J. Artif. Int. Res.\/}}  \bibinfo{volume}{48}
  (\bibinfo{year}{2013}), \bibinfo{pages}{885--922}.
\newblock


\bibitem[\protect\citeauthoryear{Gaifman}{Gaifman}{1974}]%
        {gaifman74}
\bibfield{author}{\bibinfo{person}{Haim Gaifman}.}
  \bibinfo{year}{1974}\natexlab{}.
\newblock \showarticletitle{Operations on Relational Structures, Functors and
  Classes {I}}. In \bibinfo{booktitle}{{\em Proc. of the {T}arski {S}ymposium}}
  {\em (\bibinfo{series}{Proc. of Symposia in Pure Mathematics})},
  Vol.~\bibinfo{volume}{25}. \bibinfo{pages}{20--40}.
\newblock


\bibitem[\protect\citeauthoryear{Hodges}{Hodges}{1975}]%
        {hodgesnormal}
\bibfield{author}{\bibinfo{person}{Wilfrid Hodges}.}
  \bibinfo{year}{1975}\natexlab{}.
\newblock \showarticletitle{A Normal Form for Algebraic Constructions {II}}.
\newblock \bibinfo{journal}{{\em Logique et Analyse\/}} \bibinfo{volume}{18},
  \bibinfo{number}{71/72} (\bibinfo{year}{1975}), \bibinfo{pages}{429--487}.
\newblock


\bibitem[\protect\citeauthoryear{Hodges}{Hodges}{1993}]%
        {hodgesbook}
\bibfield{author}{\bibinfo{person}{Wilfrid Hodges}.}
  \bibinfo{year}{1993}\natexlab{}.
\newblock \bibinfo{booktitle}{{\em Model Theory}}.
\newblock \bibinfo{publisher}{Cambridge University Press}.
\newblock


\bibitem[\protect\citeauthoryear{Hodges, Hodkinson, and Macpherson}{Hodges
  et~al\mbox{.}}{1990}]%
        {hodgesdugald}
\bibfield{author}{\bibinfo{person}{Wilfrid Hodges}, \bibinfo{person}{I.M.
  Hodkinson}, {and} \bibinfo{person}{Dugald Macpherson}.}
  \bibinfo{year}{1990}\natexlab{}.
\newblock \showarticletitle{Omega-Categoricity, Relative Categoricity and
  Coordinatisation}.
\newblock \bibinfo{journal}{{\em Annals of Pure and Applied Logic\/}}
  \bibinfo{volume}{46}, \bibinfo{number}{2} (\bibinfo{year}{1990}),
  \bibinfo{pages}{169 -- 199}.
\newblock


\bibitem[\protect\citeauthoryear{Hoogland, Marx, and Otto}{Hoogland
  et~al\mbox{.}}{1999}]%
        {HMO}
\bibfield{author}{\bibinfo{person}{Eva Hoogland}, \bibinfo{person}{Maarten
  Marx}, {and} \bibinfo{person}{Martin Otto}.} \bibinfo{year}{1999}\natexlab{}.
\newblock \showarticletitle{Beth Definability for the Guarded Fragment}. In
  \bibinfo{booktitle}{{\em LPAR}}.
\newblock


\bibitem[\protect\citeauthoryear{Huang}{Huang}{1995}]%
        {huang}
\bibfield{author}{\bibinfo{person}{Guoxiang Huang}.}
  \bibinfo{year}{1995}\natexlab{}.
\newblock \showarticletitle{Constructing {Craig} Interpolation Formulas}.
\newblock In \bibinfo{booktitle}{{\em Computing and Combinatorics}}.
\newblock


\bibitem[\protect\citeauthoryear{Koch}{Koch}{2006}]%
        {koch}
\bibfield{author}{\bibinfo{person}{Christoph Koch}.}
  \bibinfo{year}{2006}\natexlab{}.
\newblock \showarticletitle{{On the Complexity of Non-recursive XQuery and
  Functional Query Languages on Complex Values}}.
\newblock \bibinfo{journal}{{\em ACM TODS\/}} \bibinfo{volume}{31},
  \bibinfo{number}{4} (\bibinfo{year}{2006}), \bibinfo{pages}{1215--1256}.
\newblock


\bibitem[\protect\citeauthoryear{Kolaitis}{Kolaitis}{1990}]%
        {kolaitisimpdef}
\bibfield{author}{\bibinfo{person}{Phokion~G. Kolaitis}.}
  \bibinfo{year}{1990}\natexlab{}.
\newblock \showarticletitle{Implicit Definability on Finite Structures and
  Unambiguous Computations}. In \bibinfo{booktitle}{{\em LICS}}.
\newblock


\bibitem[\protect\citeauthoryear{Kueker}{Kueker}{1971}]%
        {kueker}
\bibfield{author}{\bibinfo{person}{David Kueker}.}
  \bibinfo{year}{1971}\natexlab{}.
\newblock \showarticletitle{Generalized Interpolation and Definability}.
\newblock \bibinfo{journal}{{\em Annals of Mathematical Logic\/}}
  \bibinfo{volume}{1}, \bibinfo{number}{4} (\bibinfo{year}{1971}),
  \bibinfo{pages}{423--468}.
\newblock


\bibitem[\protect\citeauthoryear{Lopez-Escobar}{Lopez-Escobar}{1965}]%
        {lopezescobar}
\bibfield{author}{\bibinfo{person}{E.~G.~K. Lopez-Escobar}.}
  \bibinfo{year}{1965}\natexlab{}.
\newblock \showarticletitle{An Interpolation Theorem for Denumerably Long
  Sentences}.
\newblock \bibinfo{journal}{{\em Fundamenta Mathametica\/}}
  \bibinfo{volume}{57} (\bibinfo{year}{1965}), \bibinfo{pages}{253--272}.
\newblock


\bibitem[\protect\citeauthoryear{Lyndon}{Lyndon}{1959}]%
        {Lyndon59}
\bibfield{author}{\bibinfo{person}{Roger~C. Lyndon}.}
  \bibinfo{year}{1959}\natexlab{}.
\newblock \showarticletitle{An Interpolation Theorem in the Predicate
  Calculus}.
\newblock \bibinfo{journal}{{\em Pacific J. Math.\/}}  \bibinfo{volume}{9}
  (\bibinfo{year}{1959}), \bibinfo{pages}{129--142}.
\newblock


\bibitem[\protect\citeauthoryear{Makkai}{Makkai}{1964}]%
        {makkai}
\bibfield{author}{\bibinfo{person}{Michael Makkai}.}
  \bibinfo{year}{1964}\natexlab{}.
\newblock \showarticletitle{On a Generalization of a Theorem of {E. W. Beth}}.
\newblock \bibinfo{journal}{{\em {Acta Math. Ac. Sci. Hung.}\/}}
  \bibinfo{volume}{15} (\bibinfo{year}{1964}), \bibinfo{pages}{227--235}.
\newblock


\bibitem[\protect\citeauthoryear{Marin, Miller, Pimentel, and Volpe}{Marin
  et~al\mbox{.}}{2022}]%
        {mmpvfocusing}
\bibfield{author}{\bibinfo{person}{Sonia Marin}, \bibinfo{person}{Dale Miller},
  \bibinfo{person}{Elaine Pimentel}, {and} \bibinfo{person}{Marco Volpe}.}
  \bibinfo{year}{2022}\natexlab{}.
\newblock \showarticletitle{From Axioms to Synthetic Inference Rules via
  Focusing}.
\newblock \bibinfo{journal}{{\em Annals of Pure and Applied Logic\/}}
  \bibinfo{volume}{173}, \bibinfo{number}{5} (\bibinfo{year}{2022}),
  \bibinfo{pages}{103091}.
\newblock


\bibitem[\protect\citeauthoryear{Nash, Segoufin, and Vianu}{Nash
  et~al\mbox{.}}{2010}]%
        {NSV}
\bibfield{author}{\bibinfo{person}{Alan Nash}, \bibinfo{person}{Luc Segoufin},
  {and} \bibinfo{person}{Victor Vianu}.} \bibinfo{year}{2010}\natexlab{}.
\newblock \showarticletitle{Views and Queries: Determinacy and Rewriting}.
\newblock \bibinfo{journal}{{\em ACM TODS\/}} \bibinfo{volume}{35},
  \bibinfo{number}{3} (\bibinfo{year}{2010}), \bibinfo{pages}{1--41}.
\newblock


\bibitem[\protect\citeauthoryear{Negri and von Plato}{Negri and von
  Plato}{1998}]%
        {negri:plato:equality:1998}
\bibfield{author}{\bibinfo{person}{Sara Negri} {and} \bibinfo{person}{Jan von
  Plato}.} \bibinfo{year}{1998}\natexlab{}.
\newblock \showarticletitle{Cut Elimination in the Presence of Axioms}.
\newblock \bibinfo{journal}{{\em Bull. Symb. Log.\/}} \bibinfo{volume}{4},
  \bibinfo{number}{4} (\bibinfo{year}{1998}), \bibinfo{pages}{418--435}.
\newblock


\bibitem[\protect\citeauthoryear{Negri and von Plato}{Negri and von
  Plato}{2001}]%
        {negri:plato:structural:2001}
\bibfield{author}{\bibinfo{person}{Sara Negri} {and} \bibinfo{person}{Jan von
  Plato}.} \bibinfo{year}{2001}\natexlab{}.
\newblock \bibinfo{booktitle}{{\em Structural Proof Theory}}.
\newblock \bibinfo{publisher}{Cambridge University Press}.
\newblock


\bibitem[\protect\citeauthoryear{Otto}{Otto}{2000}]%
        {otto}
\bibfield{author}{\bibinfo{person}{Martin Otto}.}
  \bibinfo{year}{2000}\natexlab{}.
\newblock \showarticletitle{An Interpolation Theorem}.
\newblock \bibinfo{journal}{{\em Bull. Symb. Log.\/}} \bibinfo{volume}{6},
  \bibinfo{number}{4} (\bibinfo{year}{2000}), \bibinfo{pages}{447--462}.
\newblock


\bibitem[\protect\citeauthoryear{Paredaens and Van~Gucht}{Paredaens and
  Van~Gucht}{1992}]%
        {conservativity}
\bibfield{author}{\bibinfo{person}{Jan Paredaens} {and} \bibinfo{person}{Dirk
  Van~Gucht}.} \bibinfo{year}{1992}\natexlab{}.
\newblock \showarticletitle{Converting Nested Algebra Expressions into Flat
  Algebra Expressions}.
\newblock \bibinfo{journal}{{\em ACM TODS\/}} \bibinfo{volume}{17},
  \bibinfo{number}{1} (\bibinfo{year}{1992}), \bibinfo{pages}{65--93}.
\newblock


\bibitem[\protect\citeauthoryear{Segoufin and Vianu}{Segoufin and
  Vianu}{2005}]%
        {svconf}
\bibfield{author}{\bibinfo{person}{Luc Segoufin} {and} \bibinfo{person}{Victor
  Vianu}.} \bibinfo{year}{2005}\natexlab{}.
\newblock \showarticletitle{Views and Queries: Determinacy and Rewriting}. In
  \bibinfo{booktitle}{{\em PODS}}.
\newblock


\bibitem[\protect\citeauthoryear{Smullyan}{Smullyan}{1968}]%
        {smullyan}
\bibfield{author}{\bibinfo{person}{Raymond~M. Smullyan}.}
  \bibinfo{year}{1968}\natexlab{}.
\newblock \bibinfo{booktitle}{{\em Craig's Interpolation Lemma and Beth's
  Definability Theorem. \textnormal{In: } First-Order Logic}}.
\newblock \bibinfo{publisher}{Springer}, \bibinfo{pages}{127--133}.
\newblock


\bibitem[\protect\citeauthoryear{Suciu}{Suciu}{1995}]%
        {suciuthesis}
\bibfield{author}{\bibinfo{person}{Dan Suciu}.}
  \bibinfo{year}{1995}\natexlab{}.
\newblock {\em \bibinfo{title}{Parallel Programming Languages for
  Collections}}.
\newblock \bibinfo{thesistype}{Ph.D. Dissertation}. \bibinfo{school}{Univ.
  Pennsylvania}.
\newblock


\bibitem[\protect\citeauthoryear{Takeuti}{Takeuti}{1987}]%
        {takeuti:1987}
\bibfield{author}{\bibinfo{person}{Gaisi Takeuti}.}
  \bibinfo{year}{1987}\natexlab{}.
\newblock \bibinfo{booktitle}{{\em Proof Theory\/} (\bibinfo{edition}{second}
  ed.)}.
\newblock \bibinfo{publisher}{North-Holland}.
\newblock


\bibitem[\protect\citeauthoryear{ten Cate, Franconi, and Seylan}{ten Cate
  et~al\mbox{.}}{2013}]%
        {balderbethdl}
\bibfield{author}{\bibinfo{person}{Balder ten Cate}, \bibinfo{person}{Enrico
  Franconi}, {and} \bibinfo{person}{Inan\c{c} Seylan}.}
  \bibinfo{year}{2013}\natexlab{}.
\newblock \showarticletitle{Beth Definability in Expressive Description
  Logics}.
\newblock \bibinfo{journal}{{\em J. Artif. Int. Res.\/}} \bibinfo{volume}{48},
  \bibinfo{number}{1} (\bibinfo{year}{2013}), \bibinfo{pages}{347--414}.
\newblock


\bibitem[\protect\citeauthoryear{Toman and Weddell}{Toman and Weddell}{2011}]%
        {tomanweddell}
\bibfield{author}{\bibinfo{person}{David Toman} {and} \bibinfo{person}{Grant
  Weddell}.} \bibinfo{year}{2011}\natexlab{}.
\newblock \bibinfo{booktitle}{{\em Fundamentals of Physical Design and Query
  Compilation}}.
\newblock \bibinfo{publisher}{Morgan Claypool}.
\newblock


\bibitem[\protect\citeauthoryear{Troelstra and Schwichtenberg}{Troelstra and
  Schwichtenberg}{2000}]%
        {bpt}
\bibfield{author}{\bibinfo{person}{Arne~S. Troelstra} {and}
  \bibinfo{person}{Helmut Schwichtenberg}.} \bibinfo{year}{2000}\natexlab{}.
\newblock \bibinfo{booktitle}{{\em Basic Proof Theory}}.
\newblock \bibinfo{publisher}{Cambridge University Press}.
\newblock


\bibitem[\protect\citeauthoryear{{Van den Bussche}}{{Van den Bussche}}{2001}]%
        {simulation}
\bibfield{author}{\bibinfo{person}{Jan {Van den Bussche}}.}
  \bibinfo{year}{2001}\natexlab{}.
\newblock \showarticletitle{{Simulation of the Nested Relational Algebra by the
  Flat Relational Algebra, with an Application to the Complexity of Evaluating
  Powerset Algebra Expressions}}.
\newblock \bibinfo{journal}{{\em Theor. Comput. Sci.\/}} \bibinfo{volume}{254},
  \bibinfo{number}{1--2} (\bibinfo{year}{2001}), \bibinfo{pages}{363--377}.
\newblock


\bibitem[\protect\citeauthoryear{Wong}{Wong}{1994}]%
        {limsoonthesis}
\bibfield{author}{\bibinfo{person}{Limsoon Wong}.}
  \bibinfo{year}{1994}\natexlab{}.
\newblock {\em \bibinfo{title}{Querying Nested Collections}}.
\newblock \bibinfo{thesistype}{Ph.D. Dissertation}. \bibinfo{school}{Univ.
  Pennsylvania}.
\newblock


\end{thebibliography}

\newpage
\onecolumn
\appendix
\section{Comparison to the situation with finite instances}
Our result concerns a specification $\Sigma(\vec I, O \ldots)$
such that $\vec I$ implicitly defines $O$. This can be defined
``syntactically'' -- via the existence of a proof (e.g. in our own proof system).
Thus, the class of queries that we deal with could be called
the ``provably implicitly definable queries''.
The same class of queries can also be defined semantically, and this is how implicitly
defined queries are often presented. But in order to be equivalent to the
proof-theoretic version, we need the implicit definability of the object
$O$ over $\vec I$ to holds considering
all nested relations $\vec I, O \ldots$, not just finite ones.
Of course, the fact that when you phrase the property
 semantically requires referencing unrestricted instances  does not mean that
our results depend on the existence of infinite nested relations.

Discussion of finite vs. unrestricted instances appears in many other papers
(e.g. \cite{benediktpradicpopl}). And the results in this submission do not raise
any new issues with regard to the topic. But we discuss what
happens if we take the obvious analog of the semantic definition, but
using only finite instances.
Let us say that
a $\deltazero$ specification $\Sigma(\vec I, O \vec A)$  is \emph{implicitly
defines $O$ in terms of $\vec I$ over finite instances} if for any
\emph{finite} nested relations $\vec I, O, \vec A, O', \vec A'$, if
$\Sigma(\vec I, O, \vec A) \wedge \Sigma(\vec I, O', \vec A')$ holds,
then $O=O'$.
If this holds, then $\Sigma$ defines a query, and we call
such a query \emph{finitely implicitly definable}.

This class of queries is reasonably well understood, and 
we summarize what is known about it:

\begin{itemize}
\item \emph{ Can finitely implicitly definable queries always be defined in $\nrc$?}
The answer is a resounding ``no'': one can implicitly define the powerset query over
finite nested relations. Bootstrapping this, one can define iterated
powersets, and show that the expressiveness of implicit definitions
is the same as queries in $\nrc$ enhanced with powerset -- a query
language with non-elementary complexity. Even in the setting of relational
queries, considering only finite instances leads to a query class
that is not known to be in PTIME \cite{kolaitisimpdef}.

\item \emph{Can we generate explicit definitions from specifications $\Sigma$, given
a proof that $\Sigma$   implicitly defines $O$ in terms
of $\vec I$ over finite instances?} It depends on what you mean by ``a proof'', but
in some sense there is no way to make sense of the question:
there is no complete proof system for such definitions. This follows
from the fact that the set of finitely
implicitly definable queries is not computably enumerable.

\item \emph{Is sticking to specifications $\Sigma$ that are implicit definitions
over all inputs -- as we do in this work -- too strong?} Here the answer can
not be definitive. But we know of no evidence that this is too restrictive
 in practice.
Implicit specifications suffice to specify any $\nrc$ query. And
the  answer to the first question above says that if we modified
the definition in the obvious way to get a larger class,  we would allow specification of
 queries that do not admit efficient evaluation. The answer to the second
question above says that we do not have a witness to membership in this
larger class.
\end{itemize}

\section{Capturing $\nrc$ expressions with $\deltazero$ formulas}

In the body of the paper we mentioned that for every $\nrc$ expression $E(\vec i)$,
we can create a $\deltazero$ expression $\phi_E(\vec i, o)$
 such that $E(\vec i)=o$ exactly when  $\phi_E(\vec i,  o)$
holds. These were called ``input-output specifications''.
This conversion is needed to reason about determinacy
of queries by views in our formalism.  If we start with the
views and queries in $\nrc$, we can use this transformation to get a corresponding
$\deltazero$ specification.

Note that $o =E$ is itself an $\nrc$ expression of Boolean  type.
The result then follows from the fact that every  $\nrc$ expression $E(\vec w)$ of Boolean type
can be converted to a $\deltazero$ formula $\phi(\vec w)$.
This conversion be done in polynomial time
for the ``composition-free'' syntax for $\nrc$ \cite{koch} mentioned
briefly in the body: in composition-free $\nrc$, we  restrict
$\bigcup \{E | x \in E'\}$ so that $E'$ must be a variable. One can normalize
every expression to be of this form. The normalization  is exponential,
and under complexity-theoretic hypotheses one cannot do better \cite{koch}.
The conversion from $\nrc$ Boolean expressions to $\deltazero$
 is given in full detail in \cite{benediktpradicpopl}, although it is very similar
to  results on simulating $\nrc$ with flat relations given in prior work (e.g. \cite{simulation}).

\section{Completeness of proof systems} \label{sec:app-complete}
In the body of the paper we mentioned that the completeness
of the proof systems is argued using a standard method.
We outline this for the higher-level system in Figure \ref{fig:dzcalc-2sided}.

One has a sequent $\Theta;\; \Gamma \seq \Delta$ that is not provable.
We want to construct a countermodel: one that satisfying all the formulas in $\Theta$ and $\Gamma$
but none of the formulas in $\Delta$.
We construct a  tree with $\Theta;\; \Gamma \seq \Delta$ at the root by iteratively
applying applicable inference rules in reverse: in ``proof search mode'', generating
subgoals from goals. We apply the  rules whenever possible in a given order,
enforcing some fairness constraints: a rule that is active must be 
eventually applied along an infinite search,
and if a choice of terms must be made (as with the $\exists$-R rule),
all possible choices of terms are eventually made in an application of the rule.
For example, if we have a disjunction $\rho_1 \vee \rho_2$ on the right, we may
immediately ``apply $\vee$-R'': we generate a subgoal where on the
right hand side we add $\rho_1, \rho_2$.
Finite branches leading to
a sequent that does not match the conclusion of any rule or
axiom are artificially extended to infinite branches by
repeating the topmost sequent.

By assumption, this process does not produce a proof,
 and thus we have
an infinite  branch $b$ of the tree. We create a model $M_b$ whose
elements are the variables that appear on the branch, where an element
inherits the type of its variable. The memberships
correspond to the  membership atoms that appear on the left of any
sequent in $b$, and also the atoms that appear negated on the right hand
side of any sequent.

We claim that $M_b$ is the desired countermodel. It suffices to show
that for every sequent $\Theta; \; \Gamma, \seq \Delta$ in $b$,
$M_b$ is a counterexample to the sequent: it satisfies the conjunction of
formulas on the left and none of the formulas on the right.
We prove this by induction on the logical complexity of the formula.
For atoms it is immediate by construction.
Each inductive step will involve the assumptions about inference rules
not terminating proof search.
For example, suppose for some sequent $b_i$ in $b$ of the above form,
 $\Delta$ contains $\rho_1 \vee \rho_2$, we want
to show that $M_b$ satisfies $\neg (\rho_1 \vee \rho_2)$. 
But we know that in some successor of $b$, we would have applied
$\vee$-R, and thus have a descendant with $\rho_1, \rho_2$ within
the left. By induction $M_b$ satisfies $\neg \rho_1$ and $\neg \rho_2$.
Thus $M_b$ satisfies $\neg (\rho_1 \vee \rho_2)$ as desired.
The other connectives and quantifiers are handled similarly.

\section{$\deltazero$ interpolation: proof sketch of Theorem \ref{thm:interpolationdeltafocus}}

We recall the statement:

\medskip

Let $\Theta$ be an $\in$-context and $\Gamma, \Delta$ finite sets of $\deltazero$ formulas.
Then from any proof of $\Theta; \; \Gamma \vdash \Delta$
we can compute in linear time a $\deltazero$ formula $\theta$ with $\FV(\theta) \subseteq \FV(\Theta, \Gamma) \cap \FV(\Delta)$,
such that $\Theta; \; \Gamma \vdash \theta$ and $\emptyset; \; \theta \vdash \Delta$

\medskip

Recall also that we claim this for both the higher-level $2$-sided system
and the $1$-sided system, where
the $2$-sided syntax is a ``macro'':  $\Theta; \; \Gamma \vdash \Delta$
is a shorthand for $\Theta \vdash \neg \Gamma, \Delta$, where $\neg \Gamma$
is itself a macro for dualizing connectives. 
Thus in the $1$-sided version, we are arbitrarily classifying some of the 
$\deltazero$ formulas as Left and the others as Right, and our interpolant
must be common according to that partition.

We stress that there are no new ideas needed in proving Theorem  \ref{thm:interpolationdeltafocus} --- unlike
for our main tool, the Parameter Collection Theorem, or our final result.
The construction for Theorem  \ref{thm:interpolationdeltafocus} proceeds exactly as in prior interpolation theorems 
for similar calculi \cite{takeuti:1987,bpt,smullyan}. Similar constructions
are utilized in works for query reformulation
in databases, so for a presentation geared towards
a database audience one can check   \cite{tomanweddell} or the later
\cite{ interpbook}.

We explain the argument for the higher-level $2$-sided system.
We prove a more general statement, where we partition the context and
the formulas 
on both sides of $\vdash$ into Left and Right.
So we have 
\[
\Theta_L \Theta_R;  \Gamma_L, \Gamma_R \vdash \Delta_L, \Delta_R
\]
And our inductive invariant is that
we will compute in linear time a $\theta$  such that:
\begin{align*}
\Theta_L; \Gamma_L \vdash  \theta, \Delta_L \\
\Theta_R; \Gamma_R, \theta \proves \Delta_R
\end{align*}
And we require that $FV(\theta) \subseteq FV(\Theta_L, \Gamma_L, \Delta_L) \cap
 FV(\Theta_R,\Gamma_R, \Delta_R)$.
This generalization 
is used to handle the negation rules, as we explain below.

We proceed by induction on the depth of the proof tree.

One of the base cases is where we have a trivial proof tree,
which uses rule (Ax) to derive:
\[
\Theta; \Gamma, \phi \proves  \phi, \Delta
\]
We do a case distinction on where the occurrences of $\phi$ sit in our partition.
Assume the occurrence on the left is in $\Gamma_L$ and the occurrence
on the right is in $\Delta_R$. 
Then we can take our interpolant $\theta$ to be $\phi$.
Suppose the occurrence on the left is $\Gamma_L$ and the occurrence on the right
is in $\Delta_L$. Then we can take $\theta$ to be $\bot$.
The other base cases are similar.

The inductive cases for forming the interpolant will work
``in reverse'' for each proof rule. That is,  if we used an inference rule
to derive sequent $S$ from sequents $S_1$ and $S_2$, we will partition
the sequents $S_1$ and $S_2$ based on the partition of $S$.
We will then apply induction to our partitioned
sequent for $S_1$ to get an interpolant $\theta_1$, and also
apply induction to our partitioned version
of $S_2$ to get an interpolant $\theta_2$.
We
then put them together to get the interpolant for the partitioned sequent
$S$. This ``putting together''
will usually reflect the semantics of the connective mentioned in the proof rule.

Consider the case where the last rule applied is the $\neg$-L rule: this is the case
that motivates the more general invariant involving partitions.
We have a partition of the final sequent $\Theta; \Gamma, \phi \proves  \Delta$.
We form a partition of the sequent $\Theta; \Gamma \proves \neg \phi, \Delta$
by placing  $\neg \phi$ on the same side (Left, Right) as $\phi$ was in the original
partition. We then get an interpolant $\theta$ by induction. We just
use $\theta$ for the final interpolant.

We  consider the inductive case for $\wedge$-R.  We have
two top sequents, one for each conjunct. We partition them in the obvious way:
each $\phi_i$ in the top is in the same partition that $\phi_1 \wedge \phi_2$ was
in the bottom. Inductively we take the interpolants
$\theta_1$ and $\theta_2$ for each sequent. We again do a case analysis based on
whether $\phi_1 \wedge \phi_2$ was in $\Delta_L$ or in $\Delta_R$. 

Suppose
$\phi_1 \wedge \phi_2$ was in $\Delta_R$, so $\Delta_R=\phi_1 \wedge \phi_2, \Delta'_R$.
Then we arranged that each $\phi_i$ was
in $\Delta_R$ in the corresponding top sequent. So we know 
that $\Theta_L; \Gamma_L \vdash  \theta_i, \Delta_L$
and $\Theta_R; \Gamma_R, \theta_i \proves \phi_i, \Delta'_R$ for $i=1,2$.
Now can set set the interpolant $\theta$ to be 
 $\theta_1 \wedge \theta_2$.

In the other case, $\phi_1 \wedge \phi_2$ was in $\Delta_L$, say $\Delta_L=\phi_1 \wedge \phi_2, \Delta'_L$.
Then we would arrange each $\phi_i$ to be ``Left'' in the corresponding top sequent, so we know that
$\Theta_L; \Gamma_L \vdash  \theta_i, \phi_i, \Delta'_L$
and $\Theta_R; \Gamma_R, \theta_i \proves \Delta_R$ for $i=1,2$.
We set $\theta=\theta_1 \vee \theta_2$ in this case.

With the $\exists$ rule, a term in the inductively-assumed
   $\theta'$ for the top sequent may become illegal for the
   $\theta$ for the bottom sequent, since it has a free variable that is not common. In this case, the term 
   in $\theta$ is replaced by a quantified variable, where the
   quantifier is existential or universal, depending on the
   partitioning, and bounded according to the requirements
   for deltazero formulas.

\pagebreak
\section{Details for the proof of Theorem~\ref{thm:mainflatcasebounded}, Nested Parameter Collection}

Recall the major result on collecting parameters stated in the body of the paper:

\mainflatcasebounded*

As we mentioned in the body of the paper,
this s a corollary of the following lemma.

\lemnrcparamcoll*

We now give the full details of the proof of the lemma.

We prove this by induction over the size of the proof of 
$\Theta_L, \Theta_R \vdash \Delta_L, \Delta_R,
\exists y \in_p r  ~ \forall z \in c.  ~ (\lambda(z) \equi \rho(z, y))$,
making a case distinction according to which
rule is applied last.
The way $\theta$ will be built will, perhaps unsurprisingly, be very reminiscent
of the way interpolations are normally constructed in standard proof systems
\cite{fittingbook,smullyan}.

For readability, we adopt the
following conventions:
\begin{itemize}
\item We write $\cG$ for the formula
$\exists y \in_p r  ~ \forall z \in c.  ~ (\lambda(z) \equi \rho(z, y))$
    and $\Lambda$ for the expression $\{ z \in c \mid \lambda(z) \}$.
\item For formulas and $\nrc$ expressions obtained by applying the
induction hypothesis, we use the names $\theta\IH$ and $E\IH$
    (or $\theta_1\IH, \theta_2\IH$ and $E_1\IH, E_2\IH$) when the induction hypothesis
is applied several times). In each subcase, our goal will be to build suitable
$\theta$ and $E$.
\item We will color pairs of terms, formulas and sets of  formulas
according to whether they are %
part of either $\colL{\Theta_L; \Delta_L}$
or $\colR{\Theta_R; \Delta_R}$ either at the start of the case analysis or when
we want to apply the induction hypothesis.
In particular, the last sequent of the proof
under consideration will be depicted as
\[ \colL{\Theta_L}, \colR{\Theta_R} \vdash \colL{\Delta_L}, \colR{\Delta_R}, \cG \]
\item Unless it is non-trivial, we leave checking that the free variables in our
proposed definition for $E$ and $\Theta$ are taken among variables of $C$ 
to the reader.
\end{itemize}

With these convention in mind, let us proceed.

\begin{itemize}
  \item If the last rule applied is the $\top$ rule, in both cases we are
    going to take $E \eqdef \emptyset$, but pick $\theta$ to be $\bot$ or $\top$
    according to whether $\top$ occurs in $\colL{\Delta_L}$ or $\colR{\Delta_R}$;
    we leave checking the details to the reader.
  \item If the last rule applied is the $\wedge$ rule, we have two cases according
    to the position of the principal formula $\phi_1 \wedge \phi_2$. In both
    cases, $E$ will be obtained by unioning $\nrc$ expressions obtained from the
    induction hypothesis, and $\theta$ will be either a disjunction or a conjunction.
    \begin{itemize}
      \item If we have $\colL{\Delta_L} = \colL{\phi_1 \wedge \phi_2, \Delta_L'}$,
        so that the proof has shape
        \begin{mathpar}
          \inferrule*{
            \colL{\Theta_L}, \colR{\Theta_R} \vdash \colL{\phi_1, \Delta_L'}, \colR{\Delta_R}, \cG
            \and
            \colL{\Theta_L}, \colR{\Theta_R} \vdash \colL{\phi_2, \Delta_L'}, \colR{\Delta_R}, \cG
          }
          {
            \colL{\Theta_L}, \colR{\Theta_R} \vdash \colL{\phi_1, \Delta_L'}, \colR{\Delta_R}, \cG
          }
        \end{mathpar}
        by the induction hypothesis, we have $\nrc$ expressions
        $E_1\IH$, $E_2\IH$ and formulas $\theta_1\IH, \theta_2\IH$ such that
        \[
          \begin{array}{c !\qquad!{\text{and}}!\qquad c}
            \colL{\Theta_L} \modelssem \colL{\phi_1, \Delta_L'},
            \theta_1\IH \vee \Lambda \in E_1\IH
            &
            \colR{\Theta_R} \modelssem \colR{\Delta_R}, \neg \theta_1\IH
\\
            \colL{\Theta_L} \modelssem \colL{\phi_2, \Delta_L'},
            \theta_2\IH \vee \Lambda \in E_1\IH
            &
            \colR{\Theta_R} \modelssem \colR{\Delta_R}, \neg \theta_2\IH
          \end{array}
        \]
        In that case, we take $E = E_1\IH \cup E_2\IH$ and $\theta \eqdef \theta_1\IH \vee \theta_2\IH$.
        Weakening the properties on the left column, we have
        \[
            \colL{\Theta_L} \modelssem \colL{\phi_i, \Delta_L'},
            \theta \vee \Lambda \in E\]
        for both $i \in \{1,2\}$, so we have
\[
            \colL{\Theta_L} \modelssem \colL{\phi_1 \wedge \phi_2, \Delta_L'},
            \theta \vee \Lambda \in E
\]
as desired. Since $\neg \theta = \neg \theta_1\IH \wedge \neg\theta_2\IH$, we get
        \[
          \colR{\Theta_R} \modelssem \colR{\Delta_R}, \neg \theta
        \]
        by combining both properties from the right column.
    \end{itemize}
  \item Suppose the last rule applied is $\vee$ with principal formula $\phi_1 \vee \phi_2$.
    Depending on whether
    $\colL{\Delta_L} = \colL{\phi_1 \vee \phi_2, \Delta_L'}$
    or
    $\colR{\Delta_R} = \colR{\phi_1 \vee \phi_2, \Delta_R'}$, the proof will
    end with one of the following steps
\[
\dfrac{
\colL{\Theta_L}, \colR{\Theta_R} \vdash
\colL{\phi_1, \phi_2, \Delta_L'},\colR{\Delta_R},\cG}
{\colL{\Theta_L}, \colR{\Theta_R} \vdash
\colL{\phi_1 \vee \phi_2, \Delta_L'},\colR{\Delta_R},\cG}
\qquad \text{or} \qquad
\dfrac{
\colL{\Theta_L}, \colR{\Theta_R} \vdash
\colL{\Delta_L'},\colR{\phi_1,\phi_2,\Delta_R},\cG}
{\colL{\Theta_L}, \colR{\Theta_R} \vdash
\colL{\Delta_L'},\colR{\phi_1 \vee \phi_2, \Delta_R},\cG}
\]
In both cases, we apply the inductive hypothesis according to the obvious
splitting of contexts and sets of formulas, to get a $\nrc$ definition $E\IH$
along with  a formula $\theta\IH$ that satisfy the desired semantic property.
We set $E \eqdef E\IH$ and $\theta \eqdef \theta\IH$.
\item Suppose the last rule applied is $\forall$ with principal formula $\forall x \in b. \phi$. As in the previous
case,  
    depending on  whether
    $\colL{\Delta_L} = \colL{\forall x \in b. \phi, \Delta_L'}$
    or
    $\colR{\Delta_R} = \colR{\forall x \in b. \phi, \Delta_R'}$, the proof will
    end with one of the following steps (assuming $y$ is fresh below)
\[
\dfrac{
\colL{\Theta_L, y \in b}, \colR{\Theta_R} \vdash
\colL{\phi[y/x], \Delta_L'},\colR{\Delta_R},\cG}
{\colL{\Theta_L}, \colR{\Theta_R} \vdash
\colL{\forall x \in b. \phi, \Delta_L'},\colR{\Delta_R},\cG}
\qquad \text{or} \qquad
\dfrac{
\colL{\Theta_L}, \colR{\Theta_R, y \in b} \vdash
\colL{\Delta_L'},\colR{\phi[y/x],\Delta_R},\cG}
{\colL{\Theta_L}, \colR{\Theta_R} \vdash
\colL{\Delta_L'},\colR{\forall x \in b. \phi, \Delta_R},\cG}
\]
In both cases, we again apply the inductive hypothesis according to the obvious
splitting of contexts and sets of formulas to get a $\nrc$ definition $E\IH$
and a formula $\theta\IH$ that satisfy the desired semantic property.
We set $E \eqdef E\IH$ and $\theta \eqdef \theta\IH$.
\item Now we consider  the case where the
last rule applied is $\exists$. Here we have two main subcases,
  according to whether $\cG$ is the main formula or not.
The case where the main formula of $\cG$ was sketched in the body
of the paper, but we repeat it in more detail here.
  \begin{itemize}
    \item If $\cG = \exists y \in_p r  ~ \forall z \in c.  ~ (\lambda(z) \equi \rho(z, y))$
      is the main formula, $\colL{\Delta_L}, \colR{\Delta_R}$ is necessarily existential-leading and
      the proof necessarily has shape 
      \[
        \text{
          \AXC{$\colL{\Theta_L}, \colR{\Theta_R}, \colL{x \in c} \vdash \colL{\Delta_L}, \colR{\Delta_R},
      \colR{\neg \rho(x,w)}, \colL{\lambda(x)},
      \cG$}
\myLeftLabel{$\vee$}
      \UIC{$\colL{\Theta_L}, \colR{\Theta_R}, \colL{x \in c} \vdash \colL{\Delta_L}, \colR{\Delta_R},
\rho(x, w) \imp \lambda(x),
      \cG$}
      \AXC{$\colL{\Theta_L}, \colR{\Theta_R}, \colL{x \in c} \vdash \colL{\Delta_L}, \colR{\Delta_R},
      \colL{\neg \lambda(x)}, \colR{\rho(x, w)},
      \cG$}
\myLeftLabel{$\vee$}
\UIC{$\colL{\Theta_L}, \colR{\Theta_R}, x \in c \vdash \colL{\Delta_L}, \colR{\Delta_R},
\lambda(x) \imp \rho(x, w),
      \cG$}
\myLeftLabel{$\wedge$}
\BIC{$\colL{\Theta_L}, \colR{\Theta_R}, x \in c \vdash \colL{\Delta_L}, \colR{\Delta_R},
\lambda(x) \equi \rho(x, w),
      \cG$}
\myLeftLabel{$\forall$}
\UIC{$\colL{\Theta_L}, \colR{\Theta_R} \vdash \colL{\Delta_L}, \colR{\Delta_R},
\forall z \in c.  ~ (\lambda(z) \equi \rho(z, w)),
      \cG$}
\myLeftLabel{$\exists$}
\UIC{$\colL{\Theta_L}, \colR{\Theta_R} \vdash \colL{\Delta_L}, \colR{\Delta_R}, \cG$}
\DisplayProof
        }\]
where $x$ is a fresh variable
So in particular, we have two strict subproofs with respective conclusions
      \[
        \colL{\Theta_L, x \in c}, \colR{\Theta_R, x \in c} \vdash \colL{\lambda(x), \Delta_L}, \colR{\neg \rho(x,w), \Delta_R},\cG
    \qquad \text{and} \qquad
      \colL{\Theta_L, x \in c}, \colR{\Theta_R, x \in c} \vdash \colL{\neg\lambda(x),\Delta_L}, \colR{\rho(x,w), \Delta_R},
      \cG
\]
Applying the inductive hypothesis, we obtain $\nrc$ expressions $E_1\IH$, $E_2\IH$
and formulas $\theta_1\IH, \theta_2\IH$ which contain free variables in $C \cup \{x\}$ such that all of the following hold
\begin{align}
&
  \colL{\Theta_L, x \in c} \modelssem \phantom{\neg}\colL{\lambda(x), \Delta_L}, \theta_1\IH \vee \Lambda  \in E_1\IH
\label{align:1} \\
  \text{and} \qquad &
  \colL{\Theta_L, x \in c} \modelssem \colL{\neg \lambda(x), \Delta_L}, \theta_2\IH \vee \Lambda  \in E_2\IH
  \label{align:2} \\
  \text{and} \qquad &
  \phantom{x \in c,}\colR{\Theta_R} \modelssem \colR{\neg\rho(x,w),\Delta_R}, \neg \theta_1\IH
  \label{align:3} \\
  \text{and} \qquad &
  \phantom{x \in c,} \colR{\Theta_R} \modelssem \phantom{\neg}\colR{\rho(x,w), \Delta_R}, \neg \theta_2\IH
  \label{align:4}
\end{align}
With this in hand, we set
\[ \theta \eqdef \exists x \in c. ~ \theta_1\IH \wedge \theta_2\IH \qquad \text{and} \qquad
         E \eqdef \left\{\left\{ x \in c \mid \theta_2\IH\right\}\right\} ~\cup ~ \bigcup\left\{ E_1\IH \cup E_2\IH \mid x \in c \right\}
      \]
Note in particular that the free variables of $E$ and $\theta$ are contained in $C$, since we bind $x$.
The bindings of $x$ have radically different meaning across the two main components 
$E_1 \eqdef \left\{\left\{ x \in c \mid \theta_2\IH\right\}\right\}$ and $ E_2 \eqdef \left\{ E_1\IH \cup E_2\IH \mid x \in c \right\}$
of $E = E_1 \cup E_2$. $E_1$ consists of a single definition corresponding to the restriction of $c$ to $\theta_2\IH$, and there $x$
plays the role of an element being defined. On the other hand, $E_2$ corresponds to the joining of all the definitions obtained inductively,
which may contain an $x \in c$ as a parameter. So  we have two families of potential definitions for $\Lambda$ indexed by $x \in c$ that we join together.
Now let us show that we have the desired semantic properties. First we need to show that
$E$ contains a definition for $\Lambda$ under the right hypotheses, i.e.,
\begin{align}
\colL{\Theta_L} \modelssem \colL{\Delta_L}, \exists x \in c. \; \theta_1\IH \wedge \theta_2\IH, \Lambda \in \left( 
\left\{\left\{ x \in c \mid \theta_2\IH\right\}\right\} ~\cup ~ \bigcup\left\{ E_1\IH \cup E_2\IH\mid x \in c \right\}\right)
\label{align:5}
\end{align}
which can be rephrased as
\begin{align*}
\colL{\Theta_L} \modelssem \colL{\Delta_L}, \exists x \in c. \; \theta_1\IH \wedge \theta_2\IH,
\Lambda = \left\{ x \in c \mid \theta_2\IH\right\}, \exists x \in c. \; \Lambda \in E_1\IH \cup E_2\IH
\end{align*}
Now concentrate on the statement
$\Lambda = \left\{x \in c \mid \theta_2\IH\right\}$.
It would follow from 
the two inclusions
$\Lambda \subseteq \left\{x \in c \mid \theta_2\IH\right\}$
and
$\left\{x \in c \mid \theta_2\IH\right\} \subseteq \Lambda$, so, recalling that
$\Lambda = \{ x \in c \mid \lambda(x)\}$, the overall conclusion would follow from
having
\begin{align*}
&
\colL{\Theta_L}, x \in c \modelssem \colL{\Delta_L}, \exists x \in c. \; \theta_1\IH \wedge \theta_2\IH,
  \lambda(x) \imp \theta_2\IH, \exists x \in c. \; \Lambda \in E_1\IH \cup E_2\IH
\\
  \text{and}\qquad
  &
\colL{\Theta_L}, x \in c \modelssem \colL{\Delta_L}, \exists x \in c. \; \theta_1\IH \wedge \theta_2\IH,
  \theta_2\IH \imp \lambda(x), \exists x \in c. \; \Lambda \in E_1\IH \cup E_2\IH
\end{align*}
Those in turn follow from the following two statements
\begin{align*}
&
\colL{\Theta_L}, x \in c \modelssem \colL{\Delta_L}, \theta_1\IH \wedge \theta_2\IH,
  \neg \lambda(x), \theta_2\IH, \Lambda \in E_1\IH \cup E_2\IH
\\
  \text{and}\qquad
  &
\colL{\Theta_L}, x \in c \modelssem \colL{\Delta_L}, \theta_1\IH \wedge \theta_2\IH,
  \neg\theta_2\IH, \lambda(x), \Lambda \in E_1\IH \cup E_2\IH
\end{align*}
which are straightforward consequences of~\ref{align:2} and~\ref{align:1} respectively.
This concludes the proof of~\ref{align:5}.

Now we only need to prove a final property, which is
\begin{align*}
\colR{\Theta_R} \modelssem \colR{\Delta_R}, \forall x \in c. \; \neg \theta_1\IH \vee \neg \theta_2\IH 
\end{align*}
which is equivalent to the validity of
\begin{align*}
\colR{\Theta_R}, x \in c \modelssem \colR{\Delta_R}, \neg \theta_1\IH \vee \neg \theta_2\IH
\end{align*}
which can be obtained by combining~\ref{align:3} and~\ref{align:4} with excluded
middle for $\rho(x,w)$.

\item If $\cG$ is not the main formula, then we have two subcases corresponding to
  whether the main formula under consideration occurs in $\colL{\Delta_L}$ or $\colR{\Delta_R}$.
  In both cases, the restriction on the $\exists$-rule imposed by the focused proof system
  is not particularly relevant and the block quantification hinders readability.
  Let us treat the equivalent of the case of a more general rule with a single quantifier in
  an auxiliary lemma.
  \begin{lemma}
    \label{lem:mainflatcaseboundedexleft}
    Suppose that we have a formula $\theta\IH$ and a $\nrc$ expression $E\IH$ such that
    \[ \colL{\Theta_L} \modelssem \colL{\phi[w/x], \exists x \in r. \; \phi, \Delta_L}, \theta\IH \vee \Lambda \in E\IH \qquad \text{and}
       \qquad \colR{\Theta_R} \modelssem \colR{\Delta_R}, \neg \theta\IH
    \]
    with $\FV(E\IH, \theta\IH) \subseteq \FV(\Theta_L,\Delta_L,\phi[w/x]) \cap \FV(\Theta_R, \Delta_R)$.
    and that we additionally that $w \in t$ is part of $\colL{\Theta_L}, \colR{\Theta_R}$.
    Then we can define $\theta$ and $E$ with
    $\FV(E, \theta) \subseteq \FV(\Theta_L,\Delta_L,\phi) \cap \FV(\Theta_R, \Delta_R)$ and
    \[ \colL{\Theta_L} \modelssem \exists x \in t. \; \phi, \Delta_L, \theta \vee \Lambda \in E \qquad \text{and}
       \qquad \colR{\Theta_R} \modelssem \colR{\Delta_R}, \neg \theta
    \]
  \end{lemma}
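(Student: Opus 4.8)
The plan is to handle this exactly as the existential-quantifier case of a standard Craig-interpolation construction, adapted to two features of our setting. The goal is to pass from the specialized hypothesis, in which $\phi[w/x]$ appears among the left disjuncts, back to the reconstituted principal formula $\exists x \in t.\phi$, while producing a $\theta$ and an $E$ whose free variables are common. The two adaptations are: the quantifier we may need to push into $\theta$ must be \emph{bounded}, so I must check its bound is a common variable; and the left obligation carries the extra $\nrc$ ``escape hatch'' $\Lambda \in E$, so that whenever a variable is bound in $\theta$ it must simultaneously be \emph{collected} in $E$. First I would split on which side the witness membership $w \in t$ lies, since this records whether the witness variable $w$ is common or local to one side.

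In the case $w \in t \in \colL{\Theta_L}$, I would take $E \eqdef E\IH$ and $\theta \eqdef \theta\IH$. The right obligation $\colR{\Theta_R} \modelssem \colR{\Delta_R}, \neg\theta$ is literally the hypothesis. For the left obligation, the only task is to drop the specialized disjunct $\phi[w/x]$ in favour of $\exists x \in t.\phi$: since $w \in t$ is available in $\colL{\Theta_L}$, the formula $\phi[w/x]$ entails $\exists x \in t.\phi$ on the left, so the first disjunct is absorbed into the second. The free-variable condition is immediate, since every variable of $E\IH, \theta\IH$ is right-visible by the inductive bound, and if $w$ itself occurs it is left-visible through $w \in t \in \colL{\Theta_L}$, hence common.

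The interesting case is $w \in t \in \colR{\Theta_R}$, where $w$ is right-local, i.e.\ does not occur in $\colL{\Theta_L}, \colL{\Delta_L}$, as guaranteed by the eigenvariable-side invariant tracked through the induction. Here I would set
\[ \theta \eqdef \forall w \in t.\; \theta\IH \qquad\text{and}\qquad E \eqdef \bigcup\{\, E\IH \mid w \in t \,\}, \]
binding $w$ on both sides; the bound $t$ is common, being the bound of the principal formula $\exists x \in t.\phi \in \colL{\Delta_L}$ and also of $w \in t \in \colR{\Theta_R}$. For the right obligation, $\neg\theta$ is $\exists w \in t.\,\neg\theta\IH$, which the specific witness $w \in t \in \colR{\Theta_R}$ together with the hypothesis $\colR{\Theta_R} \modelssem \colR{\Delta_R}, \neg\theta\IH$ immediately supplies. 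The crux is the left obligation: working in a nested relation satisfying $\colL{\Theta_L}$ in which none of $\colL{\Delta_L}$, $\exists x \in t.\phi$, nor $\Lambda \in E$ hold, I would instantiate the inductive left statement at each $a \in t$. It yields one of $\phi[a/x]$, $\exists x\in t.\phi$, a formula of $\colL{\Delta_L}$, $\theta\IH[a/w]$, or $\Lambda \in E\IH[a/w]$; the disjuncts $\exists x\in t.\phi$ and $\colL{\Delta_L}$ are false by assumption, $\phi[a/x]$ is impossible since $a \in t$ would make it witness $\exists x\in t.\phi$, and $\Lambda \in E\IH[a/w]$ is impossible since it would give $\Lambda \in \bigcup\{\, E\IH \mid w \in t \,\} = E$; hence $\theta\IH[a/w]$ holds for every $a \in t$, that is, $\forall w\in t.\,\theta\IH$.

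I expect the main obstacle to be precisely this interplay in the binding case. Unlike ordinary interpolation, where introducing $\forall w \in t$ into the interpolant is self-contained, here the universal binding of $\theta\IH$ is sound only because the ``missing'' instances are absorbed on the $\nrc$ side through the uniform collection $E = \bigcup\{\, E\IH \mid w \in t \,\}$; making these two moves line up, and checking that $\Lambda$ does not depend on $w$ (so that each membership $\Lambda \in E\IH[a/w]$ genuinely feeds a single union), is the delicate point. The remaining work — confirming the eigenvariable-side invariant that makes the two-way split exhaustive, treating the symmetric subcase where the principal formula lies in $\colR{\Delta_R}$, and lifting from a single bounded quantifier to the block $\exists$ rule by iteration — follows the familiar interpolation template, and I would relegate it to routine verification.
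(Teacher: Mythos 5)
Your proposal follows the paper's proof essentially verbatim: the same case split on whether $w \in t$ lies in $\colL{\Theta_L}$ or $\colR{\Theta_R}$, with the identity choice $\theta \eqdef \theta\IH$, $E \eqdef E\IH$ in the first case and the bind-and-collect choice $\theta \eqdef \forall w \in t.\;\theta\IH$, $E \eqdef \bigcup\{E\IH \mid w \in t\}$ in the second. The semantic verifications you spell out (absorbing $\phi[w/x]$ into $\exists x \in t.\,\phi$ via the left-side membership, and instantiating the inductive entailment at each $a \in t$ to justify the universal binding) are exactly what the paper leaves implicit, including the tacit assumption—asserted rather than established in the paper as well—that in the second case $w$ does not occur freely in $\colL{\Theta_L}, \colL{\Delta_L}, \phi$.
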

  \begin{proof}
    There are two subcases to consider:
    \begin{itemize}
      \item If $w \in t$ is part of $\colL{\Theta_L}$, we can conclude immediately by setting $\theta \eqdef \theta\IH$ and $E \eqdef E\IH$.
      \item Otherwise $w \in t$ is part of $\colR{\Theta_R}$, and it might be the case that $w$ is a free
        variable which is not part of $\colL{\Theta_L; \Delta_L, \exists x \in t. \phi}$, so $\theta\IH$ and $E\IH$ may feature $w$ as a free variable.
        In that case, we know that the free variables of $t$ are common and we can set $\theta \eqdef \forall x \in t. \; \theta[x/w]$ and $E \eqdef \bigcup
        \left\{ E \mid w \in t \right\}$.
    \end{itemize}
  \end{proof}
  A dual lemma where the existential formula is located on the right side of the partition can be proven in a completely analogous way. 
We can use those lemmas to prove the result by induction on the size of the quantifier block for the focused $\exists$ rule.
\end{itemize}
\item The case of the $=$ rule can be handled exactly as the $\top$ rule.
\item For the $\neq$ rule, we distinguish several subcases:
  \begin{itemize}
    \item If we have $\colL{\Delta_L} = \colL{y \neq_\ur z, \alpha[y/x], \Delta_L'}$
      or $\colR{\Delta_R} = \colR{y \neq_\ur z, \alpha[y/x], \Delta_R'}$, so that the
      last step has either shape
      \[
        \dfrac{
          \colL{\Theta_L}, \colR{\Theta_R} \vdash \colL{y \neq_\ur z, \alpha[y/x], \alpha[z/x], \Delta_L'
        }, \colR{\Delta_R'}, \cG
      }{
          \colL{\Theta_L}, \colR{\Theta_R} \vdash \colL{y \neq_\ur z, \alpha[y/x], \Delta_L'
        }, \colR{\Delta_R'}, \cG
      }
      \qquad \text{or} \qquad
        \dfrac{
          \colL{\Theta_L}, \colR{\Theta_R} \vdash \colL{\Delta_L'}, \colR{y \neq_\ur z, \alpha[y/x], \alpha[z/x], \Delta_R'
        }, \cG
      }{
        \colL{\Theta_L}, \colR{\Theta_R} \vdash \colL{\Delta_L'}, \colR{y \neq_\ur z, \alpha[y/x], \Delta_R'
        }, \cG
      }
    \]
    we can apply the induction hypothesis to obtain some $\theta\IH$ and $E\IH$ such that setting $\theta \eqdef \theta\IH$ and $E \eqdef E\IH$ solves this
      subcase; we leave checking the additional properties to the reader.
      \item Otherwise, if we have
        $\colL{\Delta_L} = \colL{y \neq_\ur z, \Delta_L'}$,
        $\colR{\Delta_R} = \colR{\alpha[y/x], \Delta'_R}$ and a last step of
        shape
      \[
        \dfrac{
          \colL{\Theta_L}, \colR{\Theta_R} \vdash \colL{y \neq_\ur z, \Delta_L'
        }, \colR{\alpha[y/x], \alpha[z/x],\Delta_R'}, \cG
      }{
          \colL{\Theta_L}, \colR{\Theta_R} \vdash \colL{y \neq_\ur z, \Delta_L'
        }, \colR{\alpha[y/x], \Delta_R'}, \cG
      }
    \]
      In that case, the inductive hypothesis gives $\theta\IH$ and $E\IH$ with
      free variables in $C \cup \{z\}$ such that
      \[
        \colL{\Theta_L} \modelssem \colL{y \neq_\ur z, \Delta_L'}, \theta\IH, \Lambda \in E\IH
      \qquad \text{and} \qquad \colR{\Theta_R} \modelssem \colR{\alpha[y/x], \alpha[z/x], \Delta_R'}, \neg \theta\IH \]
    We then have two subcases according to whether $z \in C$ or not
    \begin{itemize}
        \item If $z \in C$, we can take $\theta \eqdef \theta\IH \wedge y =_\ur z$ and $E \eqdef E\IH$. Their free variables are in $C$
          and we only need to check
      \[
        \colL{\Theta_L} \modelssem \colL{y \neq_\ur z, \Delta_L'}, \theta\IH \wedge y =_\ur z, \Lambda \in E\IH
      \qquad \text{and} \qquad \colR{\Theta_R} \modelssem \colR{\alpha[y/x], \Delta_R'}, \neg \theta\IH, y \neq_\ur z \]
which follow easily from the induction hypothesis.
        \item Otherwise, we take $\theta \eqdef \theta\IH[y/z]$ and $E \eqdef E\IH[y/z]$.
          In that case, note that we have $\alpha[z/x][y/z] = \alpha[y/x]$ (which would not be necessarily the case if $z$ belonged to $C$).
          This allows to conclude that we have
      \[
        \colL{\Theta_L} \modelssem \colL{y \neq_\ur z, \Delta_L'}, \theta\IH[y/z], \Lambda \in E\IH[y/z]
        \qquad \text{and} \qquad \colR{\Theta_R} \modelssem \colR{\alpha[y/x], \Delta_R'}, \neg \theta\IH[y/z] \]
          directly from the induction hypothesis.
      \end{itemize}
      \item Otherwise, if we have
        $\colL{\Delta_L} = \colL{\alpha[y/x], \Delta_L'}$,
        $\colR{\Delta_R} = \colR{y \neq_\ur z, \Delta'_R}$ and a last step of
        shape
      \[
        \dfrac{
          \colL{\Theta_L}, \colR{\Theta_R} \vdash
          \colL{\alpha[y/x], \alpha[z/x],\Delta_L'},
          \colR{y \neq_\ur z, \Delta_R'},
\cG
      }{
  \colL{\Theta_L}, \colR{\Theta_R} \vdash
          \colL{\alpha[y/x],\Delta_L'},
          \colR{y \neq_\ur z, \Delta_R'},
\cG
        }
    \]
      In that case, the inductive hypothesis gives $\theta\IH$ and $E\IH$ with
      free variables in $C \cup \{z\}$ such that
      \[
        \colL{\Theta_L} \modelssem \colL{\alpha[y/x],\alpha[z/x], \Delta_L'}, \theta\IH, \Lambda \in E\IH
      \qquad \text{and} \qquad \colR{\Theta_R} \modelssem \colR{y \neq_\ur z, \Delta_R'}, \neg \theta\IH \]
    We then have two subcases according to whether $z \in C$ or not
    \begin{itemize}
        \item If $z \in C$, we can take $\theta \eqdef \theta\IH \vee y \neq_\ur z$ and $E \eqdef E\IH$. Their free variables are in $C$
          and we only need to check
      \[
        \colL{\Theta_L} \modelssem \colL{\alpha[y/x], \Delta_L'}, \theta\IH \vee y \neq_\ur z, \Lambda \in E\IH
      \qquad \text{and} \qquad \colR{\Theta_R} \modelssem \colR{y \neq_\ur z, \Delta_R'}, \neg \theta\IH \wedge y =_\ur z \]
which follow easily from the induction hypothesis.
        \item Otherwise, we take $\theta \eqdef \theta\IH[y/z]$ and $E \eqdef E\IH[y/z]$.
          In that case, note that we have $\alpha[z/x][y/z] = \alpha[y/x]$ (which would not be necessarily the case if $z$ belonged to $C$).
          This allows to conclude that we have
      \[
        \colL{\Theta_L} \modelssem \colL{\alpha[y/x], \Delta_L'}, \theta\IH[y/z], \Lambda \in E\IH[y/z]
        \qquad \text{and} \qquad \colR{\Theta_R} \modelssem \colR{y \neq_\ur z, \Delta_R'}, \neg \theta\IH[y/z] \]
          directly from the induction hypothesis.
      \end{itemize}
  \end{itemize}
\item If the last rule applied is $\times_\eta$, the proofs has shape
\[
  \dfrac{\colL{\Theta_L}[\tuple{x_1,x_2}/x],
         \colR{\Theta_R}[\tuple{x_1,x_2}/x] \vdash 
         \colL{\Delta_L}[\tuple{x_1,x_2}/x],
         \colR{\Delta_R}[\tuple{x_1,x_2}/x], \cG[\tuple{x_1,x_2}/x]}{\colL{\Theta_L}, \colR{\Theta_R} \vdash \colL{\Delta_L}, \colR{\Delta_R}, \cG}
\]
and one applies the inductive hypothesis as expected to get $\theta\IH$ and
$E\IH$ such that
\[\colL{\Theta_L}[\tuple{x_1,x_2}/x] \modelssem
         \colL{\Delta_L}[\tuple{x_1,x_2}/x], \theta\IH, \Lambda[\tuple{x_1,x_2}/x] \in E
\qquad \text{and} \qquad
         \colR{\Theta_R}[\tuple{x_1,x_2}/x] \modelssem 
         \colR{\Delta_R}[\tuple{x_1,x_2}/x], \neg\theta\IH
\]
and with free variables included in $C$ if $x \notin C$ or
$\left(C \cup \{x_1, x_2\}\right) \setminus \{x\}$ otherwise.
In both cases, it is straightforward to check that taking
$\theta \eqdef \theta\IH[\pi_1(x)/x_1,\pi_2(x)/x_2]$ and
$E \eqdef E\IH[\pi_1(x)/x_1,\pi_2(x)/x_2]$ will yield the desired result.
\item Finally, if the last rule applied is the $\times_\beta$ rule, it has shape
\[
  \dfrac{
         \colL{\Theta_L}[x_i/x],
         \colR{\Theta_R}[x_i/x] \vdash 
         \colL{\Delta_L}[x_i/x],
         \colR{\Delta_R}[x_i/x],\cG'[x_i/x]}{
         \colL{\Theta_L}[\pi_i(\tuple{x_1,x_2})/x],
         \colR{\Theta_R}[\pi_i(\tuple{x_1,x_2})/x] \vdash 
         \colL{\Delta_L}[\pi_i(\tuple{x_1,x_2})/x],
         \colR{\Delta_R}[\pi_i(\tuple{x_1,x_2})/x],\cG'[\pi_i(\tuple{x_1,x_2})/x]}
\]
and we can apply the induction hypothesis to get satisfactory
$\theta\IH$ and $E\IH$ (moving from $\cG$ to $\cG'[x_i/x]$ is unproblematic, as we can assume the
lemma works for $\cG$ with arbitrary subformulas $\lambda$ and $\rho$); it is easy to see that we can set $\theta \eqdef \theta\IH$
and $E \eqdef E\IH$ and conclude.
\end{itemize}

This completes the proof of the lemma.

\section{Proofs of polytime admissibility}

The goal of this section is to prove most claims of polytime
admissibility made in the body of the paper, crucially those of
Section~\ref{sec:proof:ain}.
Recall that a rule
\[\dfrac{\Theta \vdash \Delta}{\Theta' \vdash \Delta'}\]
is \emph{polytime admissible} if we can compute in polynomial time 
a proof of the conclusion $\Theta' \vdash \Delta'$ from a proof of
the antecedent $\Theta \vdash \Delta$, and \emph{polytime derivable}
if there is a polynomial-size poof tree with dangling leaves labelled by the
antecedent.

Throughout this section we deal with the focused proof system of
Figure \ref{fig:dzcalc-1sided}.

\subsection{Standard rules}

Here we collect some useful standard sequent calculi rules, which are all polytime
admissible in our system. 
The arguments for these rules are straightforward.

\begin{lemma}
  \label{lem:wkadm}
  The following weakening rule is polytime admissible:
\[
  \dfrac{\Theta \vdash \Delta}{\Theta' \vdash \Delta, \Delta'}
\]
\end{lemma}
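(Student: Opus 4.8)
The plan is to prove weakening by structural induction on the given focused derivation of $\Theta \vdash \Delta$, threading the extra left-memberships $\Theta' \setminus \Theta$ and the extra right-formulas $\Delta'$ through every sequent of the proof. Adjoining membership atoms on the left is harmless for every rule of Figure~\ref{fig:dzcalc-1sided}: no rule is blocked by a larger $\in$-context, and in particular the side condition of the $\exists$ rule is unaffected, since a ``maximal specialization'' is by definition a specialization that is not existential-leading, a purely syntactic property of the specialized formula that enlarging $\Theta$ cannot destroy. So I would dispatch left-weakening first and then concentrate on weakening the right by $\Delta'$.

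For the right I would split on the polarity of the added formulas. Weakening by a context of purely $\pospolarity$ formulas (atoms and existentials) goes through by a direct induction: each rule simply carries the extra formulas along, and crucially the focus-phase rules $\exists$, $\times_\eta$ and $\times_\beta$ --- the only ones whose side condition demands an all-$\pospolarity$ side context $\Delta^{\pospolaritysuper}$ --- remain applicable, because adjoining $\pospolarity$ formulas (and substituting terms into them, as $\times_\eta$ and $\times_\beta$ do) preserves the all-$\pospolarity$ shape of the context. The general case then reduces to this one: at the root I would apply the invertible right-rules $\wedge$, $\vee$ and $\forall$ to the $\negpolarity$ formulas of $\Delta'$ until the adjoined context has been broken down into $\pospolarity$ formulas, and then graft the previous ($\pospolarity$-weakening) construction onto each resulting sequent, noting that the fresh-variable memberships produced when decomposing a $\forall$-formula of $\Delta'$ are just further left-weakenings, already handled.

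The step I expect to be the crux is this focusing restriction itself, together with the polynomial bound. A priori, decomposing the $\negpolarity$ part of $\Delta'$ branches, so one must check that the blow-up is benign: among the right-rules only $\wedge$ branches, so the decomposition tree of $\Delta'$ has at most $O(|\Delta'|)$ leaves, and at each leaf we paste a single weakened copy of the original derivation, giving total size $O(|\Delta'| \cdot |\pi|)$ for an input derivation $\pi$, hence polynomial. It is essential that the decomposition of $\Delta'$ is performed once, at the root, rather than re-done before each focus-phase rule: doing it once leaves an all-$\pospolarity$ residual that the base construction then carries through arbitrarily many $\exists$, $\times_\eta$ and $\times_\beta$ steps without any further inversion, whereas re-inverting at each such step would compound the branching and break the polynomial bound.
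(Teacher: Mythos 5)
Your overall plan --- thread the new membership atoms and $\pospolarity$ formulas through the input derivation, invert the $\negpolarity$ part of $\Delta'$ once at the root, and graft the threaded derivation at the resulting leaves --- is the natural one, and the paper itself offers no details here (it merely notes that ``the arguments for these rules are straightforward''). Your first phase is correct: enlarging $\Theta$ blocks no rule (maximality of a specialization is indeed a syntactic property of the specialized formula, so it survives a larger $\in$-context, modulo routine renaming of eigenvariables), and adjoining $\pospolarity$ formulas survives the $\pospolarity$-restricted rules since substitution preserves the top-level connective. The first genuine gap is your complexity count. You claim that since only $\wedge$ branches, the decomposition of $\Delta'$ has $O(|\Delta'|)$ leaves; but a $\wedge$ step copies every still-undecomposed formula of $\Delta'$ into \emph{both} premises, where each must be decomposed again, so leaf counts multiply rather than add. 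For $\Delta' = \{\alpha_1 \wedge \beta_1, \ldots, \alpha_k \wedge \beta_k\}$ with the $\alpha_i,\beta_i$ atoms (or for the single formula $\bigvee_{i=1}^k (\alpha_i \wedge \beta_i)$) the decomposition has $2^k$ leaves, each of which receives its own copy of the weakened input proof --- exactly the exponential blow-up you set out to avoid. Nor can this be fixed by scheduling: with $\Theta = \{a \in b\}$ and $\Delta = \{\exists x \in b.\, x =_\ur x\}$, every branch of \emph{any} focused proof of the weakened sequent must eliminate all $k$ conjunctions before the $\exists$ rule (the only route to the $=$ axiom) becomes applicable, so the duplication is forced by the calculus. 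A polynomial bound is only available for the cases your first phase already covers --- arbitrary left-weakening and $\pospolarity$ right-weakening --- which are, in fact, the only instances the paper ever uses (e.g.\ adding $y \in a$ in the proof of Lemma~\ref{lem:equivsettoequivalence}).

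The second gap is that your inversion step assumes every $\negpolarity$ formula can be ``broken down into $\pospolarity$ formulas'', which fails for $\top$ and $\bot$: in the paper's grammar both are $\negpolarity$ only, and neither has a right rule. $\top$ is harmless --- one closes the whole weakened sequent with the $\top$ axiom --- but you never say so; $\bot$ is fatal: it cannot be decomposed, and, not being $\pospolarity$, it cannot be threaded past the $\exists$, $\neq$, $\times_\eta$, $\times_\beta$ rules either. Concretely, $a \in b \seq \exists x \in b.\, x =_\ur x$ is focused-provable, yet no rule of Figure~\ref{fig:dzcalc-1sided} has a conclusion of the form $a \in b \seq \exists x \in b.\, x =_\ur x, \bot$, so weakening by $\{\bot\}$ is not even admissible under the definitions read literally. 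Any complete proof of the lemma must confront this, either by classifying $\top$ and $\bot$ as $\pospolarity$ alongside the atoms (presumably the intent of the paper) or by restricting $\Delta'$; your write-up does neither.
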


\begin{lemma}
  \label{lem:wedgeprojadm}
  The following inference, witnessing the invertibility of the $\wedge$ rule, is polytime admissible for both $i \in \{1,2\}$:
\[
  \dfrac{\Theta \vdash \phi_1 \wedge \phi_2, \Delta}{\Theta \vdash \phi_i, \Delta}\]
\end{lemma}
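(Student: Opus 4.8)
The plan is to argue by induction on the size (equivalently, the height) of the given focused proof $\pi$ of $\Theta \vdash \phi_1 \wedge \phi_2, \Delta$, producing from it a proof of $\Theta \vdash \phi_i, \Delta$ of no greater size, so that the transformation runs in linear time. At each step I would do a case analysis on the last rule applied in $\pi$, and in each surviving case the recursion either selects one of the immediate subproofs or reapplies the same rule to recursively transformed premises, everywhere rewriting the tracked occurrence of $\phi_1 \wedge \phi_2$ to $\phi_i$.

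The single genuinely informative case is the one where the last rule is $\wedge$ with $\phi_1 \wedge \phi_2$ as its principal formula: there $\pi$ already ends with premises proving $\Theta \vdash \phi_1, \Delta$ and $\Theta \vdash \phi_2, \Delta$, so I would simply return the $i$-th subproof, which strictly shrinks the proof. In every other case the last rule decomposes a formula of $\Delta$ (or is an axiom) while $\phi_1 \wedge \phi_2$ is merely carried along as a side formula. For the axioms $=$ and $\top$ the principal formula ($x =_\ur x$, respectively $\top$) must lie in $\Delta$ rather than be $\phi_1 \wedge \phi_2$, hence it persists in $\Theta \vdash \phi_i, \Delta$ and the same axiom closes the new sequent; for $\wedge$, $\vee$, and $\forall$ acting on a formula of $\Delta$ I would apply the induction hypothesis to each premise (noting that $\forall$ merely augments $\Theta$ with a fresh $y \in b$, which is harmless) and then reapply the rule.

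The step requiring the most care, and the one where the focusing discipline does the work for us, is checking that this case analysis is exhaustive, i.e.\ that no other rule can occur last in $\pi$. Here I would use that $\phi_1 \wedge \phi_2$ is alternative-leading ($\negpolarity$) and, being a conjunction, is not existential-leading ($\pospolarity$). Inspecting Figure~\ref{fig:dzcalc-1sided}, the rules $\exists$, $\neq$, $\times_\eta$ and $\times_\beta$ all require every side formula on the right to be $\pospolarity$ (their conclusions have the shape $\cdots \vdash \ldots, \Delta^{\pospolarity}$); since $\phi_1 \wedge \phi_2$ appears on the right of the conclusion of $\pi$ and is not $\pospolarity$, none of these can be the last rule, so those cases are vacuous. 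This is precisely the point where an unfocused proof could have slipped, say, an $\exists$ step beneath the conjunction and forced a more delicate commutation argument. Finally, since each node of $\pi$ contributes at most one node to the output (a reapplied rule, or a discarded rule replaced by a selected subproof), the resulting proof has size bounded by that of $\pi$, yielding the claimed polynomial — indeed linear — bound.
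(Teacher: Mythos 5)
Your proof is correct and is exactly the standard argument: induction on the given focused proof, returning a premise when $\wedge$ is applied to the tracked conjunction, recursing and reapplying the rule otherwise, with the polarity constraint ($\phi_1 \wedge \phi_2$ being $\negpolarity$ and not atomic) ruling out $\exists$, $\neq$, $\times_\eta$ and $\times_\beta$ as possible last rules. The paper gives no details here — this lemma is listed among the standard rules whose polytime admissibility arguments are declared straightforward — and your write-up is precisely the elaboration the authors intend, including the size-nonincreasing bound that yields the polynomial (indeed linear) running time.
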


\begin{lemma}
  \label{lem:allinvadm}
  The following, witnessing the invertibility of the $\forall$ rule, is polytime admissible:
  \[ \dfrac{\Theta \vdash \forall x \in t. \phi, \Delta}{\Theta, x \in t \vdash \phi, \Delta}\]
\end{lemma}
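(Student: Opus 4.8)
The plan is to prove the lemma by induction on a proof $\pi$ of the premise $\Theta \vdash \forall x \in t.\, \phi, \Delta$, transforming it into a proof of the conclusion $\Theta, x \in t \vdash \phi, \Delta$. The observation that keeps the case analysis short is a polarity constraint: the formula $\forall x \in t.\, \phi$ is $\negpolarity$ and is not atomic, so it is not $\pospolarity$ and in particular can never sit inside an all-$\pospolarity$ context $\Delta^{\pospolarity}$. Consequently the $\neq$, $\exists$, $\times_\eta$ and $\times_\beta$ rules of Figure~\ref{fig:dzcalc-1sided}, whose conclusions have an entirely $\pospolarity$ right-hand side, can never be the last rule of a sequent on whose right $\forall x \in t.\, \phi$ occurs. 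Hence the only rules met along the ``spine'' of $\pi$ where $\forall x \in t.\, \phi$ is still present are $=$, $\top$, $\wedge$, $\vee$ and $\forall$, and the induction need only treat those.

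I would then dispatch the cases as follows. If the last rule is the axiom $=$ or $\top$, then $\forall x \in t.\, \phi$ is a mere side formula and the same axiom proves $\Theta, x \in t \vdash \phi, \Delta$ directly. If the last rule is $\wedge$ or $\vee$ acting on some other principal formula in $\Delta$, then $\forall x \in t.\, \phi$ is carried unchanged into every premise; I apply the induction hypothesis to each premise and reapply the same rule. The $\forall$ rule splits into two subcases. When its principal formula is some \emph{other} bounded universal $\forall y' \in b.\, \psi$ in $\Delta$, its premise still carries $\forall x \in t.\, \phi$ on the right, so I invert that premise inductively and reapply $\forall$, observing that the eigenvariable introduced by the rule is fresh and may be taken distinct from $x$ and from the newly added hypothesis $x \in t$. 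When the principal formula is $\forall x \in t.\, \phi$ itself---the only way a bounded universal is produced on the right---the premise is $\Theta, y \in t \vdash \phi[y/x], \Delta$ for a fresh eigenvariable $y$; renaming $y$ to $x$ uniformly throughout that subproof yields exactly $\Theta, x \in t \vdash \phi, \Delta$, so the branch is complete and no further descent is required.

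The main obstacle is precisely this principal $\forall$ case, which rests on a substitution property for the calculus: replacing one variable by another uniformly in a proof must again be a proof. This is routine but demands the usual care with variable conventions---since $y$ is fresh it does not occur in $\Theta$, $t$ or $\Delta$, and by $\alpha$-renaming bound variables inside the subproof one ensures the substitution $[x/y]$ is capture-free and that no freshness side condition of an inner $\forall$ or $\times_\eta$ application is broken. With this substitution lemma in hand, the transformation visits each node of the spine once, reapplying a single rule or, at a $\forall$-leaf, performing one variable-for-variable renaming on a subproof; since such a renaming does not increase proof size, the output proof has size at most that of $\pi$ and is computed in polynomial (indeed linear) time, establishing polytime admissibility as claimed.
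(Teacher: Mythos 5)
Your proof is correct. The paper itself gives no argument for this lemma (it is filed under ``standard rules'' whose admissibility arguments are declared straightforward), and your induction is exactly the routine argument being alluded to: the polarity restriction on the conclusions of $\neq$, $\exists$, $\times_\eta$ and $\times_\beta$ in Figure~\ref{fig:dzcalc-1sided} confines the case analysis to $=$, $\top$, $\wedge$, $\vee$ and $\forall$, and the principal-$\forall$ case is discharged by an eigenvariable renaming, i.e.\ an instance of the substitution rule (Lemma~\ref{lem:substitutivity}), with the usual $\alpha$-renaming hygiene you note (the only detail left tacit is the degenerate case where a side formula produced by $\wedge$, $\vee$ or $\forall$ coincides with $\forall x \in t.\,\phi$ itself, since contexts are sets; this is patched by weakening, Lemma~\ref{lem:wkadm}, without affecting the polytime bound).
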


\begin{lemma}
  \label{lem:existsblockadm}
  The following generalization of the $\exists$ rule is polytime admissible:
  \[\dfrac{\Theta, \Theta_2 \vdash \phi', \phi, \Delta^\pospolaritysuper \quad \text{$\phi'$ is a spec~of $\phi$ wrt an ordering of $\Theta_2$}}{\Theta, \Theta_2 \vdash \phi, \Delta^\pospolaritysuper}
  \]
\end{lemma}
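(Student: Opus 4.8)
The plan is to prove the lemma by induction on the input proof of $\Theta, \Theta_2 \seq \phi', \phi, \Delta^\pospolaritysuper$, transforming it into a proof of $\Theta, \Theta_2 \seq \phi, \Delta^\pospolaritysuper$. The guiding idea is that the built-in $\exists$ rule of Figure~\ref{fig:dzcalc-1sided} introduces only \emph{maximal} specializations, whereas $\phi'$ may be partial, so we cannot re-derive $\phi'$ directly. Instead I exploit the fact that the original formula $\phi$ is retained next to $\phi'$ throughout the input derivation, and re-extract from $\phi$, on the fly, precisely those maximal specializations that the input proof obtains by specializing $\phi'$ further. Since specialization composes, any maximal specialization of $\phi'$ is also a maximal specialization of $\phi$ with respect to $\Theta, \Theta_2$, so each such re-extraction is a single legal application of the built-in $\exists$ rule.

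Concretely, I would first dispose of the case where $\phi'$ is not existential-leading: then $\phi'$ is already a maximal specialization of $\phi$ with respect to $\Theta, \Theta_2$, and one application of the built-in $\exists$ rule to the input proof yields the goal. When $\phi'$ is existential-leading, I case on the last rule. If it does not act on $\phi'$, I recurse on the premise(s) and reapply it: this covers the substitution rules $\neq$, $\times_\eta$, $\times_\beta$ (which preserve the specialization relationship between $\phi'$ and $\phi$), the $\exists$ rule acting on another positive formula, and the negative-phase rules $\wedge$, $\vee$, $\forall$ decomposing a transiently present maximal specialization. The crucial case is when the last rule is $\exists$ acting on $\phi'$ and producing a maximal specialization $\phi''$, with premise $\Theta, \Theta_2 \seq \phi'', \phi', \phi, \Delta^\pospolaritysuper$: I recurse to eliminate $\phi'$ from this premise, obtaining $\Theta, \Theta_2 \seq \phi'', \phi, \Delta^\pospolaritysuper$, and then apply the built-in $\exists$ rule to $\phi$ with maximal specialization $\phi''$ to recover $\Theta, \Theta_2 \seq \phi, \Delta^\pospolaritysuper$. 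To let the recursion go through I would strengthen the induction hypothesis so that the side context may transiently contain the $\negpolarity$ formula $\phi''$.

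The main obstacle is the focusing discipline: the built-in $\exists$ rule demands that every side formula be $\pospolarity$, yet the strengthened hypothesis carries the $\negpolarity$ maximal specialization $\phi''$. The observation that rescues the argument is that focusing forces $\phi''$ to be fully decomposed in the negative phase before any further existential step may be taken; hence at every point where the induction must apply the built-in $\exists$ rule --- namely at a subsequent specialization of $\phi'$ --- the transient $\negpolarity$ formula has already been consumed and positivity of the side context is restored. Since each eliminated specialization contributes at most one extra application of the built-in $\exists$ rule, the whole construction has linear overhead and runs in polynomial time.
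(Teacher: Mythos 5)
The paper itself gives no argument for this lemma: it is filed under ``Standard rules'' with the remark that the arguments are straightforward, so there is no official proof to compare against. Your overall architecture --- induction on the input derivation, with the crucial case being a built-in $\exists$ step on $\phi'$, removal of $\phi'$ from its premise by the induction hypothesis, and re-introduction of the resulting maximal specialization $\phi''$ directly from $\phi$ --- is the natural way to fill this gap, and your key observation is correct and well put: the focusing discipline guarantees that the transient $\negpolarity$ formula $\phi''$ must be fully decomposed by $\wedge$/$\vee$/$\forall$ steps before the input proof can fire any rule demanding an all-$\pospolarity$ context, so positivity is restored exactly where your construction needs to reapply the built-in $\exists$ rule.

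There is, however, a genuine gap in your treatment of the substitution rules, specifically $\times_\eta$. You claim parenthetically that these rules ``preserve the specialization relationship between $\phi'$ and $\phi$''; this fails precisely when the variable $x$ split by $\times_\eta$ is one of the \emph{instantiating} variables of the specialization, i.e.\ when the spec of $\phi$ uses an atom $x \in y$ of $\Theta_2$ with $x$ of product type (a typical situation: witnesses of bounded existentials ranging over a set of type $\sett(\ur \times \sett(\ur))$ have product type). In the premise of such a step the atom becomes $\tuple{x_1,x_2} \in y$, which is no longer a \emph{variable} membership atom, and $\phi'[\tuple{x_1,x_2}/x]$ instantiates the bound variable of $\phi$ by the term $\tuple{x_1,x_2}$; so it is not a specialization in the paper's sense, your inductive invariant is destroyed, and --- worse --- even a strengthened invariant allowing instantiation by terms cannot be closed off in your crucial case, because the built-in $\exists$ rule of Figure~\ref{fig:dzcalc-1sided} only instantiates along variable membership atoms. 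Repairing this needs a real additional idea: for instance, generalizing the statement to term instantiations and showing how these are discharged via $\times_\eta$/$\times_\beta$, or a simultaneous induction establishing admissibility of the unfocused, term-instantiating $\exists$-\textsc{R} rule of Figure~\ref{fig:dzcalc-2sided}. This is the same substitution-interference bookkeeping that forces the paper's proof of generalized congruence (Lemma~\ref{lem:gencongruencenary}) to be stated for tuples of \emph{terms} rather than variables and to carry out an explicit case analysis for the $\times_\beta$ rule; your proof needs, but does not contain, the analogous device.
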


\begin{lemma}
  \label{lem:substitutivity}
  The following substitution rule is polytime admissible:
  \[\dfrac{\Theta \vdash \Delta}{\Theta[t/x] \vdash \Delta[t/x]}
  \]
\end{lemma}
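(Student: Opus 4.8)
The plan is to prove the lemma by induction on the structure of the given proof of $\Theta \vdash \Delta$ in the focused calculus of Figure~\ref{fig:dzcalc-1sided}, pushing the substitution $[t/x]$ through each inference and reassembling a proof of $\Theta[t/x] \vdash \Delta[t/x]$ of essentially the same shape. First I would normalise the side-conditions: whenever the last rule is $\forall$ or $\times_\eta$, which introduce a fresh variable, I rename that variable so that it is distinct from $x$ and from every variable of $\FV(t)$; this avoids capture and lets the substitution commute with the binding. For the purely logical rules ($\wedge$, $\vee$, $\top$, and $\neq$) and for $\forall$, substitution commutes after this renaming: I apply the induction hypothesis to each premise and reapply the same rule to the substituted premises, using that $[t/x]$ distributes over the connectives and commutes (modulo renaming) with the inner formula-level substitutions appearing in the $\neq$ rule.

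Two families of cases require genuine work. The first is the reflexivity axiom $=$, whose conclusion has the form $\Theta \vdash v =_\ur v, \Delta$ for a variable $v$. If $v \neq x$ the substituted leaf is still an instance of $=$; but if $v = x$ the leaf becomes $\Theta[t/x] \vdash t =_\ur t, \Delta[t/x]$, which is no longer an axiom when $t$ is compound. Here I would replace the leaf by a short derivation of $t =_\ur t$: since $t$ is a $\ur$-term, it is built from a variable by projections and tupling, and $t =_\ur t$ is derivable by applying $\times_\eta$ to split the product-typed variables occurring in $t$, then $\times_\beta$ to contract the resulting projection redexes, and finally closing with $=$ on a variable. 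This subderivation is linear in $|t|$.

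The second, and the main obstacle, is the interaction of $[t/x]$ with the term-manipulation rules $\times_\eta$, $\times_\beta$ and, above all, with the focused $\exists$ rule. The difficulty is that the $\exists$ rule instantiates its block of existentials using \emph{variable} membership atoms of $\Theta$ (those enabling a specialization), whereas $[t/x]$ may turn a variable that served as the element or the bound of such an atom into the compound term $t$, destroying the variable-membership-atom shape the rule requires. The fix is again renormalisation: after substitution I use $\times_\eta$ to split any product-typed variable of $t$ into its components and $\times_\beta$ to contract projection redexes such as $\pi_i(\langle\cdot,\cdot\rangle)$, which restores the affected terms to variable form, so that the corresponding specialization -- and hence the $\exists$ step, invoked through the admissible block rule of Lemma~\ref{lem:existsblockadm} -- can be re-applied. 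The $\times_\eta$ and $\times_\beta$ cases of the induction are handled by the same bookkeeping, composing the rule's built-in term substitution with $[t/x]$.

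Finally I would account for the complexity: each node of the original proof contributes at most $O(|t|)$ nodes to the output, since both the reflexivity re-derivation and the renormalisation steps are bounded by $|t|$, so the resulting proof has size $O(|P|\cdot|t|)$ and is computed by a single bottom-up pass, giving the claimed polynomial-time bound. I expect the $\exists$/specialization case to be the crux: the delicate point is checking that renormalising with the product rules re-exposes exactly the witnesses used in the original specialization, without disturbing the side-condition that $\Delta^{\pospolaritysuper}$ contains only existential-leading formulas.
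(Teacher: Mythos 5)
Your induction plan is the natural one, and you correctly isolate the crux: the interaction of $[t/x]$ with the $\exists$ rule, whose specializations are defined only through \emph{variable} membership atoms. But the repair you propose for that case does not work, and in fact cannot work. The $\times_\eta$/$\times_\beta$ renormalisation only turns \emph{projection chains} back into variables: read bottom-up, $\times_\eta$ replaces a variable by a tuple of \emph{fresh} variables, and $\times_\beta$ contracts redexes $\pi_i(\tuple{x_1,x_2})$; no rule of the focused calculus collapses a tuple into a single variable. So when $x$ occurs as the \emph{element} of a membership atom used for a specialization and $t$ is a tuple (precisely the situation in the paper's own application of this lemma, where $\tuple{o_1,o_2}$ is substituted for $o$), the atom $x \in v$ becomes $\tuple{t_1,t_2} \in v$ and the $\exists$ step can never be re-instated. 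This is not a bookkeeping issue but a genuine failure of admissibility. Concretely, with $x, w$ of type $\ur\times\ur$, $a,b$ of type $\ur$ and $v$ of type $\sett(\ur\times\ur)$, the sequent
\[
x \in v \;\seq\; \exists w \in v.\; \pi_1(w) =_\ur \pi_1(w)
\]
is derivable ($\exists$ using the variable atom $x \in v$, then $\times_\eta$ on $x$, $\times_\beta$, and $=$), but its instance under $[\tuple{a,b}/x]$,
\[
\tuple{a,b} \in v \;\seq\; \exists w \in v.\; \pi_1(w) =_\ur \pi_1(w),
\]
is not derivable at all: the only atom in the context is not a variable membership atom, so the $\exists$ rule has no maximal specialization available; $\times_\eta$ applies only vacuously (no product-typed variable occurs), there is no redex for $\times_\beta$, and no other rule matches an existential-leading goal. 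So the step you flagged as delicate is exactly where the induction breaks, and no renormalisation can save it.

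The conclusion to draw is that the unrestricted statement is false, and any correct proof must restrict the substitution. The cases that are actually needed, and that your method does handle, are: variable-for-variable renamings; terms that are projection chains over a variable (your $\times_\eta$/$\times_\beta$ trick, applied with the induction hypothesis generalized to simultaneous substitutions); and tuples of \emph{fresh} variables substituted for a variable that never occurs as the \emph{element} of a membership atom --- this last case covers the paper's use in the product-type step of the main theorem, where the root $\in$-context is empty and $\forall$ only ever introduces fresh elements, so the substitution is absorbed at $\times_\eta$ nodes by renaming rather than pushed through any specialization. Two smaller points: your repair of the $=$ leaf applies $\times_\eta$/$\times_\beta$ in the presence of $\Delta[t/x]$, but those rules require the whole right-hand side to be existential-leading, so you must first derive $\Theta[t/x] \seq t =_\ur t$ in isolation and then invoke the admissible weakening of Lemma~\ref{lem:wkadm}; and note that the paper itself offers no argument to compare against (the lemma is filed under ``standard rules'' whose arguments are declared straightforward), which is presumably why the needed side condition on $t$ went unstated there as well.
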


\subsection{Admissibility of generalized congruence}

Recall the admissibility claim concerning the rule related to congruence:

\begin{restatable}{lemma}{gencongruence}
  \label{lem:gencongruence}
The following generalized congruence rule is polytime admissible:
\[ \dfrac{\Theta[t/x,t/y]\seq \Delta[t/x,t/y]}{\Theta[t/x,u/y] \seq \neg (t \equiv u), \Delta[t/x,u/y]}
  \]
\end{restatable}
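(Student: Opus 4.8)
The plan is to prove Lemma~\ref{lem:gencongruence} by induction on the type $T$ of the terms $t$ and $u$, at each step unfolding the macro $\equiv_T$ and reducing congruence at a compound type to congruence at its strictly smaller components; the genuine base case is $T = \ur$. Throughout I would freely invoke the already-established admissible rules — weakening (Lemma~\ref{lem:wkadm}), substitutivity (Lemma~\ref{lem:substitutivity}), and the invertibility rules (Lemmas~\ref{lem:wedgeprojadm} and~\ref{lem:allinvadm}) — to reshape sequents into the form needed to apply an inductive hypothesis, and I would read the intended semantics of the rule as ``replace $t$ by the extensionally-equivalent $u$ throughout the sequent, discharging the obligation $\neg(t \equiv u)$ on the right''.

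For $T = \ur$ we have $\neg(t \equiv_\ur u) = t \neq_\ur u$, so the rule is exactly admissibility of replacement of $\ur$-equals in an arbitrary sequent. Since $=_\ur$ is genuine identity of Ur-elements, the replacement is sound even inside the native membership atoms of $\Theta$; I would obtain it by lifting the primitive $\neq$ rule — which performs such a replacement in a single atomic formula — to all of $\Theta$ and $\Delta$ by a secondary induction on proof structure, where Boolean connectives and bounded quantifiers commute with the replacement and atomic leaves are discharged by $\neq$. The degenerate case $T = \unit$ is immediate: $t \equiv_\unit u = \top$, so $\neg(t\equiv u) = \bot$, and since a $\unit$-typed subterm can never reach an $\ur$-atom it is interchangeable, so after normalising every $\unit$-term to $()$ the premise and conclusion coincide up to a vacuous $\bot$ on the right removed by weakening. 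For $T = T_1 \times T_2$, unfolding gives $\neg(t\equiv u) = \neg(\pi_1 t \equiv \pi_1 u) \vee \neg(\pi_2 t \equiv \pi_2 u)$; using $\times_\eta$ and $\times_\beta$ to expose each occurrence of $t$ as $\langle \pi_1 t, \pi_2 t\rangle$ and exploiting that pairs are extensional, I reduce to the strictly smaller types $T_1$ and $T_2$ by chaining two applications of the inductive hypothesis, one per component.

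The hard case, and the one I expect to be the main obstacle, is $T = \sett(T')$. Here $\equiv_{\sett(T')}$ is defined through bounded quantification and membership-up-to-equivalence $\memmac$, not through equalities of projections, so there is no purely syntactic substitution of $u$ for $t$. A $\sett(T')$-term can reach neither an $\ur$-atom nor any $\deltazero$ atom, so occurrences of $t$ matter only (i) as a quantifier bound $\forall w \inq t$ or $\exists w \inq t$ inside a $\deltazero$ formula, and (ii) as a set or element inside the native membership atoms of $\Theta$, which are the atoms that drive the $\exists$-instantiations of the proof. For (i) I would exploit the two subset conjuncts of $t \equiv u$: a witness $w \inq t$ is transferred through $t \subseteq u$ to some $w'$ with $w \equiv_{T'} w'$, and the body is then repaired by the inductive hypothesis at the \emph{strictly smaller} element type $T'$ (symmetrically via $u \subseteq t$). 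The genuinely delicate point is (ii): native membership is not extensional, so one cannot substitute $u$ for $t$ in $\Theta$ and appeal to the premise directly; instead the proof must re-drive each existential instantiation that such a membership feeds, transferring the instantiated witness across $t \equiv u$ and again repairing the resulting body by congruence at $T'$. Soundness of all of this rests on the extensional invariance of $\deltazero$ formulas — precisely the property that makes $\modelssem$ agree with $\models$ — which guarantees that the repaired goals are equivalent to the originals.

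Finally, polynomiality follows because the primary recursion is over the fixed, polynomially-sized type $T$ and within each type only a secondary induction over the given proof is used: each rule case reduces to a constant number of calls at strictly smaller element types together with polynomial-time sequent surgery via the previously-established admissible rules, so the output proof stays polynomial in the sizes of the input proof and of $T$.
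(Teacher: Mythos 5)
You correctly identify the central obstacles (witness transfer across $t \equiv u$, non-extensionality of native membership), but your induction scheme --- primary induction on the type, with the lemma stated for a single pair $(t,u)$ and a secondary induction on the proof inside each type case --- is too weak to push through the $\exists$ case, and this is precisely where the paper's proof does something you are missing. The replaced term need not occur only as a quantifier bound or as the set side of a membership atom: it can sit inside the \emph{element} of a membership atom, e.g. $\langle t, s\rangle \in B$ with $B$ of type $\sett(\sett(T') \times U)$. When such an atom feeds an $\exists$-instantiation, the premise proof creates fresh occurrences of $\langle t,s\rangle$ at the witness positions, and the repair pair $(\langle t,s\rangle,\langle u,s\rangle)$ lives at type $\sett(T')\times U$, which is \emph{not} smaller than $\sett(T')$; so your primary IH is unavailable there, and your secondary IH, stated for a fixed single pair and a fixed occurrence pattern, does not cover the freshly created occurrences either. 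What is needed is an induction statement that may \emph{grow} the collection of replaced pairs/occurrences as one passes upward through $\exists$ nodes. That is exactly the paper's device: it strengthens the claim to simultaneous congruence for an arbitrary tuple of term pairs (the formula sets $\cE_{\vv t, \vv u}$ of Lemma~\ref{lem:gencongruencenary}), proves it by a \emph{single} induction over the proof (not over the type), and at each transfer node simply extends the tuple by the pair (witness, fresh variable) before invoking the IH on the subproof; Lemma~\ref{lem:cEder} then folds $\cE_{t,u}$ into $\neg(t \equiv u)$ at the very end.

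Even in the cases your scheme does reach (bound-side transfers, repair at the element type $T'$), your polynomial-time claim does not follow from the argument you give. Your repair applies the full lemma at type $T'$ to the \emph{output} of the secondary recursion --- an already transformed, grown proof --- and these repair passes nest: along a branch with $k$ transfer nodes you obtain a $k$-fold composition of proof transformations, with further compositions at type depth below. "A constant number of calls at strictly smaller element types" composed unboundedly many times is not polynomial in general; to rescue it one must show each pass is additive (never duplicates subproofs, only adds a bounded set of formulas per sequent plus constant-size gadgets) and that the passes spawned across all type levels do not multiply, neither of which you address. The paper's tuple formulation sidesteps this entirely: every node of the input proof is visited exactly once, and each sequent grows only by the accumulated $\cE$-formulas (one pair per transfer node on the path to the root), so the polynomial bound is immediate.
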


Recall that in a two-side reading  of this, the hypothesis
is 
$t \equiv u;\Theta[t/x,u/y] \seq \Delta[t/x,u/y]$.
So the rule says that if we $\Theta$ entails $\Delta$ where both
contain $t$, then if we assume $t \equiv u$ and substitute some occurrences of $t$ with
$u$ in $\Theta$, we can conclude the corresponding substitution of $\Delta$.

To prove Lemma \ref{lem:gencongruence} in the case where the terms $t$ and $u$ are of type
$\sett(T)$, we will need
a more general statement.
We are going to generalize the statement to treat tuples of terms and
use $\pospolarity$ formulas instead of $\neg (t \equiv u)$ to simplify the inductive
invariant.

Given two terms $t$ and $u$ of type $t$, define by induction the set of formulas
$\cE_{t,u}$:
\begin{itemize}
    \item If $T = \ur$, then $\cE_{t,u}$ is $t \neq_\ur u$ 
    \item If $T = T_1 \times T_2$, then
      $\cE_{t,u}$ is $\cE_{\pi_1(t),\pi_1(u)}, \cE_{\pi_2(t),\pi_2(u)}$
    \item If $T = \sett(T')$, $\cE_{t,u}$ is $\neg (t \subseteq_T u), \neg(u \subseteq_T t)$
\end{itemize}

The reader can check that $\cE_{t,u}$ is essentially $\neg (t \equiv u)$.

\begin{lemma}
\label{lem:cEder}
The following rule is polytime admissible:
  \[
    \dfrac{\Theta \vdash \Delta, \cE_{t,u}}{\Theta \vdash \Delta, \neg (t \equiv u)}
  \]
\end{lemma}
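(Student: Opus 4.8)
The plan is to prove the admissibility by induction on the type $T$ of the terms $t$ and $u$, exploiting the fact that $\neg(t \equiv_T u)$ is built from the formulas listed in $\cE_{t,u}$ using only top-level disjunctions. Since the right-hand side of a sequent is read disjunctively, the premise $\Theta \vdash \Delta, \cE_{t,u}$ already asserts $\neg(t \equiv u)$ semantically; the only work is to rebuild the syntactic $\vee$-structure by means of the $\vee$ rule of Figure~\ref{fig:dzcalc-1sided}, which takes $\Theta \vdash \phi_1, \phi_2, \Delta$ to $\Theta \vdash \phi_1 \vee \phi_2, \Delta$. Accordingly I would define the proof transformation by recursion on $T$, following exactly the recursion that defines $\cE_{t,u}$.

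The easy cases come first. The base case $T = \ur$ is immediate: both $\neg(t \equiv_\ur u)$ and $\cE_{t,u}$ are the single formula $t \neq_\ur u$, so the premise and conclusion coincide and the identity transformation suffices. The degenerate case $T = \unit$, where $\neg(t \equiv u)$ unfolds to $\bot$, is trivial and handled by weakening (Lemma~\ref{lem:wkadm}). For $T = \sett(T')$ the set $\cE_{t,u}$ is the two-element list $\neg(t \subseteq_{T'} u), \neg(u \subseteq_{T'} t)$, whereas $\neg(t \equiv_{\sett(T')} u)$ unfolds to $\neg(t \subseteq_{T'} u) \vee \neg(u \subseteq_{T'} t)$; hence a single application of the $\vee$ rule to the premise already yields the conclusion. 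Crucially no recursion into $T'$ occurs here, mirroring the fact that $\cE$ does not descend past set constructors.

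The product case $T = T_1 \times T_2$ is where the induction hypothesis is genuinely used, and is the step I expect to require the most care. Here $\cE_{t,u}$ is the concatenation $\cE_{\pi_1(t),\pi_1(u)}, \cE_{\pi_2(t),\pi_2(u)}$, while $\neg(t \equiv u)$ unfolds to $\neg(\pi_1(t) \equiv \pi_1(u)) \vee \neg(\pi_2(t) \equiv \pi_2(u))$. Starting from the premise, I would first invoke the hypothesis at the strictly smaller type $T_1$, treating $\Delta, \cE_{\pi_2(t),\pi_2(u)}$ as the ambient right context, to collapse $\cE_{\pi_1(t),\pi_1(u)}$ into $\neg(\pi_1(t) \equiv \pi_1(u))$; then invoke it at $T_2$, now with $\Delta, \neg(\pi_1(t) \equiv \pi_1(u))$ as context, to collapse the second block; a final $\vee$ rule then combines the two negated equivalences into $\neg(t \equiv u)$. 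The main obstacle is purely bookkeeping: confirming that the two applications of the hypothesis see disjoint portions of the right-hand side as their respective $\cE$-blocks, while the residual context $\Delta$ is threaded through unchanged, so that the output sequent is exactly $\Theta \vdash \Delta, \neg(t \equiv u)$. For the complexity bound I would observe that the recursion tracks the structure of $T$, halting at $\ur$, $\unit$ and set constructors, so the number of added rule applications is linear in the size of $T$ and the accumulated projection wrappers $\pi_i(\cdot)$ stay bounded by the product-nesting depth; the constructed proof therefore exceeds the input proof by only a polynomial amount and is computable in polynomial time.
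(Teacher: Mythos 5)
Your proof is correct and takes exactly the paper's approach: the paper's entire proof of this lemma is the single line ``Straightforward induction over $T$,'' and your argument is a correct fleshing-out of that induction, with the right case split (identity at $\ur$, one $\vee$ rule at $\sett(T')$ with no descent into $T'$, and two uses of the induction hypothesis plus a final $\vee$ rule at products, threading the context correctly). The only cosmetic remark is that the paper never defines $\cE_{t,u}$ at type $\unit$, so your weakening case handles a situation the paper's definition leaves implicit rather than one it requires.
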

\begin{proof}
  Straightforward induction over $T$.
\end{proof}

Since we will deal with multiple equivalences, we will adopt vector notation
$\vv t = t_1, \ldots, t_n$ and $\vv x = x_1, \ldots, x_n$ for lists of terms and
variables. Call $\cE_{\vv t, \vv u}$ the union of the $\cE_{t_i,u_i}$.
We can now state our more general lemma:

\begin{lemma}
  \label{lem:gencongruencenary}
  The following generalized $n$-ary congruence rule for set variables is polytime admissible:
  \[
    \dfrac{\Theta[\vv t/\vv x, \vv t /\vv y]\seq \Delta[\vv t/\vv x,\vv t/\vv y]}{
   \Theta[\vv t/\vv x, \vv u/\vv y]\seq
  \Delta[\vv t/\vv x, \vv u/\vv y], \cE_{\vv t, \vv u}
}\]
\end{lemma}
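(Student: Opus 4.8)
The plan is to prove the lemma by a main induction on the total size of the types of the tuple $\vv t$ (equivalently $\vv u$), with an inner induction on the supplied proof of $\Theta[\vv t/\vv x, \vv t/\vv y] \seq \Delta[\vv t/\vv x, \vv t/\vv y]$. I would first dispose of product types as a preprocessing step: using the $\times_\eta$ and $\times_\beta$ rules together with Lemma~\ref{lem:substitutivity}, any $t_i$ of type $T_1\times T_2$ can be split into its projections, and by the product clause defining $\cE$ this replaces the pair $(t_i,u_i)$ by the two pairs of components while strictly decreasing the type measure. Hence I may assume every $t_i,u_i$ is of type $\ur$ or of set type $\sett(T')$. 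A point to be careful about throughout the inner induction is that, once the substitution $[\vv t/\vv x,\vv t/\vv y]$ has been applied, an occurrence of $t_i$ coming from $x_i$ is indistinguishable from one coming from $y_i$; I would therefore run the induction guided by the template $(\Theta,\Delta,\vv x,\vv y)$, which records exactly which occurrences of each $t_i$ are $\vv y$-images and hence are to be rewritten to $u_i$, and I would check that this marking is preserved by each proof rule (fresh variables are unmarked, context atoms inherit their marking, and so on).

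For the routine cases I would observe that every rule whose principal object is not one of the marked set variables commutes with the rewriting. For the propositional rules, for $\forall$/$\exists$ bounded by a term not among the marked $t_i$, and for the $\times$-rules, I apply the inner hypothesis to the immediate subproof(s) and reapply the same rule, using weakening (Lemma~\ref{lem:wkadm}) to keep the formulas of $\cE_{\vv t,\vv u}$ present on every branch. An atom mentioning a marked Ur-position is handled by the built-in $\neq$ rule, which is precisely congruence at sort $\ur$: the literal $t_i\neq_\ur u_i$ furnished by $\cE_{\vv t,\vv u}$ (together with its symmetric consequence) lets me derive the goal with $t_i$ rewritten to $u_i$ from the one with $t_i$, after a weakening. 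This also settles the base case in which all types are $\ur$, since there every nonstructural rule acts on atoms.

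The hard part will be the cases where the last rule is $\forall$ or $\exists$ with a bound that is a marked set variable, so that the conclusion quantifies over $u_i$ while the premise quantifies over $t_i$. I cannot mirror the rule directly, because membership in $u_i$ and in $t_i$ are not the same formula; instead I exploit that $\cE_{\vv t,\vv u}$ contains $\neg(t_i\subseteq_{\sett(T')}u_i)$ and $\neg(u_i\subseteq_{\sett(T')}t_i)$, which on the left of the sequent amount to the assumption $t_i\equiv_{\sett(T')}u_i$. For a $\forall$-step, after introducing a fresh $w$ with $w\in u_i$, the inclusion $u_i\subseteq t_i$ gives $w\memmac t_i$, so there is a $w'\in t_i$ with $w\equiv_{T'}w'$; I then instantiate the premise subproof at $w'$ (via invertibility of $\forall$, Lemma~\ref{lem:allinvadm}, and Lemma~\ref{lem:substitutivity}), transform it by the inner hypothesis, and finally rewrite $w'$ to $w$ in the body using congruence at the strictly smaller element type $T'$, which is exactly the outer inductive hypothesis. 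The $\exists$-step is dual: I use $t_i\subseteq u_i$ to transport the chosen witness from $t_i$ into $u_i$ and the generalized $\exists$ rule (Lemma~\ref{lem:existsblockadm}) to instantiate the quantifier block at the transported witness, again closing with element-type congruence. The symmetric inclusions in $\cE$ are what make both directions available.

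To finish, passing from the $\cE$-form to the $\neg(t\equiv u)$-form of Lemma~\ref{lem:gencongruence} is immediate by Lemma~\ref{lem:cEder}. I expect the main obstacle to be exactly the interaction of the two inductions in the set-quantifier cases, where each step invokes element-type congruence on a subproof: I must verify this does not cause a blow-up. Since the types are fixed, the macros $\subseteq$, $\memmac$ and $\equiv_{T'}$ unfold to formulas of constant size, so each proof node expands into a bounded number of rule applications plus one call to congruence at the strictly smaller type $T'$. The recursion on types has constant depth and each level is a transformation that is polynomial in the proof it receives; composing a constant number of such polynomial-time transformations keeps the overall output polynomial in the size of the input proof, which is what polytime admissibility requires.
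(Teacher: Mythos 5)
Your plan can be pushed through as a proof of plain admissibility, but it does not establish what the lemma actually claims, namely \emph{polytime} admissibility, and your complexity argument is where it breaks. At each quantifier node whose bound is a marked set variable you (i) transform the subproof by the inner induction hypothesis and then (ii) run the element-type congruence transformation over that \emph{already transformed} subproof. So the lower-type transformation is not composed ``a constant number of times'': it is invoked once per hard node, applied to the output of the recursion, and these invocations nest along every branch. A single congruence pass has, in general, multiplicative overhead --- it inserts $\cE$-formulas into sequents and replaces axiom and quantifier nodes by small derivations, so its output is a constant factor $c>1$ larger than its input --- and nesting such passes along a chain of $k$ hard nodes yields size of order $c^k$, exponential in the proof depth. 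The paper's proof is organized precisely to avoid this: it is a \emph{single} induction on the proof, and the $n$-ary form of the statement is what makes the induction hypothesis strong enough. In the hard $\exists$ case the tuples $\vv t,\vv u$ are \emph{extended} by a new pair (the instantiating witness $a'$ and a fresh variable $z'$), the one induction hypothesis is applied exactly once to the subproof with these extended tuples, and the transport is then finished by a constant number of derivation steps on top, using $\neg(t_i \subseteq u_j)$ from $\cE_{\vv t,\vv u}$ together with Lemma~\ref{lem:cEder}. Every subproof is processed once, with additive per-node overhead, which is what gives the polynomial bound. This tuple-extension idea is the concrete ingredient your proposal is missing.

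Two further points. First, your $\forall$ case is needlessly hard: when the bound of a universal is a marked variable, the fresh membership atom $y \in t_i$ introduced by the rule lives in the $\in$-context $\Theta$ and is itself rewritten to $y \in u_i$ by the induction hypothesis, so the case is routine (the paper lists $\forall$ among the straightforward cases); the genuine difficulty is confined to $\exists$, where the witness's context atom and the existential's bound can land on opposite sides of the rewriting ($t_i$ versus $u_j$). Second, your transport steps instantiate an inclusion from $\cE$ via the $\exists$ rule at a point where the sequent still contains the $\negpolarity$ formula $\psi[w/x]$; the focused $\exists$ rule forbids this, so you would need the admissibility of the unrestricted existential rule --- yet another proof-rewriting pass feeding the same blow-up. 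The paper keeps $\cE_{\vv t,\vv u}$ and $\cE_{a',z'}$ in their $\pospolarity$ form (rather than consolidating them into $\neg(t\equiv u)$) exactly so that each $\exists$ step in its construction is legal in the focused calculus.
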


\begin{proof}[Proof of Lemma~\ref{lem:gencongruencenary}]
We proceed by induction over the proof of $\Theta[\vv t/\vv x, \vv t/\vv y] \seq \Delta[\vv t/\vv x,\vv u/\vv y]$,
\begin{itemize}
\item If the last rule applied is the $=$ rule, i.e. we have
  \[ \dfrac{}{\Theta[\vv t/\vv x, \vv t/\vv y] \vdash
    a[\vv t/\vv x, \vv t/\vv y] =_\ur b[\vv t/\vv x, \vv t/\vv y],
    \Delta[\vv t/\vv x, \vv t/ \vv y]}\]
    and $a[\vv t/\vv x, \vv t/\vv y] = b[\vv t/\vv x, \vv t/\vv y]$, with $a,b$ variables.
    Now if $a$ and $b$ are equal, or if they belong both to either $\vv x$ or $\vv y$,
    it is easy to derive the desired conclusion with a single application of the $=$ rule.
    Otherwise, assume $a = x_i$ and $b = y_j$ (the symmetric case is handled similarly).
    In such a case, we have that $\cE_{\vv t, \vv u}$ contains $t_i \neq_\ur t_j$. So the desired
    proof
  \[ \dfrac{}{\Theta[\vv t/\vv x, \vv t/\vv y] \vdash
    t_i =_\ur u_j,
    \Delta[\vv t/\vv x, \vv t/ \vv y], \cE_{\vv t, \vv u}, t_i \neq_\ur t_j}\]
    follows from the polytime admissibility of the axiom rule.
\item Suppose the last rule applied is the $\top$ rule:
  \[ \dfrac{}{\Theta[\vv t/\vv x, \vv t/\vv y] \vdash \top, \Delta[\vv t/\vv x, \vv t/ \vv y]}\]
    Then we do not need to  apply the induction hypothesis. Instead we can immediately
apply  the $\top$ rule to obtain
    \[ \dfrac{}{\Theta[\vv t/\vv x, \vv u/\vv y] \vdash \top, \Delta[\vv t/\vv x, \vv u/ \vv y], \cE_{\vv t, \vv u}}\]

\item If the last rule applied is the $\wedge$ rule
  \[ \dfrac{
    \Theta[\vv t/\vv x, \vv t/\vv y] \vdash \phi_1[\vv t/\vv x, \vv t/\vv y], \Delta[\vv t/\vv x, \vv t/ \vv y]
\qquad
    \Theta[\vv t/\vv x, \vv t/\vv y] \vdash \phi_2[\vv t/\vv x, \vv t/\vv y], \Delta[\vv t/\vv x, \vv t/ \vv y]
  }{
    \Theta[\vv t/\vv x, \vv t/\vv y] \vdash (\phi_1 \wedge \phi_2)[\vv t/\vv x, \vv t/\vv y], \Delta[\vv t/\vv x, \vv t/ \vv y]
}
    \]
    then the induction hypothesis gives us proofs of
    \[    \Theta[\vv t/\vv x, \vv u/\vv y] \vdash \phi_i[\vv t/\vv x, \vv u/\vv y], \Delta[\vv t/\vv x, \vv u/ \vv y], \cE_{\vv t, \vv u} \]
    for both $i \in \{1,2\}$. So we can apply the $\wedge$ rule to conclude that we have
    \[    \Theta[\vv t/\vv x, \vv u/\vv y] \vdash (\phi_1 \wedge \phi_2)[\vv t/\vv x, \vv u/\vv y], \Delta[\vv t/\vv x, \vv u/ \vv y], \cE_{\vv t, \vv u} \]
    as desired.
  \item The cases of the rules $\vee$,$\forall$ and $\times_\eta$ are equally straightforward and left to the reader.
  \item Now, let us handle the case of the $\exists$ rule. To simplify notation, 
we just treat the case where there is only one leading existential in the formula.
    \[ \dfrac{\Theta, t \in u \vdash \phi[t/x], \Delta^\pospolaritysuper}{\Theta, t \in u \vdash \exists x \in u. \; \phi, \Delta^\pospolaritysuper }\]
    The generalization to multiple existentials can be obtained
    by translating the general $\exists$ rule into a sequence of applications of
    the rule above and applying the same argument to each instance (note that this
    is possible because we do not require $\phi$ to be $\negpolarity$ in the rule
    for a single existential).

    So assume that $z$ is fresh wrt $\vv x, \vv y, \vv t, \vv u, a, b$ and
    that the last step of the proof is
  \[ \dfrac{
    \Theta[\vv t/\vv x, \vv t/\vv y], a[\vv t/\vv x, \vv t/\vv y] \in b[\vv t/\vv x, \vv t/\vv y]
    \vdash \phi[a/z][\vv t/\vv x, \vv t/\vv y],
    \exists z \in c[\vv t/\vv x, \vv t/\vv y].\; \phi[\vv t/\vv x, \vv t/\vv y],
    \Delta[\vv t/\vv x, \vv t/ \vv y]
  }{
    \Theta[\vv t/\vv x, \vv t/\vv y], a[\vv t/\vv x, \vv t/\vv y] \in b[\vv t/\vv x, \vv t/\vv y]
    \vdash
    \exists z \in c[\vv t/\vv x, \vv t/\vv y].\; \phi[\vv t/\vv x, \vv t/\vv y],
    \Delta[\vv t/\vv x, \vv t/ \vv y]
    }
    \]
    with $b[\vv t/\vv x, \vv t/\vv y] = c[\vv t/\vv x, \vv t/\vv y]$.
    Set $a' = a[\vv t/\vv x, \vv u/\vv y]$, $b' = b[\vv t/\vv x, \vv u/\vv y]$, $c'=c[\vv t/\vv x, \vv u/\vv y]$.
    We have three subcases:
    \begin{itemize}
    \item If we have that $b' = c'$,
    using the induction hypothesis, we have a proof of
    \[    \Theta[\vv t/\vv x, \vv u/\vv y], a' \in b'
    \vdash \phi[a/z][\vv t/\vv x, \vv u/\vv y],
    \exists z \in c'.\; \phi[\vv u/\vv x, \vv u/\vv y],
    \Delta[\vv t/\vv x, \vv u/ \vv y], \cE_{\vv t, \vv u}\]
    we can simply apply an $\exists$ rule to that proof and we are done.
  \item Otherwise, if we have that $b' = t_i$ and $c' = u_j$ for some $i,j \le n$.
    In that case,
    extending the tuples $\vv t$ and $\vv u$ with $a'$ and a fresh variable
        $z'$ (the substitutions under consideration would be,
        we can apply the induction hypothesis to obtain a proof of
\[    \Theta[\vv t/\vv x, \vv u/\vv y, a'/z], a' \in t_i, z' \in u_j
    \vdash \phi[\vv t/\vv x, \vv u/\vv y,z'/z],
    \exists z \in u_j.\; \phi[\vv u/\vv x, \vv u/\vv y],
        \Delta[\vv t/\vv x, \vv u/ \vv y], \cE_{\vv t, \vv u}, \cE_{a',z'}\]
      Note that $\cE_{\vv t, \vv u}$
      contains an occurence of $\neg (t_i \subseteq u_j)$, which expands to
        $\exists z \in t_i. \forall z' \in u_j. \neg (z \equiv z')$, so we can construct the
        partial derivation
    \[
      \text{
        \AXC{$
    \Theta[\vv t/\vv x, \vv u/\vv y, a'/z], a' \in t_i, z' \in u_j
    \vdash \phi[\vv t/\vv x, \vv u/\vv y,z'/z'],
    \exists z \in u_j.\; \phi[\vv u/\vv x, \vv u/\vv y],
        \Delta[\vv t/\vv x, \vv u/ \vv y], \cE_{\vv t, \vv u}, \cE_{a',z'}$
        }
        \myLeftLabel{$\exists$}
        \UIC{$
    \Theta[\vv t/\vv x, \vv u/\vv y], a' \in t_i, z' \in u_j
    \vdash \exists z \in u_j. \phi[\vv t/\vv x, \vv u/\vv y],
        \Delta[\vv t/\vv x, \vv u/ \vv y], \cE_{\vv t, \vv u}, \cE_{a',z'}$
        }
        \myLeftLabel{Lemma~\ref{lem:cEder}}
        \UIC{$
    \Theta[\vv t/\vv x, \vv u/\vv y, a'/z], a' \in t_i, z' \in u_j
    \vdash 
    \exists z \in u_j.\; \phi[\vv u/\vv x, \vv u/\vv y],
        \Delta[\vv t/\vv x, \vv u/ \vv y], \cE_{\vv t, \vv u}, a' \equiv z'$}
        \myLeftLabel{$\forall$}
        \UIC{$
    \Theta[\vv t/\vv x, \vv u/\vv y, a'/z], a' \in t_i
    \vdash
    \exists z \in c'.\; \phi[\vv u/\vv x, \vv u/\vv y],
        \Delta[\vv t/\vv x, \vv u/ \vv y], \cE_{\vv t, \vv u}, \forall z' \in u_j. a' \equiv z'$}
        \myLeftLabel{$\exists$}
        \UIC{$
    \Theta[\vv t/\vv x, \vv u/\vv y, a'/z], a' \in t_i
    \vdash
    \exists z \in c'.\; \phi[\vv t/\vv x, \vv u/\vv y],
        \Delta[\vv t/\vv x, \vv u/ \vv y], \cE_{\vv t, \vv u}$}
      \DisplayProof
      }
    \]
    whose conclusion matches what we want.
    \item Otherwise, we are in a similar case where $b' = u_j$ and $c' = t_i$.
      We proceed similarly, except that we use the formula $\neg (u_j \subseteq t_i)$
        of $\cE_{t_i,u_j}$ instead of $\neg(t_i \subseteq u_j)$.
    \item For the rule $\times_\beta$, which has general shape
      \[\dfrac{\Theta[z_i/z] \vdash \Delta[z_i/z]}{\Theta[\pi_i(\tuple{z_1,z_2})/z] \vdash \Delta[\pi_i(\tuple{z_1,z_2})/z]}\]
      we can assume, without loss of generality, that $z$ occurs only once in $\Theta, \Delta$.
      Let us only sketch the case where $z$ occurs in a formula $\phi$ and the rule has shape
      \[\dfrac{\Theta \vdash \phi[z_i/z], \Delta
      }{\Theta \vdash \phi[\pi_i(\tuple{z_1,z_2})/z], \Delta}\]
      We have that $\phi[\pi_i(\tuple{z_1,z_2})/z)]$ is also of the shape $\psi[\vv t/\vv x, \vv u/\vv y]$ in our situation.
      We can also assume without loss of generality that each variable in $\vv x$ and $\vv y$ occur each a single time in $\Theta, \Delta$.
      Now if we have a couple of situations:
      \begin{itemize}
        \item If the occurence of $z$ do not interfere with the substitution $[\vv t/\vv x, \vv u/\vv y]$, i.e., there is a formula $\theta$ such that
          \[\phi[\pi_i(\tuple{z_1,z_2})/z] = \psi[\vv t/\vv x, \vv u] = \theta[\vv t/\vv x, \vv u/\vv y, \tuple{z_1,z_2}/z]\]
            we can simply apply the induction hypothesis on the subproof and conclude with one application of $\times_\eta$.
          \item If we have that $z$ clashes with a variable of $\vv x, \vv y$, say $x_j$, but that $t_j = v[\pi_i(\tuple{z_1,z_2}/x_j)]$ for some term $v$.
            Then we can apply the induction hypothesis with the matching tuples of terms $\vv t, x_i$ and $\vv u, t_i$ and conclude by applying the $\beta$ rule.
          \item Otherwise, the occurence of $z$ does interfere with the substitution in such a way that we have,
            say $t_j = \tuple{z_1, z_2}$. Then we can apply the induction hypothesis on the subproof with the matching tuples of terms $\vv x, z_i$ and $\vv y, \pi_i(u_j)$
            and conclude by applying the $\beta$ rule.
      \end{itemize}
    \end{itemize}
\end{itemize}
\end{proof}

One easy consequence of the above is Lemma \ref{lem:gencongruence}:

\begin{proof}[Proof of Lemma~\ref{lem:gencongruence}]
Combine Lemma~\ref{lem:gencongruencenary} and Lemma~\ref{lem:cEder}.
\end{proof}

Another consequence is the following corollary, which will be used later in this section:

\begin{restatable}{corollary}{memctxadm}
  \label{cor:memctxadm}
The following rule is polytime admissible:
\[
  \dfrac{\Theta, t \in u \seq \Delta}{\Theta \seq \neg t\memmac u, \Delta}
\]
\end{restatable}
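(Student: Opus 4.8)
The plan is to unfold the macros on the succedent and then obtain the rule from the generalized congruence rule (Lemma~\ref{lem:gencongruence}) followed by a single application of the $\forall$ rule of Figure~\ref{fig:dzcalc-1sided}. Since $t \memmac u$ abbreviates $\exists z' \in u.\ t \equiv z'$, and $\neg$ is the macro that dualizes every connective, the formula $\neg\, t \memmac u$ is definitionally equal to $\forall z' \in u.\ \neg(t \equiv z')$. Hence the goal $\Theta \seq \neg\, t \memmac u, \Delta$ is literally the sequent $\Theta \seq \forall z' \in u.\ \neg(t \equiv z'),\, \Delta$, and by the $\forall$ rule it suffices to compute, from the given proof, a proof of
\[ \Theta,\, z' \in u \seq \neg(t \equiv z'),\ \Delta \]
for a fresh variable $z'$, since this single inference adds only constant overhead and thus preserves polynomiality.

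To produce this latter sequent I would invoke Lemma~\ref{lem:gencongruence}. Choosing a fresh variable $y$, I mark only the distinguished occurrence of $t$ inside the antecedent membership atom, writing the antecedent as $(\Theta,\, y \in u)[t/y]$ and leaving every other occurrence of $t$ in $\Theta$ and $\Delta$ as a literal term — equivalently, the ``preserved'' variable of the congruence rule occurs nowhere. With this marking, the hypothesis proof of $\Theta,\, t \in u \seq \Delta$ is exactly a proof of the premise $\Psi[t/y] \seq \Delta[t/y]$ of Lemma~\ref{lem:gencongruence} for $\Psi \eqdef \Theta,\, y \in u$. Instantiating the replacement term by the fresh $z'$, the lemma returns in polynomial time a proof of $\Psi[z'/y] \seq \neg(t \equiv z'),\, \Delta[z'/y]$, which is precisely $\Theta,\, z' \in u \seq \neg(t \equiv z'),\, \Delta$ because $y$ is fresh for $\Theta$ and $\Delta$. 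Note that this works uniformly whether $t$ is a variable or a compound term, and even when $u$ itself mentions $t$: the unmarked occurrences of $t$ simply survive unchanged, exactly as required by the shape of the body $\neg(t \equiv z')$.

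The only points requiring care — and neither is a genuine obstacle — are, first, confirming the definitional identity $\neg\, t \memmac u = \forall z' \in u.\ \neg(t \equiv z')$, which is immediate by unfolding the $\memmac$ and $\neg$ macros; and second, arranging the congruence substitution so that exactly the element-side occurrence of $t$ inside the membership atom is renamed to the fresh $z'$ while all remaining occurrences of $t$ are preserved. This is handled by letting $y$ mark that single occurrence and leaving the congruence rule's kept variable vacuous, so that the corollary is in effect a repackaging of Lemma~\ref{lem:gencongruence}. The polynomial-time bound is inherited directly from that lemma.
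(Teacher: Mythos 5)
Your proposal is correct and matches the paper's own proof: both unfold $\neg\, t \memmac u$ to $\forall z' \in u.\ \neg(t \equiv z')$, invoke the generalized congruence rule (Lemma~\ref{lem:gencongruence}) with the replacement term instantiated to a fresh variable so as to rename only the element-side occurrence of $t$ in the membership atom, and finish with a single application of the $\forall$ rule. Your goal-directed presentation and your care about unmarked occurrences of $t$ (including inside $u$) are just a more explicit rendering of exactly the two-step derivation the paper gives.
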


\begin{proof}
Recall that $t \memmac u$ expands to $\exists x \in u. \; x \equiv t$,
so that $\neg t\memmac u$ is $\forall x \in u. \; \neg (x \equiv t)$.
So we have
\[
    \text{
      \AXC{$\Theta, t \in u \vdash \Delta$}
      \myLeftLabel{Lemma~\ref{lem:gencongruence}}
      \UIC{$\Theta, x \in u \vdash \neg (x \equiv t), \Delta$}
      \myLeftLabel{$\forall$}
      \UIC{$\Theta \vdash \neg t \memmac u, \Delta$}
\DisplayProof
    }\]
\end{proof}

\subsection{Proof of  Lemma \ref{lem:effnestedproof-down}}

We now recall the claim of admissibility concerning rules
for ``moving down in an equivalence''. Recall that these use
the notation for quantifying on a path below an object, defined
in the body.

\effnestedproofdown*

\begin{proof}
  We proceed by induction over the input proof of
  $\Theta \vdash \Delta, \exists r' \inq_p o'. \; r \equiv_{\sett(T')} r'$
and make a case distinction
  according to which rule was applied last. All cases are straightforward, save for one: when a $\exists$ rule is applied on the formula  $\exists r'
  \inq_p o'. \; r \equiv_{\sett(T')} r'$.
Let us only detail that one.

In that case, the last step has shape
  \[
    \dfrac{
      \Theta \vdash \Delta^\pospolaritysuper, r \equiv_{\sett(T')} w, \exists r' \inq_p o'. \; r \equiv_{\sett(T')} r'
    }{
\Theta \vdash \Delta^\pospolaritysuper, \exists r' \inq_p o'. \; r \equiv_{\sett(T')} r'
    }\]
and because $r \equiv_{\sett(T')} w$ is $\negpolarity$, we can further infer that the corresponding
proof tree starts as follows
  \[
    \text{
      \AXC{
        $\Theta, z \in r \vdash \Delta^\pospolaritysuper, z \memmac w, \exists r' \inq_p o'. \; r \equiv_{\sett(T')} r'$
      }
      \UIC{
        $\Theta \vdash \Delta^\pospolaritysuper, r \subseteq w, \exists r' \inq_p o'. \; r \equiv_{\sett(T')} r'$
      }
      \AXC{
        $\vdots$
      }
      \UIC{
        $\Theta \vdash \Delta^\pospolaritysuper, w \subseteq r, \exists r' \inq_p o'. \; r \equiv_{\sett(T')} r'$
      }
      \BIC{
        $\Theta \vdash \Delta^\pospolaritysuper, r \equiv_{\sett(T')} w, \exists r' \inq_p o'. \; r \equiv_{\sett(T')} r'$
      }
      \myLeftLabel{$\exists$}
      \UIC{
$\Theta \vdash \Delta^\pospolaritysuper, \exists r' \inq_p o'. \; r \equiv_{\sett(T')} r'$
      }
      \DisplayProof
    }
\]

In particular, we have a strictly smaller subproof of 
\[\Theta, z \in r \vdash \Delta^\pospolaritysuper, z \memmac w, \exists r' \inq_p o'. \; r \equiv_{\sett(T')} r'\]
Applying the induction hypothesis, we get a proof of
  \[\Theta, z \in r \vdash \Delta^\pospolaritysuper, z \memmac w, \exists z' \inq_{mp} o'. \; z \equiv_{\sett(T')} z'\]
  Recall that there is a  max specialization of $\exists r' \inq_p o'. \; r \equiv_{\sett(T') r'}$. Using that same specialization and the admissibility of
  the $\exists$ rule that allows to perform a non-maximal specialization (Lemma~\ref{lem:existsblockadm}),
  we can then obtain a proof with conclusion
  \[
    \Theta, z \in r \vdash \Delta^\pospolaritysuper, \exists r \inq_p o'. \; z \memmac r, \exists z' \inq_{mp} o'. \; z \equiv_{\sett(T')} z'\]
  which concludes our argument, since $\exists r \inq_p o'. \; z \memmac r$ and $\exists z' \inq_{mp} o'. \; z \equiv_{\sett(T')} z'$ are syntactically the
  same.
\end{proof}

\subsection{Proof of  Lemma \ref{lem:equivsettoequivalence}}

\equivsettoequivalence*

\begin{proof}
By induction over the shape of the input derivation, making a case distinction according
  to the last rule applied.
All cases are trivial, except of the case of the rule $\exists$ where
  $\exists r' \inq_p o'. r \equiv_{\sett(T')} r'$ is the main formula. So let us focus on that one.

In that case, the proof necessarily has shape
\[\text{
\AXC{$\Theta, y \in w \vdash \Delta^\pospolaritysuper, y \memmac r, \exists r' \inq_p o'. \; r \equiv_{\sett(T')} r'$}
\myLeftLabel{$\forall$}
\UIC{$\Theta \vdash \Delta^\pospolaritysuper, w \subseteq_{T'} r, \exists r' \inq_p o'. \; r \equiv_{\sett(T')} r'$}
\AXC{$\Theta, x \in r \vdash \Delta^\pospolaritysuper, x \memmac w, \exists r' \inq_p o'. \; r \equiv_{\sett(T')} r'$}
\myLeftLabel{$\forall$}
\UIC{$\Theta \vdash \Delta^\pospolaritysuper, r \subseteq_{T'} w, \exists r' \inq_p o'. \; r \equiv_{\sett(T')} r'$}
  \myLeftLabel{$\wedge$}
\BIC{$\Theta \vdash \Delta^\pospolaritysuper, r \equiv_{\sett(T')} w, \exists r' \inq_p o'. \; r \equiv_{\sett(T')} r'$}
  \myLeftLabel{$\exists$}
\UIC{$\Theta \vdash \Delta^\pospolaritysuper, \exists r' \inq_p o'. \; r \equiv_{\sett(T')} r'$}
\DisplayProof
  }\]
Applying the induction hypothesis to the leaves of that proof, we obtain two proofs
of
\[
  \Theta, y \in w \vdash \Delta^\pospolaritysuper, y \memmac r, \exists r' \inq_p o'. \forall z \in a, \; z \memmac r \equi z \memmac r'
  \qquad \text{and} \qquad 
  \Theta, x \in r \vdash \Delta^\pospolaritysuper, x \memmac w, \exists r' \inq_p o'. \forall z \in a, \; z \memmac r \equi z \memmac r'
\]
  and we can conclude using the admissibility of weakening and Corollary~\ref{cor:memctxadm} twice and replaying the $\wedge/\forall/\exists$ steps in the
  appropriate order (half of the proof derivation is elided below to save space):
\[\text{
  \AXC{
    $\Theta, y \in w \vdash \Delta^\pospolaritysuper, y \memmac r, \exists r' \inq_p o'. \forall z \in a, \; z \memmac r \equi z \memmac r'$}
  \myLeftLabel{Corollary~\ref{cor:memctxadm}}
  \UIC{
    $\Theta \vdash \Delta^\pospolaritysuper, \neg (y \memmac w), y \memmac r, \exists r' \inq_p o'. \forall z \in a, \; z \memmac r \equi z \memmac r'$}
  \myLeftLabel{$\vee$}
  \UIC{$
    \Theta \vdash \Delta^\pospolaritysuper, {y \memmac w} \imp {y \memmac r}, \exists r' \inq_p o'. \forall z \in a, \; z \memmac r \equi z \memmac r'$}
  \AXC{
    $\Theta, x \in r \vdash \Delta^\pospolaritysuper, x \memmac w, \exists r' \inq_p o'. \forall z \in a, \; z \memmac r \equi z \memmac r'$}
  \myLeftLabel{Corollary~\ref{cor:memctxadm}}
  \UIC{
    $\Theta \vdash \Delta^\pospolaritysuper, \neg (x \memmac r), x \memmac w, \exists r' \inq_p o'. \forall z \in a, \; z \memmac r \equi z \memmac r'$}
  \myLeftLabel{$\vee$}
  \UIC{$
    \Theta \vdash \Delta^\pospolaritysuper, {x \memmac r} \imp {x \memmac w}, \exists r' \inq_p o'. \forall z \in a, \; z \memmac r \equi z \memmac r'$}
  \myLeftLabel{$\wedge$}
   \BIC{$
    \Theta \vdash \Delta^\pospolaritysuper, {y \memmac r} \equi {y \memmac w}, \exists r' \inq_p o'. \forall z \in a, \; z \memmac r \equi z \memmac r'$} 
  \myLeftLabel{Lemma~\ref{lem:wkadm}}
  \UIC{$
    \Theta, y \in a \vdash \Delta^\pospolaritysuper, {y \memmac r} \equi {y \memmac w}, \exists r' \inq_p o'. \forall z \in a, \; z \memmac r \equi z \memmac r'$} 
  \myLeftLabel{$\forall$}
  \UIC{$
    \Theta \vdash \Delta^\pospolaritysuper, \forall z \in a. \; {z \memmac r} \equi {z \memmac w}, \exists r' \inq_p o'. \forall z \in a, \; z \memmac r \equi z \memmac r'$} 
  \myLeftLabel{$\exists$}
  \UIC{$
    \Theta \vdash \Delta^\pospolaritysuper, \exists r' \inq_p o'. \forall z \in a, \; z \memmac r \equi z \memmac r'$} 
\DisplayProof
  }\]
\end{proof}

\section{Proof of the main theorem for non-set types}

Recall again our main result:

\thmmainset*

 In the body of the paper, we gave a proof for the case where the type of the defined object is $\sett(T)$ for any $T$.
We now discuss the remaining cases: the base case and the inductive
case for product types.

So assume we are given an implicit
  definition $\varphi(\vv i, \vv a, o)$ and a focused witness, and proceed by induction over the type $o$:
  \begin{itemize}
    \item If $o$ has type $\unit$, then, since there is only one inhabitant in
      type $\unit$, then we can take our explicit definition to be the corresponding $\nrc$ expression $()$.
    \item If $o$ has type $\ur$, then using interpolation on the entailment $\varphi(\vv i, \vv a, o) \imp (\varphi(\vv i, \vv a', o') \imp o =_\ur o')$,
      we obtain $\theta(\vv i, o)$ with $\varphi(\vv i, \vv a, o) \imp \theta(\vv i, \vv a, o)$ and $\theta(\vv i, o) \wedge \varphi(\vv i, \vv a, o') \imp
      o =_\ur o'$. 
      But then we know that $\phi$ implies
$o$ is a subobject of $\vv i$: otherwise we could find a model that contradicts
the entailment. 
There is a $\nrc$ definition $A(\vv i)$ that collects all of the $\ur$-elements 
lying beneath $\vv i$. 
      We can then take $E(\vv i) = \nrcget_T(\{ x \in A(\vv i) \mid \theta(\vv i, x)\})$ as our
      $\nrcwget$ definition of $o$.  The correctness of $E$ follows
from the properties of $\theta$ above.
    \item If $o$ has type $T_1 \times T_2$, recalling the definition of $\equiv_{T_1 \times T_2}$, we have a derivation of
      \[ \varphi(\vv i, \vv a, o) \wedge \varphi(\vv i, \vv a, o') \vdash
      \pi_1(o) \equiv_{T_1} \pi_1(o') \wedge
      \pi_2(o) \equiv_{T_2} \pi_2(o')\] 
      By Lemma~\ref{lem:wedgeprojadm}, we have proofs of
      \[
         \varphi(\vv i, \vv a, o) \wedge \varphi(\vv i, \vv a, o') \vdash
        \pi_i(o) \equiv_{T_i} \pi_i(o') \]
      for $i \in \{1,2\}$.
      Take $o_1$ and $o_2$ to be fresh variables of types $T_1$ and $T_2$. Take $\tilde\phi(\vv i, \vv a, o_1, o_2)$ to be $\phi(\vv i, \vv a, \tuple{o_1,
      o_2})$. By substitutivity (the admissible rule given
by Lemma~\ref{lem:substitutivity}) and applying the $\times_\beta$ rule, we have focused proofs of
      \[ \varphi(\vv i, \vv a, \tuple{o_1,o_2}) \wedge \varphi(\vv i, \vv a, \tuple{o_1', o_2'}) \vdash
      o_i \equiv_{T_i} o'_i 
      \]
      We can apply our inductive hypothesis to obtain a definition $E\IH_i(\vv i)$ for both $i \in \{1,2\}$.
      We can then take our explicit definition to be $\tuple{E\IH_1(\vv i),E\IH_2(\vv i)}$.
  \end{itemize}

\section{Variant of Parameter Collection Theorem, 
Theorem \ref{thm:mainflatcasebounded}, for parameterized definability
in first-order logic} \label{sec:parambethfo}

Our paper has focused on the setting of nested relations, phrasing our results
in terms of the language $\nrc$. We indicated in the conclusion of the paper that
there is a variant of the $\nrc$ parameter collection theorem, Theorem
\ref{thm:mainflatcasebounded}, for the broader context of first-order logic.
In fact, this  first-order version of the result provided the intuition
for the theorem.
In this section we present this variant.

We consider first-order logic with equality and without function symbols,
which also excludes nullary function symbols, that is, individual constants,
whose role is just taken by free individual variables. Specifically, we consider
first-order formulas with the following syntax
\[ \varphi, \psi ~~\bnfeq ~~ P(\vv x) \bnfalt \neg P(\vv x) \bnfalt x = y \bnfalt x \neq y
\bnfalt \top \bnfalt \bot
\bnfalt \varphi \wedge \psi \bnfalt \varphi \vee \psi
\bnfalt \forall x\, \varphi \bnfalt \exists x\, \varphi.
\]
On top of this, we give some ``syntactic sugar''. We define $\neg \varphi$ by
induction over $\varphi$, dualizing every connective, including the
quantifiers, and removing doubled negation. We define implication $\varphi
\imp \psi$ as an abbreviation of $\neg \varphi \vee \psi$ and bi-implication
$\varphi \equi \psi$ as an abbreviation of $(\varphi \imp \psi) \wedge (\psi
\imp \varphi)$. The set of free variables occurring in a formula~$\varphi$ is
denoted by $\FV(\varphi)$ and the set of predicates occurring in~$\varphi$ by
$\pred(\varphi)$.

Figure~\ref{fig:rhflk} shows our proof system for first-order logic. 
It is identical to a system from the prior literature
\footnote{
 G3c+\textsc{Ref}+\textsc{Repl} \cite{negri:plato:structural:2001,bpt},
in the one-sided form of GS3, discussed in Chapter~3 of \cite{bpt}, which
reduces the number of rules.} %
Like the focused proof system we used in the body of the paper for $\deltazero$ formulas,
it is a $1$-sided calculus.
The formulas other than $\Gamma$ in the premise are the \name{active
  formulas} of the rule, while the \name{principal formulas} are the other
formulas in its conclusion. The complementary principal formulas in
\textsc{Ax} have to be literals. The replacement of symbols induced by
equality with \textsc{Repl} is only performed on negative literals.
\begin{figure}[h]
\
\input{rhflk_nosorts.tex}
\caption{One-sided sequent calculus for first-order logic with equality.}
\label{fig:rhflk}
\end{figure}

As in the body of the paper,   a proof tree or derivation is a tree whose nodes are
labelled with sequents, such that the labels of the children of a given node
and that of the node itself are the premises and conclusion, resp., of an
instance of a rule from Figure~\ref{fig:rhflk}. The \emph{conclusion} of a
proof tree is the sequent that labels its root. The proof system is closed
under cut, weakening and contraction. Closure under contraction in particular
makes it suited as basis for ``root-first'' proof search. Read in this
``bottom-up'' way, the $\exists$ rule states that a disjunction with an
existentially quantified formula can be proven if the extension of the
disjunction by a copy of the formula where the formerly quantified
variable~$x$ is instantiated with an arbitrary variable~$t$ can be proven. The
existentially quantified formula is retained in the premise and may be used to
add further instances by applying~$\exists$ again in the course of the proof.

Soundness of the rules is straightforward. For example the $\exists$ rule
could be read as stating that if we deduce a disjunction in which one disjunct
is a formula $\phi$ with $t$ in it, then we can deduce the same disjunction
but with some occurrences of $t$ replaced in that disjunct with an
existentially quantified variable. Completeness of the proof system can also
be proven by a standard Henkin-style construction: indeed, since this
is really ordinary first-order logic, there are proofs in the literature
for systems that are very similar to this one:
\cite{bpt,negri:plato:equality:1998}. 

The system in Figure ~\ref{fig:rhflk} is the analog of our higher-level system
in the body of the paper. We also have a restricted notion of proof, which
admit more efficient algorithms, that we refer to as \emph{focused proofs}. They are
analogous to the focused proof system for
$\deltazero$ formulas in the body of the paper.
But in this first-order context  we can define focused more easily, as an extra condition on proofs
in the system.
We characterize a proof as \name{\folfocused} if
no application of \rulename{AX}, $\top$, $\exists$, \rulename{Ref},
\rulename{Repl} contains in its conclusion a formula whose top-level connective
is $\lor$, $\land$ or $\forall$. This property may be either incorporated
directly into a ``root-first'' proof procedure by constraining rule
applications or it may be ensured by converting an arbitrary given proof tree
to a \folfocused proof tree with the same ultimate consequence. This
conversion is quite straightforward, but may increase the proof size
exponentially.

We now discuss our generalization of the $\nrc$ Parameter Collection Theorem from the body of the paper to this 
first-order setting.
The concept of explicit definition can
be generalized to \name{definition up to parameters and disjunction}: A family
of formulas $\chi_i(\vv z, \vv y, \vv r)$, $1 \leq i \leq n$, provides an
\name{explicit definition up to parameters and disjunction} of a formula
$\lambda(\vv z, \vv l)$ relative to a formula $\phi$ if
\begin{equation}
  \label{eq-def-pd}
\phi \entails \bigvee_{i=n}^n
\exists \vv y \forall \vv z\, (\lambda(\vv z, \vv l) \equi \chi_i(\vv z, \vv
y, \vv r)).\tag{$\star$}
\end{equation}
The entailment (\ref{eq-def-pd}) is considered with restrictions on the
predicates and variables permitted to occur in the $\chi_i$. In the simplest
case, $\lambda(\vv z,\vv l)$ is an atomic formula $p(\vv z)$ with a predicate
that is permitted in $\phi$ but not in the $\chi_i$. The predicate~$p$ is then
said to be \name{explicitly definable up to parameters and disjunction} with
respect to $\phi$ \cite{ck}. 

The disjunction over
a finite family of formulas $\chi_i$ can be consolidated into a single
quantified biconditional as long as the domain 
has size at least $2$ in every model of $\phi$. 
Notice that if $\phi$ has only finite models, then by the
compactness theorem of first-order logic, the size of models must be bounded.
In such cases every formula $\lambda$ is definable 
with sufficiently many parameters.

We can now state our analog of the Parameter Collection Theorem,  Theorem~\ref{thm:mainflatcasebounded}.
\begin{restatable}{theorem}{mainflatcase} \label{thm:mainflatcase}
  Let $\phi$, $\psi$, $\lambda(\vv z, \vv l)$, and $\rho(\vv
  z, \vv y, \vv r)$ be first-order formulas such that
  \[\phi \land \psi
  \entails \exists \vv y \forall \vv z\, (\lambda(\vv z, \vv l) \equi \rho(\vv
  z, \vv y, \vv r)).\] Then there exist first-order formulas $\chi_i(\vv z, {\vv v}_i,
  {\vv c}_i)$, $1 \leq i \leq n$, such that
  \begin{enumerate}
  \item $\phi \land \psi \entails \bigvee_{i=1}^{n} \exists {\vv v}_i \forall
    z (\lambda(\vv z, \vv l) \equi \chi_i(\vv z, {\vv v}_i, {\vv c}_i))$,
  \item ${\vv c}_i \subseteq (\FV(\phi) \cup \vv l) \cap (\FV(\psi) \cup \vv
    r)$,
  \item $\pred(\chi_i) \subseteq (\pred(\phi) \cup \pred(\lambda)) \cap
    (\pred(\psi) \cup \pred(\rho))$.
  \end{enumerate}

  \noindent
  Moreover, given a \folfocused proof of the precondition with the system of
  Fig.~\ref{fig:rhflk}, a family of formulas $\chi_i$, $1 \leq i \leq n$, with
  the claimed properties can be computed in polynomial time in the size of the
  proof tree.
\end{restatable}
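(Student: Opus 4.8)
The plan is to follow the template of the proof of Theorem~\ref{thm:mainflatcasebounded}: prove a more general statement by induction on the \folfocused proof tree, replacing the single pair $\phi,\psi$ by two finite sets $\Gamma_L,\Gamma_R$ of side formulas that may appear during the derivation, partitioned according to whether their symbols are drawn from the ``left'' pool $\FV(\phi)\cup\vv{l}$, $\pred(\phi)\cup\pred(\lambda)$ or the ``right'' pool $\FV(\psi)\cup\vv{r}$, $\pred(\psi)\cup\pred(\rho)$, the common symbols being the intersection. Writing $\cG$ for $\exists\vv{y}\,\forall\vv{z}\,(\lambda(\vv{z})\equi\rho(\vv{z},\vv{y}))$, the inductive invariant is: from a \folfocused proof of $\seq\Gamma_L,\Gamma_R,\cG$ one computes a common formula $\theta$ and a finite family $\chi_i(\vv{z},\vv{v}_i,\vv{c}_i)$ meeting restrictions (2)--(3) such that
\[ \seq \Gamma_L,\; \theta,\; \textstyle\bigvee_i \exists\vv{v}_i\,\forall\vv{z}\,(\lambda(\vv{z})\equi\chi_i(\vv{z},\vv{v}_i,\vv{c}_i)) \qquad\text{and}\qquad \seq \Gamma_R,\; \neg\theta. \]
Here $\theta$ plays the role of a Craig interpolant, while the disjunction $\bigvee_i\exists\vv{v}_i\,\forall\vv{z}(\dots)$ plays the role of the collecting $\nrc$ expression $E$ in Lemma~\ref{lem:nrcparamcoll}: instead of collecting the nested relation $\{z\mid\lambda\}$ as an element of a set, we collect \emph{definitions} of $\lambda$ into a disjunctive family.

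From this lemma the theorem follows at once: rewrite the hypothesis $\phi\wedge\psi\entails\cG$ as the one-sided sequent $\seq\neg\phi,\neg\psi,\cG$, take $\Gamma_L=\{\neg\phi\}$ and $\Gamma_R=\{\neg\psi\}$, and apply the lemma. The two resulting sequents read $\phi\entails\theta\vee D$ and $\psi\entails\neg\theta$, where $D=\bigvee_i\exists\vv{v}_i\,\forall\vv{z}(\lambda\equi\chi_i)$; conjoining them gives $\phi\wedge\psi\entails D$, which is exactly conclusion~(1), and the restrictions carried along by the invariant give (2)--(3). The induction is a case analysis on the last rule of the proof, and most cases mirror standard interpolation constructions \cite{smullyan,fittingbook} together with the bookkeeping already done for Lemma~\ref{lem:nrcparamcoll}: for $\wedge$ the families are unioned and the interpolants combined by $\wedge$ or $\vee$ according to the side of the principal formula; for $\vee$ and $\forall$ one simply propagates $\theta$ and the family; for the \textsc{Ax} and $\top$ axioms one initializes $\theta$ trivially as $\bot$, $\top$ or a common literal; and \textsc{Ref}, \textsc{Repl} are handled exactly as the $\neq$-congruence cases of Lemma~\ref{lem:nrcparamcoll}. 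The genuinely new work is in the two $\exists$ cases.

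The central case is when the last rule is $\exists$ with principal formula $\cG$: a witness $\vv{w}$ is chosen for $\vv{y}$, and because the \folfocused discipline forces every side formula in $\Gamma_L,\Gamma_R$ to be existential-leading at this point, the premise must continue by decomposing $\forall\vv{z}(\lambda\equi\rho(\vv{z},\vv{w}))$ through $\forall$, $\wedge$ and $\vee$, yielding (after introducing a fresh common $\vv{z}$) two subproofs whose conclusions add $\lambda(\vv{z}),\neg\rho(\vv{z},\vv{w})$ and $\neg\lambda(\vv{z}),\rho(\vv{z},\vv{w})$ respectively. Applying the IH gives interpolants $\theta_1,\theta_2$ and families $\chi^1_i,\chi^2_i$ with $\vv{z}$ free. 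Mirroring the construction $E=\{\{z\mid\theta_2\}\}\cup\bigcup\{E_1\cup E_2\mid z\}$ of Lemma~\ref{lem:nrcparamcoll}, I would set $\theta:=\exists\vv{z}\,(\theta_1\wedge\theta_2)$ and take the new family to consist of $\theta_2$ itself (with template variables $\vv{z}$ and no fresh parameters) together with $\chi^1_i,\chi^2_i$, in which the now-spurious free $\vv{z}$ is absorbed into the existential parameters. The point is that $\theta_2$ is a \emph{common} formula which, under the hypotheses, is forced to be equivalent to $\lambda(\vv{z})$: from the left $\lambda\imp\theta_2\vee(\dots)$, from the right $\theta_2\imp\rho(\vv{z},\vv{w})$, and $\rho(\vv{z},\vv{w})\equi\lambda(\vv{z})$ by the chosen witness. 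Hence $\theta_2$ is a legitimate new member of the defining family, and the correctness verification is the formula-level analog of the $\nrc$ check in the appendix.

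The remaining $\exists$ case---where the principal formula lies in $\Gamma_L$ or $\Gamma_R$---is where the parameters $\vv{v}_i$ are actually born, and this is the step I expect to be the main obstacle. As in Lemma~\ref{lem:mainflatcaseboundedexleft}, when the chosen witness $w$ is a variable foreign to the common signature, the interpolant must abstract it away: one sets $\theta:=\forall w\,\theta\IH$, so that $\neg\theta=\exists w\,\neg\theta\IH$ follows on the right by a single unconditional $\exists$ step, while on the family side $w$ migrates from a free common variable into a fresh existential parameter, turning each $\exists\vv{v}\,\forall\vv{z}(\lambda\equi\chi\IH_i)$ into $\exists w\,\exists\vv{v}\,\forall\vv{z}(\lambda\equi\chi\IH_i)$. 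Correctness is a case split on whether, as $w$ ranges over all values, it is the disjunct $\theta\IH$ or a family disjunct that fires; the \emph{opposite} quantifier polarities ($\theta$ universal, the family existential) are exactly what make both the left and the right obligation go through. I would isolate this single-quantifier step as an auxiliary lemma (the analog of Lemma~\ref{lem:mainflatcaseboundedexleft}, with a symmetric companion for the right side) and lift it to blocks of existentials by iteration. The real difficulty is organizational rather than conceptual: checking that the variable restriction (2) and predicate restriction (3) survive every case, that abstraction is applied to precisely the foreign variables, and that the family and $\theta$ grow only polynomially---each proof node contributes a bounded amount of new syntax, so the total is polynomial in the proof, yielding the claimed polynomial-time bound for \folfocused proofs.
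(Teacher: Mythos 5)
Your proposal matches the paper's proof in all essentials: the paper likewise strengthens the statement to an inductive invariant pairing a Craig-style interpolant $\theta$ with a disjunctive family of parameterized definitions of $\lambda$ (its ``pre-defining equivalences up to parameters and disjunction''), inducts on the \folfocused proof tree, handles the $\exists$-on-$\cG$ case by reading off the forced proof shape from focusing and adding the subproof interpolant (your $\theta_2$, the paper's $\forall z\,(\lambda(z,\vv l) \equi \theta_2[z/u])$) as a new defining disjunct alongside the parameter-abstracted inherited families, and treats the side-formula $\exists$ cases by abstracting foreign witnesses with opposite quantifier polarities for $\theta$ and the family --- you merely adopt the dual sign convention for $\theta$ (that of Lemma~\ref{lem:nrcparamcoll}) rather than the paper's. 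The only substantive detail you omit, which the paper treats as its subcase (ii) of the main case, is that the witness $w$ chosen for $\vv y$ may itself lie outside one side's signature pool and must then also be abstracted out of both $\theta$ and the family; this is exactly an instance of the variable-bookkeeping you already flag as the remaining organizational work.
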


In the theorem statement, 
the free variables of $\lambda$ are
$\vv z$ and $\vv l$, and the free variables of $\rho$ are $\vv
z$, $\vv y$, and $\vv r$.
The precondition supposes an explicit
definition~$\rho$ of $\lambda$ up to parameters with respect to a conjunction
$\phi \land \psi$. The conclusion then claims that one can effectively compute
another definition of $\lambda$ with respect to $\phi \land \psi$ that is up
to parameters and disjunction and has a constrained signature: free variables
and predicates must occur in at least one of the ``left side'' formulas $\phi$
and $\lambda$ and also in at least one of the ``right side'' formulas $\psi$
and $\rho$. In other words, %
the theorem states that if
$\sig_L$ and $\sig_R$ are ``left'' and ``right'' signatures such that $\phi$
and $\lambda$ are over $\sig_L$, $\psi$ and $\rho$ are over $\sig_R$, and
$\rho$ provides an explicit definition of $\lambda$ up to parameters with
respect to $\phi \land \psi$, then one can effectively compute another
definition of $\lambda$ with respect to $\phi \land \psi$ that is up to
parameters and disjunction and is just over the intersection of the signatures
$\sig_L$ and $\sig_R$.

We now prove Theorem~\ref{thm:mainflatcase} by induction on the depth of the
proof tree, generalizing the constructive proof method for Craig interpolation
often called Maehara's method \cite{takeuti:1987,bpt,smullyan}. To simplify
the presentation we assume that the tuples $\vv z$ and $\vv y$ in the theorem
statement each consist of a single variable $z$ and $y$, respectively. The
generalization of our argument to tuples of variables is straightforward.

To specify conveniently the construction steps of the family of formulas
$\chi_i$ we introduce the following concept: A \defname{pre-defining
  equivalence up to parameters and disjunction} (briefly $\pdepd$) for a
formula~$\lambda(z, \vv l)$ is a formula~$\delta$ built up from formulas of
the form $\forall z\, (\lambda(z,\vv l) \equi \chi(z, \vec p))$ (where the
left side is always the same formula $\lambda(z,\vv l)$ but the right sides
$\chi(z, \vec p)$ may differ) and a finite number of applications of
disjunction and existential quantification upon variables from the
vectors~$\vec p$ of the right sides. The empty disjunction $\bot$ is allowed
as a special case of a $\pdepd$. By rewriting with the equivalence $\exists
v\, (\delta_1 \lor \delta_2) \equiv \exists v\, \delta_1 \lor \exists v\,
\delta_2$, any $\pdepd$ for $\lambda$ can be efficiently transformed into the
form $\bigvee_{i=1}^n \exists \vs_i \forall z\, (\lambda(z, \vv l) \equi
\chi_i(z,\vs_i,\vec r_i))$ for some natural number $n \geq 0$. That is,
although a $\pdepd$ has in general not the syntactic form of the disjunction
of quantified biconditionals in the theorem statement (thus ``pre-''), it is
corresponds to such a disjunction. The more generous syntax will be convenient
in the induction.
The sets of additional variables $\vv l$ and $\vec p$ in the biconditionals
$\lambda(z,\vv l) \equi \chi(z, \vec p)$ can overlap, but the overlap will be
top-level variables that never get quantified. Although we have defined the
notion of $\pdepd$ for a general $\lambda$, in the proof we just consider
$\pdepd$s for the formula $\lambda$ from the theorem statement.

For $\pdepd$s $\delta$ we provide analogs to $\FV$ and $\pred$ that only yield
free variables and predicates of $\delta$ that occur in a right side of its
binconditionals, which helps to express the restrictions by definability
properties that constrains the signature of exactly those right sides. Recall
that we refer of these right sides as subformulas $\chi(z,\vv p)$. For
$\pdepd$~$\delta$, define $\PREDD(\delta)$ as the set of the predicate symbols
that occur in a subformula~$\chi(z,\vv p)$ of $\delta$ and define
$\ADDLVARS(\delta)$ as the set of all variables that occur in a subformula
$\chi(z,\vv p)$ of $\delta$ and are free in $\delta$. In other words,
$\ADDLVARS(\delta)$ is the set of all variables~$p$ in the vectors $\vv p$ of
the subformulas $\chi(z,\vv p)$ that have an occurrence in $\delta$ which is
not in the scope of a quantifier $\exists p$. If, for example $\delta =
\bigvee_{i=1}^n \exists \vs_i \forall z\, (\lambda(z, \vv l) \equi
\chi_i(z,\vs_i,\vec r_i))$, then $\ADDLVARS(\delta)$ is the set of all
variables in the vectors $r_i$, for $1 \leq i \leq n$. 

To build up $\pdepd$s we provide an operation that only affects the right
sides of the biconditionals, a restricted form of existential quantification.
It is for use in interpolant construction to convert a free variable in right
sides that became illegal into an existentially quantified parameter. For
variables $p,y$ define $\delta\substdef{p}{y}$ as $\delta$ after substituting
all occurrences of $p$ that are within a right side $\chi(z,\vec p)$ and are
free in $\delta$ with $y$. Define $\exdef p\, \delta$ as shorthand for
$\exists y\, \delta\substdef{p}{y}$, where $y$ is a fresh variable. Clearly
$\delta \entails \exdef p\, \delta$ and $p$ has no free occurrences in $\exdef
p\, \delta$ that are in any of the right side formulas $\chi(z,\vec p)$, i.e.,
$p \notin \ADDLVARS(\exdef p\, \delta)$. Occurrences of $p$ in the left sides
$\lambda(z,\vv l)$, if $p$ is a member of $\vv l$, are untouched in $\exdef
p\, \delta$. If $p$ is not in $\vv l$, then $\exdef p\, \delta$ reduces to
ordinary existential quantification $\exists y\, \delta[y/p]$.

We introduce the following symbolic shorthand for the parametric definition on
the right side in the theorem's precondition.
  \[
  \begin{array}{lcl@{\hspace{1em}}lcl}
    \cG & \eqdef & \exists y \forall z\, (\lambda(z,\vv l) \equi \rho(z,y, \vv r)).\\
  \end{array}
  \]
  Note that our proof rules are such that if we have a proof (\folfocused or
  not) that our original top-level ``global'' parametric definition is implied
  by some formula, then every one-sided sequent in the proof must include that
  parametric definition in it. This is because the rules that eliminate a
  formula when read ``bottom-up'' cannot apply to that parametric definition,
  whose outermost logic operator is the existential quantifier. Thus, in our
  inductive argument, we can assume that $\cG$ is always present.

  We write
  \[\seq \SL\sep \SR\sep \cG \ipol{\theta}{\D},\]
  where $\seq \SL, \SR, \cG$ is a sequent, partitioned into three components,
  multisets $\SL$ and $\SR$ of formulas and the formula~$\cG$ from the
  theorem's hypothesis, $\theta$ is a formula and $\D$ is a $\pdepd$, to
  express that the following properties hold:
  \begin{enumerate}[{I}1.]
  \item \label{prop-r-theta} $\valid \SR \lor \theta$.
  \item \label{prop-l-theta} $\valid \lnot \theta \lor \SL \lor \D$.
  \item \label{prop-sig-theta} $\pred(\theta) \subseteq (\pred(\SL) \cup \ADDPREDSLAM)
    \cap (\pred(\SR) \cup \ADDPREDSRHO)$.
  \item \label{prop-fv-theta} $\FV(\theta) \subseteq
    (\FV(\SL) \cup \ADDVARSLAM ) \cap (\FV(\SR) \cup \ADDVARSRHO)$.
  \item \label{prop-form-d} $\D$ is a $\pdepd$ for $\lambda(z,\vv l)$.
  \item \label{prop-sig-d} $\PREDD(\D) \subseteq (\pred(\SL) \cup
    \ADDPREDSLAM) \cap (\pred(\SR) \cup \ADDPREDSRHO)$.
  \item \label{prop-fv-d} $\ADDLVARS(\D) \subseteq (\FV(\SL) \cup \ADDVARSLAM) \cap
    (\FV(\SR) \cup \ADDVARSRHO)$.
  \end{enumerate}

  For a given proof with conclusion $\seq \lnot \phi, \lnot \psi, \cG$,
  corresponding to the hypothesis $\phi \land \psi \entails \cG$ of the
  theorem, we show the construction of a formula $\theta$ and $\pdepd$ $\D$
  such that
  \[\seq \lnot \phi\sep \lnot \psi \sep \cG \ipol{\theta}{\D}.\]
  From properties~\ref{prop-r-theta}--\ref{prop-l-theta} and
  \ref{prop-form-d}-- \ref{prop-fv-d} it is then straightforward to read off
  that the formula $\bigvee_{i = 1}^{n} \exists \vv {v_i} \forall \vv z
  (\lambda(\vv z, \vv l) \equi \chi_i(\vv z,{\vv v}_i,{\vv r}_i))$ obtained
  from $\D$ by propagating existential quantifiers inwards is as claimed in
  the theorem's conclusion.

  Formula $\theta$ plays an auxiliary role in the induction. For the overall
  conclusion of the proof it a side result that is like a Craig interpolant
  of~$\psi$ and $\phi \imp D$, but slightly weaker syntactically constrained
  by taking $\lambda$ and $\rho$ into account: $\FV(\theta) \subseteq
  ((\FV(\phi) \cup \vv l) \cap (\FV(\psi) \cup \vv r))$ and $\pred(\theta)
  \subseteq (\pred(\phi) \cup \pred(\lambda)) \cup \pred(\psi) \cup
  \pred(\rho)$.

  As basis of the induction, we have to show constructions of $\theta$ and
  $\D$ such that $\seq \SL\sep \SR\sep \cG \ipol{\theta}{\D}$ holds for
  \rulename{Ax} and \rulename{$\top$}, considering each possibility in which
  the principal formula(s) can be in $\SL$ or $\SR$. For the induction step,
  there are a number of subcases, according to which rule is last applied and
  which of the partitions $\SL$, $\SR$ or $\cG$ contain the principal
  formula(s). We first discuss the most interesting case, the induction step
  where $\cG$ is the principal formula. This case is similar to the most
interesting case in the NRC Parameter Collection Theorem, covered in the body
of the paper.

  \paragraph*{Case where the principal formula is $\cG$.}

   We now give more detail on the most complex case. If the principal formula
   of a conclusion $\seq \SL\sep\SR\sep\cG$ is $\cG$, then the rule that is
   applied rule must be~$\exists$. From the \folfocused property of the proof
   it follows that the derivation tree ending in $\seq \SL\sep\SR\sep\cG$ must
   have the following shape, for some $u \notin \FV(\SL,\SR,\cG)$ and $w \neq
   u$. Note that $u$ could be either a top-level variable from $\lambda$,
   i.e., a member of $\vv l$, or one introduced during the proof.
    \[
    \begin{array}{rccrc}
      \tabrulename{$\lor$} & \seq \SL, \lnot \lambda(u,\vv l), \SR,
      \rho(u,w,\vv r), \cG
      & \hspace*{2em} &
      \tabrulename{$\lor$}
      & \seq \SL, \lambda(u, \vv l), \SR, \lnot \rho(u,w,\vv r) , \cG\\\cmidrule{2-2}\cmidrule{5-5}
      \tabrulename{$\land$} & \seq \SL, \SR, \lambda(u, \vv l) \imp
      \rho(u,w,\vv r), \cG & &
      & \seq \SL, \lambda(u, \vv l), \SR, \rho(u,w,\vv r) \imp \lambda(u, \vv l), \cG\\\cmidrule{2-5}
      \multicolumn{5}{c}{
        \begin{array}{rc}
          \tabrulename{$\forall$} & \seq \SL, \SR, \lambda(u, \vv l) \equi
          \rho(u,w,\vv r), \cG\\\cmidrule{2-2}
          \tabrulename{$\exists$} & \seq \SL, \SR,
          \forall z\, (\lambda(z, \vv l) \equi \rho(z,w,\vv r)), \cG\\\cmidrule{2-2}
          & \seq \SL,\SR,\cG
      \end{array}}
    \end{array}
    \]
The important point is that the two ``leaves'' of the above tree are both sequents 
where we can apply our induction hypothesis.
    Taking into account the partitioning of the sequents at the bottom
    conclusion and the top premises in this figure, we can express the
    induction step in the form of a ``macro'' rule that specifies the
    how we constructed the required $\theta$ and $\D$ for the conclusion,  making use of
the $\theta_1, \D_1$ and $\theta_2, \D_2$ that we get by applying  the induction hypothesis to
each of the two premises.
    \[
    \begin{array}{c@{\hspace{1em}}cl}
      \seq \SL, \lnot \lambda(u, \vv l);\, \SR, \rho(u,w,\vv r);\, \cG \ipolnarrow{\theta_1}{\D_1} &
      \seq \SL, \lambda(u, \vv l);\, \SR, \lnot \rho(u,w,\vv r);\, \cG \ipolnarrow{\theta_2}{\D_2}
      \\\cmidrule{1-2}
      \multicolumn{2}{c}{\seq \SL \sep  \SR\sep \cG \ipol{\theta}{\D},}
    \end{array}
    \]
        where $u$ is as above. The values of $\theta$ and $\D$ -- the new formula and definition
that we are building --  will depend on occurrences of~$w$, and
we give their construction in cases below:
    
    \begin{enumerate}[(i)]
    \item
      \label{eq-case-g-u}
       If $w \notin \FV(\SL) \cup \ADDVARSLAM$ or $w \in \FV(\SR) \cup \ADDVARSRHO$, then
      \[\begin{array}{lcl}
      \theta & \eqdef & \forall u\, (\theta_1 \lor \theta_2).\\
         \D & \eqdef & \exdef u\, \D_1 \lor \exdef u\, \D_2 \lor \forall z\,
         (\lambda(z, \vv l) \equi \theta_2[z/u]).
       \end{array}\]
     \item \label{eq-case-g-wu}
       Else it holds that $w \in \FV(\SL) \cup \ADDVARSLAM$ and $w \notin \FV(\SR) \cup
       \ADDVARSRHO$. Then
       \[\begin{array}{lcl}
       \theta & \eqdef & \forall w \forall u\, (\theta_1 \lor \theta_2).\\
       \D & \eqdef & \exdef w\, (\exdef u\, \D_1 \lor \exdef u\, \D_2 \lor \forall z\,
       (\lambda(z, \vv l) \equi \theta_2[z/u])).
       \end{array}\]
    \end{enumerate}

    We now verify that $\seq \SL \sep \SR\sep \cG \ipol{\theta'}{\D'}$, that
    is, properties~I\ref{prop-r-theta}--I\ref{prop-fv-d}, hold. The proofs for
    the individual properties are presented in tabular form, with explanations
    annotated in the side column, where \name{IH} stands for \name{induction
      hypothesis}. We concentrate on the case~(\ref{eq-case-g-u}) and indicate
    the modifications of the proofs for case~(\ref{eq-case-g-wu}) in remarks,
    where we refer to the values of $\theta$ and $\D$ for that case in terms
    of the values for the case~(\ref{eq-case-g-u}) as $\forall w\, \theta$ and
    $\exdef w\, \D$. In the proofs of the semantic
    properties~I\ref{prop-r-theta} and~I\ref{prop-l-theta} we let sequents
    stand for the disjunction of their members.
    
    \smallskip
\noindent
Property~I\ref{prop-r-theta}:
\prlReset{r-theta}

\smallskip\noindent
$
\begin{arrayprf}
  \prl{r-theta-1} & \valid \SR \lor \rho(u,w,\vv r) \lor \theta_1.
  & \textrm{IH}\\
  \prl{r-theta-2} & \valid \SR \lor \lnot \rho(u,w,\vv r) \lor \theta_2.
  & \textrm{IH}\\
  \prl{r-theta-3} & \valid \SR \lor \theta_1 \lor \theta_2.
  & \textrm{by~\pref{r-theta-1} and~\pref{r-theta-2}}\\
 \prl{r-theta-4} & \valid \SR \lor \forall u\, (\theta_1 \lor \theta_1).
 & \textrm{by~\pref{r-theta-3} since } u \notin \FV(\SR)\\
 \prl{r-theta-5} & \valid \SR \lor \theta.
 & \textrm{by~\pref{r-theta-4} and def. of } \theta\\
\end{arrayprf}
$

\medskip
For case~(\ref{eq-case-g-wu}), it follows from the precondition $w \notin
\FV(\SR)$ and step~\pref{r-theta-5} that $\valid \SR \lor \forall w\, \theta$.

\medskip

\noindent
Property~I\ref{prop-l-theta}:
\prlReset{l-theta}

\smallskip\noindent
$
\begin{arrayprf}
  \prl{l-theta-i1} & \valid \lnot \theta_1 \lor \SL \lor \lnot \lambda(u, \vv l) \lor \D_1.
  & \textrm{IH}\\
  \prl{l-theta-i2} & \valid \lnot \theta_2 \lor \SL \lor \lambda(u, \vv l) \lor \D_2.
  & \textrm{IH}\\
  \prl{l-theta-3} &
  \valid \lnot (\theta_1 \lor \theta_2) \lor \SL \lor \lnot \lambda(u, \vv l) \lor \D_1 \lor
  \theta_2. & \textrm{by \pref{l-theta-i1}}\\
  \prl{l-theta-4} &
  \valid \lnot (\theta_1 \lor \theta_2) \lor \SL \lor \lnot \lambda(u, \vv l) \lor \D_1 \lor
  \D_2 \lor \theta_2. & \textrm{by \pref{l-theta-3}}\\
  \prl{l-theta-5} &
  \valid \lnot \forall u\, (\theta_1 \lor \theta_2) \lor \SL \lor \lnot \lambda(u, \vv l) \lor \D_1 \lor \D_2 \lor \theta_2. & \textrm{by \pref{l-theta-4}}\\
  \prl{l-theta-6} &
\valid \lnot \forall u\, (\theta_1 \lor \theta_2) \lor \lnot \theta_2 \lor \SL \lor \lambda(u, \vv l) \lor \D_1 \lor \D_2. & \textrm{by \pref{l-theta-i2}}\\
  \prl{l-theta-7} &
  \valid \lnot \forall u\, (\theta_1 \lor \theta_2) \lor \SL \lor \D_1 \lor \D_2
  \lor (\lambda(u, \vv l) \equi \theta_2).
  & \textrm{by \pref{l-theta-6} and \pref{l-theta-5}}\\
  \prl{l-theta-8} &
  \valid \lnot \forall u\, (\theta_1 \lor \theta_2) \lor \SL \lor \exdef u\,
  \D_1 \lor \exdef u\, \D_2
  \lor (\lambda(u, \vv l) \equi \theta_2).
  & \textrm{by \pref{l-theta-7}}\\
  \prl{l-theta-9} &
  \valid \lnot \forall u\, (\theta_1 \lor \theta_2) \lor \SL \lor \exdef u\,
  \D_1 \lor \exdef u\, \D_2
  \lor \forall z\, (\lambda(z, \vv l) \equi \theta_2[z/u]). \hspace*{2em}
  & \textrm{by \pref{l-theta-8} since } u \notin \FV(\SL,\lambda(z, \vv l))\\  
  \prl{l-theta-10} &
  \valid \lnot \theta \lor \SL \lor \D. 
  & \textrm{by \pref{l-theta-9} and defs. of } \theta, \D\\
\end{arrayprf}
$

\smallskip
That $u \notin \FV(\lambda(z, \vv l))$ follows from the precondition $u \notin
\FV(\cG)$. It is used in step~\pref{l-theta-9} to justify that the substitution
$[z/u]$ has only to be applied to $\theta_2$ and not to $\lambda(z, \vv l)$ and, in
addition, to justify that $u \notin \FV(\exdef u\, \D_1)$ and $u \notin
\FV(\exdef u\, \D_2)$, which follow from $u \notin \FV(\lambda(z, \vv l))$ and the
induction hypotheses that property~I\ref{prop-form-d} applies to~$\D_1$
and~$\D_2$.

For case~(\ref{eq-case-g-wu}), it follows from step~\pref{l-theta-10} that
$\valid \lnot \forall w\, \theta \lor \SL \lor \exdef w\, \D$.

\pagebreak
\noindent
Property~I\ref{prop-sig-theta}:
\prlReset{sig-theta}

\smallskip\noindent
$
\begin{arrayprf}
  \prl{sig-theta-1} & \pred(\theta_1) \subseteq
  (\pred(\SL,\lnot \lambda(u, \vv l)) \cup \ADDPREDSLAM) \cap
  (\pred(\SR,\rho(u,w,\vv r)) \cup \ADDPREDSRHO). \hspace*{1em} & \textrm{IH}\\
  \prl{sig-theta-2} &
  \pred(\theta_2) \subseteq (\pred(\SL,\lambda(u, \vv l)) \cup \ADDPREDSLAM) \cap
  (\pred(\SR,\lnot \rho(u,w,\vv r)) \cup \ADDPREDSRHO). & \textrm{IH}\\
  \prl{sig-theta-3} &
  \pred(\forall u\, (\theta_1 \lor \theta_1))
  \subseteq (\pred(\SL) \cup \ADDPREDSLAM) \cap (\pred(\SR) \cup \ADDPREDSRHO).
  & \textrm{by \pref{sig-theta-1}, \pref{sig-theta-2}}\\
  \prl{sig-theta-4} &
  \pred(\theta)
    \subseteq (\pred(\SL) \cup \ADDPREDSLAM) \cap (\pred(\SR) \cup \ADDPREDSRHO).
    & \textrm{by \pref{sig-theta-3} and def. of } \theta
\end{arrayprf}
$

\medskip
For case~(\ref{eq-case-g-wu}) the property follows since $\pred(\theta) =
\pred(\forall w\, \theta)$.

\medskip

\noindent
Property~I\ref{prop-fv-theta}:
\prlReset{fv-theta}

\smallskip\noindent
$
\begin{arrayprf}
  \prl{fv-theta-1} & \FV(\theta_1) \subseteq (\FV(\SL, \lnot \lambda(u, \vv l))
  \cup \ADDVARSLAM)
  \cap (\FV(\SR, \rho(u,w,\vv r)) \cup \ADDVARSRHO). \hspace*{2em}& \textrm{IH}\\
  \prl{fv-theta-2}  & \FV(\theta_2) \subseteq
  (\FV(\SL, \lambda(u, \vv l)) \cup \ADDVARSLAM) \cap
  (\FV(\SR, \lnot \rho(u,w,\vv r)) \cup \ADDVARSRHO).& \textrm{IH}\\
  \prl{fv-theta-x1} & \FV(\theta_1) \subseteq (\FV(\SL) \cup \ADDVARSLAM \cup \{u\})
  \cap (\FV(\SR) \cup \ADDVARSRHO \cup \{u,w\}).
  & \textrm{by \pref{fv-theta-1}}\\
  \prl{fv-theta-x2} & \FV(\theta_1) \subseteq (\FV(\SL) \cup \ADDVARSLAM \cup \{u\})
  \cap (\FV(\SR) \cup \ADDVARSRHO \cup \{u,w\}).
  &\textrm{by \pref{fv-theta-2}}\\
  \prl{fv-theta-3} & \FV(\forall u\, (\theta_1 \lor \theta_2))
  \subseteq (\FV(\SL) \cup \ADDVARSLAM) \cap
  (\FV(\SR) \cup \ADDVARSRHO).
  & \textrm{by \pref{fv-theta-x1}, \pref{fv-theta-x2}
      and the precond.
      $w \notin \FV(\SL) \cup \ADDVARSLAM$ or $w \in \FV(\SR) \cup \ADDVARSRHO$}\\
  \prl{fv-theta-4} &
  \FV(\theta)
    \subseteq (\FV(\SL) \cup \ADDVARSLAM) \cap
    (\FV(\SR) \cup \ADDVARSRHO).
    & \textrm{by \pref{fv-theta-3} and def. of } \theta\\
\end{arrayprf}
$

\medskip

For case~(\ref{eq-case-g-wu}) step~\pref{fv-theta-3} has to be replaced by
\[\FV(\forall w \forall u\, (\theta_1 \lor \theta_2)) \subseteq (\FV(\SL) \cup \ADDVARSLAM) \cap
(\FV(\SR) \cup \ADDVARSRHO),\] which follows just from \pref{fv-theta-x1} and
\pref{fv-theta-x2}. Instead of step~\pref{fv-theta-4} we then have
$\FV(\forall w\, \theta) \subseteq (\FV(\SL) \cup \ADDVARSLAM) \cap (\FV(\SR) \cup
\ADDVARSRHO)$.

\medskip
\noindent
Property~I\ref{prop-form-d}: Immediate from the induction hypothesis and the
definition of $\D$.

\medskip\noindent
Property~I\ref{prop-sig-d}:

\prlReset{sig-d}

\smallskip\noindent
$
\begin{arrayprf}
  \prl{sig-d-1} &
  \PREDD(\D_1) \subseteq (\pred(\SL, \lnot \lambda(u, \vv l)) \cup
  \ADDPREDSLAM) \cap (\pred(\SR, \rho(u,w,\vv r)) \cup \ADDPREDSRHO). \hspace*{1em} &  \textrm{IH}\\
  \prl{sig-d-2} &
  \PREDD(\D_2) \subseteq (\pred(\SL, \lambda(u, \vv l)) \cup
  \ADDPREDSLAM) \cap (\pred(\SR, \lnot \rho(u,w,\vv r)) \cup \ADDPREDSRHO). &  \textrm{IH}\\
  \prl{sig-d-3} &
  \pred(\theta_2) \subseteq (\pred(\SL,\lambda(u, \vv l)) \cup \ADDPREDSLAM) \cap
  (\pred(\SR,\lnot \rho(u,w,\vv r)) \cup \ADDPREDSRHO). & \textrm{IH}\\
  \prl{sig-d-4} &
  \PREDD(\exdef u\, \D_1 \lor \exdef u\, \D_2 \lor
  \forall z\, (\lambda(z, \vv l) \equi
  \theta_2[z/u]))\; \subseteq\\
  & (\pred(\SL) \cup \ADDPREDSLAM) \cap (\pred(\SR) \cup \ADDPREDSRHO)
  & \textrm{by \pref{sig-d-1}--\pref{sig-d-3}}\\
  \prl{sig-d-5} &
  \PREDD(\D) \subseteq
  (\pred(\SL) \cup \ADDPREDSLAM) \cap (\pred(\SR) \cup \ADDPREDSRHO)
  & \textrm{by \pref{sig-d-4} and def. of } \D
\end{arrayprf}
$

\medskip
For case~(\ref{eq-case-g-wu}) the property follows since $\PREDD(\D) =
\PREDD(\exdef w\, \D)$.

\medskip

\medskip\noindent
Property~I\ref{prop-fv-d}:

\prlReset{fv-d}

\smallskip\noindent
$
\begin{arrayprf}
  \prl{fv-d-1} &
  \ADDLVARS(\D_1) \subseteq (\FV(\SL, \lnot \lambda(u, \vv l)) \cup
  \ADDVARSLAM) \cap (\FV(\SR, \rho(u,w,\vv r)) \cup \ADDVARSRHO). %
  & \textrm{IH}\\
  \prl{fv-d-2} &
  \ADDLVARS(\D_2) \subseteq (\FV(\SL, \lambda(u, \vv l)) \cup
  \ADDVARSLAM) \cap (\FV(\SR, \lnot \rho(u,w,\vv r)) \cup \ADDVARSRHO).
  & \textrm{IH}\\
  \prl{fv-d-3} &
  \FV(\theta_2) \subseteq
  (\FV(\SL, \lambda(u, \vv l)) \cup \ADDVARSLAM) \cap
  (\FV(\SR, \lnot \rho(u,w,\vv r)) \cup \ADDVARSRHO).
  & \textrm{IH}\\
  \prl{fv-d-1x} &
  \ADDLVARS(\D_1) \subseteq (\FV(\SL) \cup
  \ADDVARSLAM \cup \{u\}) \cap (\FV(\SR) \cup \ADDVARSRHO \cup \{u,w\}).
  & \textrm{by \pref{fv-d-1}}\\
  \prl{fv-d-2x} &
  \ADDLVARS(\D_2) \subseteq (\FV(\SL) \cup
  \ADDVARSLAM \cup \{u\}) \cap (\FV(\SR) \cup \ADDVARSRHO \cup \{u,w\}).
  & \textrm{by \pref{fv-d-2}}\\
  \prl{fv-d-3x} &
  \FV(\theta_2) \subseteq (\FV(\SL) \cup
  \ADDVARSLAM \cup \{u\}) \cap (\FV(\SR) \cup \ADDVARSRHO \cup \{u,w\}).
  & \textrm{by \pref{fv-d-3}}\\
  \prl{fv-d-4} &
  \ADDLVARS(\exdef u\, \D_1 \lor \exdef u\, \D_2 \lor
  \forall z\, (\lambda(z, \vv l) \equi
  \theta_2[z/u]))\; \subseteq &
   \textrm{by \pref{fv-d-1x}, \pref{fv-d-2x}, \pref{fv-d-3x}
      and the precond. $w \notin \FV(\SL) \cup \ADDVARSLAM$ or}\\
  & (\FV(\SL) \cup \ADDVARSLAM) \cap (\FV(\SR) \cup \ADDVARSRHO).
  &  \hspace{2em} \textrm{$w \in \FV(\SR) \cup \ADDVARSRHO$}\\
  \prl{fv-d-5} &
  \ADDLVARS(\D) \subseteq (\FV(\SL) \cup \ADDVARSLAM) \cap (\FV(\SR) \cup \ADDVARSRHO).
  & \textrm{by \pref{fv-d-4} and def. of } \D
\end{arrayprf}
$

\medskip

For case~(\ref{eq-case-g-wu}) step~\pref{fv-d-4} has to be replaced by
\[\begin{array}{l}
\ADDLVARS(\exdef w (\exdef u\, \D_1 \lor \exdef u\, \D_2 \lor
\forall z\, (\lambda(z, \vv l) \equi \theta_2[z/u])))\; \subseteq\\
(\FV(\SL) \cup \ADDVARSLAM) \cap (\FV(\SR) \cup \ADDVARSRHO),
  \end{array}
  \] which follows just from
  \pref{fv-d-1x}, \pref{fv-d-2x}, \pref{fv-d-3x}. Instead of
  step~\pref{fv-d-5} we then have $\ADDLVARS(\exdef w\, \D) \subseteq (\FV(\SL)
  \cup \ADDVARSLAM) \cap (\FV(\SR) \cup \ADDVARSRHO)$.

  \medskip
  
This completes the verification of correctness, and thus ends our discussion of this
case.

We now turn to the base of the induction along with the other inductive cases.

\paragraph*{Cases where the principal formulas are in the
  $\SL$ or $\SR$ partition.}

The inductive cases where the principal formulas are in the $\SL$ or $\SR$
partition can be conveniently specified in the form of rules that lead from
induction hypotheses of the form $\seq \SL\sep \SR\sep \cG \ipol{\theta}{\D}$
as premises to an induction conclusion of the same form. Base cases can there
be taken just as such rules without premises. The axioms and rules shown below
correspond to the those of the calculus, but replicated for each possible way
in which the partitions $\SL$, $\SR$ or $\cG$ of the conclusion may contain the
principal formula(s). To verify that
properties~I\ref{prop-r-theta}--I\ref{prop-fv-d} are preserved by each of the
shown constructions is in general straightforward, such that we only have
annotated a few subtleties that may not be evident.

\begin{enumerate}[(1)]
\item
  $\begin{array}{rcl}
    \tabrulename{\rulename{Ax}} && \hspace{2em} \tabrulecond{\phi \textit{ an atom}}\\
    \cmidrule{2-2} & \seq \SL, \phi, \lnot \phi\sep  \SR\sep \cG
    \ipol{\top}{\bot}\\
  \end{array}$
  \smallskip

\item
  $\begin{array}{rcl}
    \tabrulename{\rulename{Ax}} && \hspace{2em} \tabrulecond{\phi \textit{ an atom}}\\
  \cmidrule{2-2} & \seq \SL\sep  \SR, \phi, \lnot
  \phi\sep \cG \ipol{\bot}{\bot}
\end{array}$
  \smallskip
  
\item
  $\begin{array}{rcl}
    \tabrulename{\rulename{Ax}} && \hspace{2em} \tabrulecond{\phi \textit{ a literal}}\\
   \cmidrule{2-2} & \seq \SL, \phi\sep  \SR, \lnot \phi\sep \cG
   \ipol{\phi}{\bot}
\end{array}$
  \smallskip

\item
  $\begin{array}{rc}
    \tabrulename{$\top$}\\    
    \cmidrule{2-2} & \seq \SL, \top\sep  \SR\sep \cG
    \ipol{\top}{\bot}
\end{array}$
  \smallskip

\item  
  $\begin{array}{rc}
    \tabrulename{$\top$}\\    
    \cmidrule{2-2} & \seq \SL\sep  \SR, \top\sep \cG
    \ipol{\bot}{\bot}
\end{array}$
  \bigskip

\item
  $\begin{array}[t]{rc}
    \tabrulename{$\lor$} &
    \seq \SL, \phi_1, \phi_2\sep  \SR\sep \cG \ipol{\theta}{\D}\\
    \cmidrule{2-2} & \seq \SL, \phi_1 \lor \phi_2\sep  \SR\sep \cG
  \ipol{\theta}{\D}
\end{array}$
  \bigskip

\item
  $\begin{array}[t]{rc}
    \tabrulename{$\lor$} &
    \seq \SL\sep  \SR, \phi_1, \phi_2 \sep \cG \ipol{\theta}{\D}\\
    \cmidrule{2-2} & \seq \SL\sep  \SR, \phi_1 \lor \phi_2\sep \cG
    \ipol{\theta}{\D}
\end{array}$
  \bigskip

\item
  \label{rule-and-1}
  $\begin{array}[t]{rc@{\hspace{2em}}c}
    \tabrulename{$\land$} &
    \seq \SL, \phi_1 \sep  \SR\sep \cG \ipol{\theta_1}{\D_1}
  & \seq \SL, \phi_2\sep  \SR\sep \cG \ipol{\theta_2}{\D_2}\\\cmidrule{2-3} &
  \multicolumn{2}{c}{\seq \SL, \phi_1 \land \phi_2\sep  \SR\sep \cG
    \ipol{\theta_1 \land \theta_2}{\D_1 \lor \D_2}}
\end{array}$
  \bigskip

\item
  $\begin{array}[t]{rc@{\hspace{2em}}c}
    \tabrulename{$\land$} &    
    \seq \SL \sep  \SR, \phi_1\sep \cG \ipol{\theta_1}{\D_1}
  & \seq \SL \sep \SR, \phi_2\sep \cG \ipol{\theta_2}{\D_2}\\\cmidrule{2-3} &
  \multicolumn{2}{c}{\seq \SL\sep  \SR, \phi_1 \land \phi_2 \sep \cG
 \ipol{\theta_1 \lor \theta_2}{\D_1 \lor \D_2}}
\end{array}$
  \bigskip

\item  \label{rule-exists-1}
  $\begin{array}[t]{rc}
    \tabrulename{$\exists$} &    
    \seq \SL, \phi\subst{x}{t}, \exists x\, \phi\sep  \SR\sep \cG
  \ipol{\theta}{\D}\\\cmidrule{2-2} &
  \seq \SL, \exists x\, \phi\sep  \SR\sep \cG \ipol{\theta'}{\D'},
\end{array}$ \par
  where the values of $\theta'$ and $\D'$ depend on occurrences of~$t$:
  \begin{itemize}
  \item If $t \in \FV(\SL,\exists x\, \phi) \cup \ADDVARSLAM$, then $\theta' \eqdef
    \theta$ and $\D' \eqdef \D$.
  \item Else it holds that $t \notin \FV(\SL,\exists x\, \phi) \cup \ADDVARSLAM$.
    Then $\theta' \eqdef \exists t\, \theta$ and $\D' \eqdef \exdef t\, \D$.
  \end{itemize}
  \bigskip

\item \label{eq-case-ex-r}
  $\begin{array}[t]{rc}
    \tabrulename{$\exists$} &    
    \seq \SL\sep  \SR, \phi\subst{x}{t}, \exists x\, \phi\sep \cG \ipol{\theta}{\D}\\
  \cmidrule{2-2} & \seq \SL\sep \SR, \exists x\, \phi \sep \cG
  \ipol{\theta'}{\D'},
\end{array}$ \par
  where the values of $\theta'$ and $\D'$ depend on occurrences of~$t$:
  \begin{itemize}
  \item If $t \in \FV(\SR,\exists x\, \phi) \cup \ADDVARSRHO$, then
    $\theta' \eqdef \theta$ and $\D' \eqdef \D$.
  \item Else it holds that $t \notin \FV(\SR,\exists x\, \phi) \cup \ADDVARSRHO$. Then
    $\theta' \eqdef \forall t\, \theta$ and $\D' \eqdef \exdef t\, \D$.
  \end{itemize}
  \bigskip

\item
  $\begin{array}[t]{rcl}
    \tabrulename{$\forall$} &    
    \seq \SL, \phi[y/x]\sep  \SR\sep \cG \ipol{\theta}{\D}
  & \hspace{2em} \tabrulecond{y \notin \FV(\SL,\forall x\, \phi,\SR,\cG)}\\
    \cmidrule{2-2} & \seq \SL, \forall x\, \phi\sep  \SR\sep \cG
    \ipol{\theta}{\D}
\end{array}$
  \bigskip
  
\item
  $\begin{array}[t]{rcl}
    \tabrulename{$\forall$} &    
    \seq \SL\sep  \SR,\phi[y/x]\sep \cG \ipol{\theta}{\D}
  & \hspace{2em} \tabrulecond{y \notin \FV(\SL,\SR,\forall x\, \phi,\cG)}\\
  \cmidrule{2-2} & \seq \SL\sep  \SR, \forall x\, \phi\sep \cG
 \ipol{\theta}{\D}
\end{array}$
  \bigskip

\item \label{eq-rule-ref-l}
  $\begin{array}[t]{rcl}
    \tabrulename{\rulename{Ref}} &    
    \seq t \neq t, \SL\sep \SR\sep \cG \ipol{\theta}{\D}
    & \hspace{2em} \tabrulecond{t \in \FV(\SL) \cup \ADDVARSLAM}\\
    \cmidrule{2-2} &
    \seq \SL\sep \SR\sep \cG \ipol{\theta}{\D}
\end{array}$
  \bigskip
  
\item \label{eq-rule-ref-r}
  $\begin{array}[t]{rcl}
    \tabrulename{\rulename{Ref}} &    
    \seq \SL\sep t \neq t, \SR\sep \cG \ipol{\theta}{\D}
    & \hspace{2em} \tabrulecond{t \notin \FV(\SL) \cup \ADDVARSLAM}\\    
    \cmidrule{2-2} &
    \seq \SL\sep \SR\sep \cG \ipol{\theta}{\D}
  \end{array}$
  \bigskip
  
\item
  $\begin{array}[t]{rcl}
    \tabrulename{\rulename{Repl}} &    
    \seq t \neq u, \phi[u/x], \phi[t/x], \SL\sep \SR\sep \cG \ipol{\theta}{\D}
    & \hspace{2em} \tabrulecond{\phi \textit{ a negative literal}}
    \\\cmidrule{2-2} &
    \seq t \neq u, \phi[t/x], \SL\sep \SR\sep \cG \ipol{\theta}{\D}
\end{array}$
  \bigskip

\item
  $\begin{array}[t]{rcl}
  \tabrulename{\rulename{Repl}} &    
  \seq t \neq u, \SL\sep \phi[u/x], \phi[t/x], \SR\sep \cG \ipol{\theta}{\D}
  & \hspace{2em} \tabrulecond{\phi \textit{ a negative literal}}
  \\\cmidrule{2-2} &
  \seq t \neq u, \SL\sep \phi[t/x], \SR\sep \cG \ipol{\theta'}{\D'},
\end{array}$ \par
  where the values of $\theta'$ and $\D'$ depend on occurrences of $t$ and $u$:
  \begin{itemize}
  \item If $t \notin \FV(\phi[t/x], \SR) \cup \ADDVARSRHO$, then $\theta' \eqdef
    \theta$ and $\D' \eqdef \D$. In this subcase the precondition $t
    \notin \FV(\phi[t/x])$ implies that $x \notin \FV(\phi)$ and thus
    $\phi[u/x] = \phi[t/x]$.
  \item If $t,u \in \FV(\phi[t/x], \SR) \cup \ADDVARSRHO$, then $\theta' \eqdef
    \theta \lor t \neq u$ and $\D' \eqdef \D$.
  \item Else it holds that $t \in \FV(\phi[t/x], \SR) \cup \ADDVARSRHO$ and $u
    \notin \FV(\phi[t/x], \SR) \cup \ADDVARSRHO$. Then $\theta' \eqdef \theta[t/u]$
    and $\D' \eqdef \D\substdef{u}{t}$. For this subcase, to derive
    property~I\ref{prop-r-theta} it is used that the precondition $u \notin
    \phi[t/x]$ implies that $\phi[u/x][t/u] = \phi[t/x]$.
  \end{itemize}
  \medskip
  
\item
  $\begin{array}[t]{rcl}
    \tabrulename{\rulename{Repl}} &    
    \seq \SL\sep t \neq u, \phi[u/x], \phi[t/x], \SR\sep \cG \ipol{\theta}{\D}
    & \hspace{2em} \tabrulecond{\phi \textit{ a negative literal}}
    \\\cmidrule{2-2} &
    \seq \SL\sep t \neq u, \phi[t/x], \SR\sep \cG \ipol{\theta}{\D}
  \end{array}$
  \bigskip

\item
  $\begin{array}[t]{rcl}
    \tabrulename{\rulename{Repl}} &    
    \seq \phi[u/x], \phi[t/x], \SL\sep t \neq u; \SR\sep \cG \ipol{\theta}{\D}
    & \hspace{2em} \tabrulecond{\phi \textit{ a negative literal}}
    \\\cmidrule{2-2} &
    \seq \phi[t/x], \SL\sep t \neq u, \SR\sep \cG \ipol{\theta'}{\D'},
  \end{array}$ \par
  where the values of $\theta'$ and $\D'$ depend on occurrences of $t$ and $u$:
  \begin{itemize}
  \item If $t \notin \FV(\phi[t/x], \SL) \cup \ADDVARSLAM$, then $\theta' \eqdef
    \theta$ and $\D' \eqdef \D$. In this subcase the precondition $t
    \notin \FV(\phi[t/x])$ implies that $x \notin \FV(\phi)$ and thus
    $\phi[u/x] = \phi[t/x]$.
  \item If $t,u \in \FV(\phi[t/x], \SL) \cup \ADDVARSLAM$, then $\theta' \eqdef
    \theta \land t = u$ and $\D' \eqdef \D$.
  \item Else it holds that $t \in \FV(\phi[t/x], \SL) \cup \ADDVARSLAM$ and $u
    \notin \FV(\phi[t/x], \SL) \cup \ADDVARSLAM$. Then $\theta' \eqdef \theta[t/u]$
    and $\D' \eqdef \D\substdef{u}{t}$. To derive property~I\ref{prop-l-theta}
    for this subcase, that is, $\valid \lnot \theta[t/u] \lor \phi[t/x], \SL
    \lor \D\substdef{u}{t}$, it is required that $u \notin
    \FV(\D\substdef{u}{t})$, which follows from the precondition $u \notin
    \ADDVARSLAM$ of the subcase.
  \end{itemize}
\end{enumerate}

This completes the proof of Theorem~\ref{thm:mainflatcase}.

\section{From unrestricted proofs to focused proofs}

In the body of the paper we mentioned that our focused proof system is complete for semantic entailment
$\models$.
We explained that this can be argued directly via a  Henkin construction, or by translating the rules for the higher-level
system, shown in  Figure \ref{fig:dzcalc-2sided}
 into the focused system of Figure \ref{fig:dzcalc-1sided}.
Then completeness of the latter system follows from  completeness of the former, which is completely standard.
We now discuss the approach of translating general proofs to focused proofs in more detail.
We explain it  for the first-order proof system outlined in the prior appendix section,
Section \ref{sec:parambethfo}, where the high-level idea can be conveyed independently  of
the additional syntactic restrictions that come from $\deltazero$ formulas.  A similar approach can be applied
in the $\deltazero$ context.

Recall that in Section \ref{sec:parambethfo}
we gave a standard first-order proof system --- see
Figure~\ref{fig:rhflk}. We then 
introduced the first-order version of focused proofs.
A proof is called focused if no application of \rulename{Ax}, $\top$,
$\exists$, \rulename{Ref}, \rulename{Repl} contains in its conclusion a
formula whose top-level connective is $\lor$, $\land$ or $\forall$.

We now make precise the translation between these:

\begin{theorem} \label{thm:focused}
Any proof tree (in the calculus presented in Fig.~\ref{fig:rhflk}) can be
turned into a focused proof tree, in exponential time.
\end{theorem}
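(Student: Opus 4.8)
The plan is to reorganize every derivation in the calculus of Figure~\ref{fig:rhflk} into alternating \emph{phases}: a phase that eagerly decomposes the \emph{asynchronous} connectives $\lor$, $\land$, $\forall$, followed by a single application of one of the remaining rules (\rulename{Ax}, $\top$, $\exists$, \rulename{Ref}, \rulename{Repl}). Call a sequent \emph{saturated} if none of its formulas has top-level connective $\lor$, $\land$, or $\forall$. The key structural observation is that, since the rules $\lor$, $\land$, $\forall$ each require a principal formula of the corresponding shape, \emph{no} derivation of a saturated sequent can end in one of these three rules; hence the lowest rule of any derivation of a saturated sequent is automatically one of \rulename{Ax}, $\top$, $\exists$, \rulename{Ref}, \rulename{Repl}, and applying it there respects the \folfocused restriction. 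Thus focusing reduces to being able to push the $\lor$, $\land$, $\forall$ decompositions to the bottom of the tree, which is exactly \emph{invertibility} of these rules.

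First I would establish the three invertibility lemmas, each in height-preserving form: from a derivation $\pi$ of $\seq \Gamma, \phi_1 \lor \phi_2$ one computes a derivation of $\seq \Gamma, \phi_1, \phi_2$ of no greater height; from a derivation of $\seq \Gamma, \phi_1 \land \phi_2$ one computes derivations of $\seq \Gamma, \phi_1$ and of $\seq \Gamma, \phi_2$, again of no greater height; and from a derivation of $\seq \Gamma, \forall x\,\phi$ one computes, for fresh $y$, a derivation of $\seq \Gamma, \phi[y/x]$ of no greater height. These are the standard height-preserving inversions for G3-style sequent calculi, and since the system is the one of \cite{bpt, negri:plato:structural:2001} extended only by \rulename{Ref} and \rulename{Repl}, they follow from the admissibility of weakening, contraction, and substitution proved there; the only extra checks are that the asynchronous inversions permute past \rulename{Ref} and \rulename{Repl}, which is routine since those rules act on (in)equalities and negative literals.

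With invertibility in hand I would define the transformation $F$ by well-founded recursion on the pair $(\mathrm{ht}(\pi), c(\Gamma))$, where $\mathrm{ht}(\pi)$ is the height of the input derivation and $c(\Gamma)$ is the number of logical connectives in its end-sequent $\seq \Gamma$, ordered lexicographically. If $\Gamma$ is not saturated, pick a formula with a top-level $\lor$, $\land$, or $\forall$; apply the corresponding height-preserving inversion to obtain a derivation of the premise(s), which have strictly smaller $c$ and no larger $\mathrm{ht}$, recurse with $F$, and cap the result with the matching $\lor$, $\land$, or $\forall$ rule, an unrestricted rule that incurs no focusing constraint. If $\Gamma$ is saturated, then by the observation above the lowest rule $R$ of $\pi$ is already one of the five positive rules and its conclusion $\seq \Gamma$ is saturated; I keep $R$ and recurse with $F$ on its immediate subderivations, which have strictly smaller height. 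In both cases the measure strictly decreases, so the recursion terminates; and by construction every application of \rulename{Ax}, $\top$, $\exists$, \rulename{Ref}, \rulename{Repl} in the output has a saturated conclusion, so the output is \folfocused.

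Finally I would bound the cost. Each inversion is computable in polynomial time, so the work at a single recursion node is polynomial; the source of blow-up is the $\land$ case, where the call splits into two independent recursive calls while only $c(\Gamma)$ decreases, so a saturation phase on a sequent with $k$ conjunctions can spawn up to $2^{k}$ branches, and iterating this across the $\exists$-driven layers of the proof yields an output of size exponential in $|\pi|$ and a running time of the same order. The main obstacle I anticipate is the first step: verifying that the three inversions can be taken \emph{height-preserving} in the presence of the equality rules \rulename{Ref}, \rulename{Repl} and of the $\exists$ rule that retains its principal formula. It is precisely height non-increase that keeps the lexicographic measure well-founded during the saturation phase, so any slack there (e.g.\ an inversion that increases height by a constant) would have to be absorbed by a more delicate termination argument.
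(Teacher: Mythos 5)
Your proposal is correct and is in essence the same argument as the paper's: both proofs push the $\lor$, $\land$, $\forall$ inferences below the positive rules (\rulename{Ax}, $\top$, $\exists$, \rulename{Ref}, \rulename{Repl}), with the $\land$ case duplicating the subproof and accounting for the exponential blow-up. The paper phrases this as iterated local elimination of ``counterexample'' nodes, and its conversions (e.g.\ building $P'$ and $P''$ by deleting the $\land$-applications on $\psi_1 \land \psi_2$ and projecting to $\psi_i$) are exactly inlined versions of your height-preserving inversion lemmas; your packaging via saturation phases and a lexicographic measure makes the termination argument more explicit than the paper's, but the underlying permutation steps are identical.
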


We sketch a proof of this theorem.

A counterexample to the focused property is a node in the proof tree where
one of the rules \rulename{Ax}, $\top$, $\exists$, \rulename{Ref} or
\rulename{Repl} is applied and the conclusion contains a formula with $\lor$,
$\land$ or $\forall$ as top-level operator. We eliminate the counterexample by
converting the proof of the conclusion depending on the rule and the top-level
operator. We iterate this until there is no counterexample. The steps
involving connectives other than $\land$ increase the tree size only linearly,
while converting a $\wedge$ step may double the size. The exponential time
bound follows from this. In the following, we show the particular conversion
for some chosen cases of counterexamples. For the remaining cases they are
analogous. First, we consider the case where $\land$ appears as top-level
operator in the conclusion of an application of $\exists$. The proof of the
conclusion then has the form shown on the left below and is converted to the
form shown on the right.
\[\small
\begin{array}{rc}
  & P\\[-1ex]
  & \prooftreesymbol\\
  \tabrulename{$\exists$} &
  \seq \Gamma, \psi_1 \land \psi_2, \phi[t/x], \exists x\, \phi\\
  \cmidrule{2-2} &
  \seq \Gamma, \psi_1 \land \psi_2, \exists x\, \phi
\end{array}
\;\textrm{ is converted to }\;
\begin{array}{rcrc}
  & P' & & P''\\[-1ex]
  & \prooftreesymbol && \prooftreesymbol\\
  \tabrulename{$\exists$} &
  \seq \Gamma, \psi_1, \phi[t/x], \exists x\, \phi &
  \hspace{1em}\tabrulename{$\exists$}
  & \seq \Gamma, \psi_2, \phi[t/x], \exists x\, \phi\\
  \cmidrule{2-2}\cmidrule{4-4}
  \tabrulename{$\land$} &
  \seq \Gamma, \psi_1, \exists x\, \phi &&
  \seq \Gamma, \psi_2, \exists x\, \phi\\
  \cmidrule{2-4}
  & \multicolumn{3}{c}{\seq \Gamma, \psi_1 \land \psi_2, \exists x\, \phi},\\
\end{array}
\]
where proof~$P'$ is like proof~$P$ except that all applications of $\land$
where the principal formula is $\psi_1 \land \psi_2$ and corresponds (i.e., is
passed down) to the shown occurrence of $\psi_1 \land \psi_2$ are removed and
all occurrences of $\psi_1 \land \psi_2$ that correspond to the shown
occurrence are replaced by $\psi_1$. Proof~$P''$ is defined like $P'$, except
that the occurrences of $\psi_1 \land \psi_2$ that correspond to the shown
occurrence are replaced by $\psi_2$ instead of $\psi_1$.
In the case where $\forall$ appears as top-level operator in the conclusion of
an application of $\exists$, the conversion is as follows.
\[\small
\begin{array}{rc}
  & P\\[-1ex]
  & \prooftreesymbol\\
  \tabrulename{$\exists$} &
  \seq \Gamma, \forall u\, \psi, \phi[t/x], \exists x\, \phi\\
  \cmidrule{2-2} &
  \seq \Gamma, \forall u\, \psi, \exists x\, \phi
\end{array}
\;\;\;\textrm{ is converted to }\;\;\;
\begin{array}{rc}
  & P'\\[-1ex]
  & \prooftreesymbol\\
  \tabrulename{$\exists$} &
  \seq \Gamma, \psi[y/u], \phi[t/x], \exists x\, \phi\\
  \cmidrule{2-2}
  \tabrulename{$\forall$} &
  \seq \Gamma, \psi[y/u], \exists x\, \phi\\
  \cmidrule{2-2} &
  \seq \Gamma, \forall u\, \psi, \exists x\, \phi\\
\end{array}
\]
where $y$ is a fresh variable and proof~$P'$ is like proof~$P$ except that all
applications of $\forall$ where the principal formula is $\forall u\, \psi$
and corresponds to the shown occurrence of $\forall u\, \psi$ are removed and
all occurrences of $\forall u\, \psi$ that correspond to the shown occurrence
are replaced by $\psi[y/u]$. Finally, we show the case where $\lor$ appears as
top-level operator in the conclusion of an application of \rulename{Ax}.
\vspace{-2ex}
\[\small
\begin{array}{rc}
  \tabrulename{\rulename{Ax}}\\
  \cmidrule{2-2} &
  \seq \Gamma, \psi_1 \lor \psi_2, \phi, \lnot \phi
\end{array}
\;\;\;\textrm{ is converted to }\;\;\;
\begin{array}{rc}
  \tabrulename{\rulename{Ax}}\\
  \cmidrule{2-2}
  \tabrulename{$\lor$}
  &
  \seq \Gamma, \psi_1, \psi_2, \phi, \lnot \phi\\
  \cmidrule{2-2}
  &
  \seq \Gamma, \psi_1 \lor \psi_2, \phi, \lnot \phi
\end{array}
\]

One can observe that the translation step increases the size of a proof by a factor of at most $2$. The exponential
bound follows from this.

\end{document}